\documentclass[a4paper,10pt]{article}

\usepackage[utf8]{inputenc}
\usepackage[english]{babel}
\usepackage[babel]{csquotes}
\usepackage{amsthm, amsmath, amssymb,amsfonts,latexsym}
\usepackage{mathtools} 
\usepackage{enumerate}
\usepackage{tikz}
\usetikzlibrary{automata,positioning}
\usepackage{fullpage}
\usepackage{cases} 
\usepackage{listliketab}
\usepackage{bbm}
\usepackage{varioref}

\numberwithin{equation}{section}

\usepackage{array}					
\newcolumntype{C}{>{\centering\arraybackslash$}p{1cm}<{$}}
\newcolumntype{D}{>{\centering\arraybackslash$}p{1.7cm}<{$}}

\RequirePackage{natbib}
\RequirePackage[colorlinks,citecolor=blue,urlcolor=blue]{hyperref}
\bibliographystyle{unsrtnat}

\binoppenalty=10000
\relpenalty=10000

\newcommand{\pr}[1]{\left(#1\right)}
\newcommand{\prbig}[1]{\Big(#1\Big)}
\newcommand{\prq}[1]{\left[#1\right]}
\newcommand{\norm}[1]{\left\|#1\right\|}
\newcommand{\prg}[1]{\left\{#1\right\}}
\newcommand{\abs}[1]{\left|#1\right|}
\newcommand{\Sp}{\mathcal{X}}
\newcommand{\C}{\mathcal{C}}
\newcommand{\R}{\mathcal{R}}
\newcommand{\K}{\mathcal{K}}
\newcommand{\is}{H}
\newcommand{\In}{\mathcal{V}}
\newcommand{\Ini}{\mathcal{V}_i}
\newcommand{\ii}{V}
\newcommand{\iii}{V_i}
\newcommand{\SR}{\mathcal{U}}
\newcommand{\sr}{U}
\newcommand{\FP}{\mathcal{W}}
\newcommand{\fp}{W}
\newcommand{\La}{\mathcal{L}}
\newcommand{\NN}{\mathbb{N}}
\newcommand{\RR}{\mathbb{R}}
\newcommand{\diag}{\mathrm{diag}}
\newcommand{\prcon}[2]{\pr{#1\left|#2\right.}}
\newcommand{\prconf}[2]{\pr{\left.#1\right|#2}}
\newcommand{\Econ}[2]{E\prq{#1\left|#2\right.}}
\newcommand{\Econf}[2]{E\prq{\left.#1\right|#2}}
\newcommand{\PP}{\mathbb{P}}
\newcommand{\out}[1]{d_{#1}}
\newcommand{\outx}[1]{d^x_{#1}}
\newcommand{\outN}[1]{d^N_{#1}}
\newcommand{\outxN}[1]{d^{N,x}_{#1}}
\newcommand{\outaxN}[1]{d^{N,N^\alpha x}_{#1}}
\newcommand{\df}{=}

\title{Elimination of Intermediate Species in Multiscale Stochastic Reaction Networks}
\author{Daniele Cappelletti\footnotemark[1] \and Carsten Wiuf\footnotemark[1]}
\date{}

\newtheorem{thm}{Theorem}[section]
\newtheorem{cor}[thm]{Corollary}
\newtheorem{lem}[thm]{Lemma}
\newtheorem{prop}[thm]{Proposition}

\theoremstyle{remark}
\newtheorem{rem}{Remark}[section]
\newtheorem{ex}{Example}[section]

\theoremstyle{definition}
\newtheorem{defi}{Definition}[section]
\newtheorem{ass}{Assumption}

\begin{document}

 \footnotetext[1]{Department of Mathematical Sciences, University of Copenhagen, Copenhagen, Denmark. The authors are supported by funding from the Carlsberg Foundation, the Lundbeck Foundation, the Danish Cancer Society the Danish Research Councils.}

 \tikzset{every node/.style={auto}}
 \tikzset{every state/.style={rectangle, minimum size=0pt, draw=none, font=\normalsize}}
 \tikzset{bend angle=20}

 \maketitle
 
 \begin{abstract}
 We study networks of biochemical reactions modelled by continuous-time Markov processes. Such networks typically contain many molecular species and reactions and are hard to study analytically as well as by simulation. Particularly, we are interested in reaction networks with intermediate species such as the substrate-enzyme complex in the \mbox{Michaelis-Menten} mechanism. Such species are virtually in all real-world networks, they are typically short-lived, degraded at a fast rate and hard to observe experimentally. 
 
 We provide conditions under which the Markov process of a multiscale reaction network with intermediate species is approximated by the Markov process of a simpler reduced reaction network without intermediate species. We do so by embedding the Markov processes into a one-parameter family of processes, where reaction rates and species abundances are scaled in the parameter. Further, we show that there are close links between these stochastic models and deterministic  ODE models of the same networks.
 \end{abstract}

  \section{Introduction}
 Reliable mathematical models of biochemical reaction networks are of great interest for the analysis of experimental data and theoretical biochemistry. Such models can provide qualitative information on biochemical systems as well as provide means to simulate networks and to estimate unknown parameters.    The classical stochastic model of a reaction network is a continuous-time Markov process, where the states are configurations of species numbers  and the transitions are changes caused by reactions. We refer to this Markov process as a \emph{stochastic reaction network (SRN)}.
 Unfortunately the set of reactions and chemical species is often very large, and the related Markov process is too complicated to be studied analytically or by  modern computers. Thus, the necessity of simplifying the full model arises. Perhaps the first result in this direction is due to \cite{kurtz:classical_scaling}, where a deterministic weak limit for stochastic reaction networks  is obtained \citep[see also][]{kurtz:strong}. More recently, in \cite{ball:asymptotic,kurtz:rescale,popovich:rescale}, similar asymptotic results have been obtained under more general scaling conditions than those applied in \cite{kurtz:classical_scaling, kurtz:strong}. Here the limit might have stochastic as well as deterministic components, and the limit network might consist of simplified reactions with fewer species.  In this context the concept of \emph{model reduction} arises naturally.
 
  A famous and well studied example of a biochemical system  is the Michaelis-Menten mechanism for enzyme kinetics \citep{cornish:enz-kinetics,kurtz:rescale,hardin:simplified_enzyme_kinetics,grima:ssLNA,rao:stochastic}. It is described by the reactions
 \begin{center}
  \begin{tikzpicture}
   \node[state] (E+R) at (1,0) {$E+R$};
   \node[state] (\is) at (3,0) {$\is$};
   \node[state] (E+P) at (5,0) {$E+P$};
   \node[state] (R) at (1.4,0) {$\phantom{R}$};
   \path[->] 
    (R) edge[bend left] node{} (\is)
    (\is)   edge[bend left] node{} (R)
    (\is)   edge            node{} (E+P);
  \end{tikzpicture}
 \end{center}
 where $E$ denotes an enzyme, $R$ a reacting substrate and $P$ a product. $\is$ is an intermediate, or transient, species formed by $E$ and $R$, and it is usually unstable. Whenever a reaction occurs, say $E+R\to \is$, then the number of molecules changes accordingly, that is, the numbers of $E$ and $R$ molecules are each reduced by one, while the number of $\is$ molecules is increased by one.
 
 If we assume that at least one of the reactions $\is\rightarrow E+R$ and $\is\rightarrow E+P$ is so fast that a produced molecule of $H$ is quickly degraded before any other reaction takes place (that is, at any time at most one molecule of $H$ is  present), then it seems reasonable that the Markov process could be approximated by a simpler Markov process, corresponding to the \emph{reduced} reaction network
  \begin{center}
  \begin{tikzpicture} 
   \node[state] (E+R) at (0,0) {$E+R$}; 
   \node[state] (E+P) at (2,0) {$E+P$};
   \path[->] 
    (E+R) edge node {} (E+P);
  \end{tikzpicture}
 \end{center}
where the reaction rate is determined from the original reaction rates. Intuitively, the rate is the rate of $E+R\to H$ multiplied by the probability that the reaction $\is\to E+P$ occurs instead of $\is\to E+R$. Under this reduction the number of enzyme molecules $E$ becomes constant. In essence, we are here dealing with \emph{time-scale separation}, in addition to \emph{species elimination} and \emph{dimensionality reduction} (both in terms of the number of reactions as well as the number of species). 

Another, perhaps more interesting  example, is the following reaction network:
 \begin{equation}\label{eq:example_without_rates_big}
  \begin{tikzpicture}[baseline={(current bounding box.center)}]
   \node[state] (E+R)  at (1,0)    {$E+R$};
   \node[state] (\is1)   at (3,1.5)  {$\is_1$};
   \node[state] (\is2)   at (3,-1)   {$\is_2$};
   \node[state] (\is3)   at (5,0)    {$\is_3$};
   \node[state] (E+P1) at (7,1.5)  {$E+P_1$};
   \node[state] (E+P2) at (7,-1)   {$E+P_2$};
   \path[->]
    (E+R)  edge[bend left] node {} (\is1)
           edge            node {} (\is2)
    (\is1)   edge[bend left] node {} (E+R)
           edge            node {} (E+P1)
           edge[bend left] node {} (\is3)
    (\is2)   edge            node {} (E+P2)
           edge            node {} (\is1)
    (\is3)   edge[bend left] node {} (\is1)
           edge            node {} (\is2)
    (E+P2) edge[bend left=50] node {} (E+R);
  \end{tikzpicture}
 \end{equation}
 It describes the catalytic transformation of a species $R$ into the species  $P_1$ or $P_2$, through a chain of intermediate steps, denoted by the species $\is_1$, $\is_2$ and $\is_3$. Whenever the reaction $E+R\to\is_1$ occurs, a sequence of reactions between intermediate species will take place (for example, $\is_1\to\is_3\to\is_1$) before a final complex is produced, such as $E+P_1$. If the time spent in intermediate states is small, we might approximate the reaction paths proceeding through the formation and quick degradation of intermediate species by direct reactions. In other words, it is reasonable to contract reaction paths passing through any intermediate species to  obtain
  \begin{equation*}
  \begin{tikzpicture}[baseline={(current bounding box.center)}]
   \node[state] (E+R)   at (1,0)  {$E+R$};
   \node[state] (E+P1)  at (4,1)  {$E+P_1$};
   \node[state] (E+P2)  at (4,-1) {$E+P_2$};
   \path[->]
    (E+R)  edge            node{} (E+P1)
           edge[bend left] node{} (E+P2)
    (E+P2) edge[bend left] node{} (E+R);
  \end{tikzpicture}
 \end{equation*}
for a suitable choice of reaction rates.  Note that there is an infinite number of such reaction paths. We will provide conditions that guarantee that the original SRN can be well approximated, in a certain sense, by the reduced SRN, or more accurately, that the Markov process describing the original system  is well approximated by the Markov process of the reduced system.

  For this aim, we introduce a family of  \emph{kinetics} (reaction rates) indexed by a parameter $N$ and study the  relationship between the original and the reduced SRNs as $N\to\infty$.  The analysis builds on the previous work \cite{feliu:intermediates} \citep[see also][]{feliu:elimination,feliu:elimination_bio}, as well as on \cite{ball:asymptotic,kurtz:rescale,popovich:rescale}.  In \cite{feliu:intermediates}, a mathematical framework is developed for the elimination of intermediate species in deterministically modelled reaction networks, using ODEs.  Properties of the steady states in the original ODE system are related to similar properties of the steady states in the reduced ODE system by means of a formal relationship between the original and the reduced network. Here we are not concerned about the steady states  nor about the equilibrium distributions of SRNs, but about the trajectories of SRNs up to a finite fixed time $T>0$. Our aim is to approximate the dynamics of the original system with intermediate species by means of the dynamics of a simplified model, where intermediate species are eliminated. Though we arrive at our  reduced model through a different route than \cite{feliu:intermediates}, we will show that there are close links to ODE models  and that our reduced network in fact is that of \cite{feliu:intermediates}. 
 
  We will study different types of convergence of  stochastic processes associated with  SRNs as $N\to\infty$. The limit is taken assuming that the consumption rates (at least some of them) of the intermediate species approach infinity according to $N$. Also the molecular abundances might be scaled in powers of $N$ in the spirit of the \emph{multiscale analysis} performed in \cite{ball:asymptotic,kurtz:rescale,popovich:rescale}. These papers deal with various forms of model reduction. However, the elimination of intermediate species we aim to achieve is not possible in these settings. On the other hand, our approximating model might in some cases be further reduced by techniques developed in these papers, hence our approach might be considered  complementary to theirs.
 
 \section{Preliminaries and definitions}
 
 The space of real (natural) vectors with entries indexed by a finite set $A$ is denoted by $\RR^A$ ($\NN^A$), and for any vector $v\in\RR^A$ ($\NN^A$), we denote the entry corresponding to $a\in A$ by $v(a)$. Moreover, for any two vectors $v,w\in\RR^A$ ($\NN^A$) we write $v> w$ if the inequality holds component-wise. Furthermore, $\abs{v}$ denotes the usual Euclidean norm of $v$. Finally, if $A$ is a finite set, we let $\# A$ denote the cardinality of $A$. Given two real numbers $x,y$, we will often use the notation $x\vee y$ or $x\wedge y$ to denote the maximum and the minimum of $x$ and $y$, respectively.
 
 A reaction network consists of a set of species $\Sp$, a set of complexes $\C$, and a set of reactions $\R$. Formally, $\Sp$ is a finite non-empty set $\{S_1,S_2,\dots,S_n\}$, $\C=\{y_1,y_2,\dots,y_m\}$ is a non-empty set of non-negative linear combinations of elements of $\Sp$ and $\R$ is a finite non-empty subset of $\C\times\C$, such that $(y_i,y_i)\not\in \R$ for all $i$. We identify $\Sp$ and $\C$ with finite subsets of $\NN^{\Sp}$. If $(y_i,y_j)\in\R$ we write $y_i\rightarrow y_j$ and we say that $y_i$ is the \emph{reactant} and $y_j$ is the \emph{product}. Throughout the paper we will denote an object $O$ associated with a reaction $r\colon y_i\rightarrow y_j$ by $O_r$ or $O_{i j}$ indifferently. Furthermore, for each reaction $r\colon y_i\rightarrow y_j\in\R$, we define the \emph{reaction vector}
 $$\xi_r\df y_j-y_i.$$
 For further background on reaction networks, see \cite{erdi:mathematical_models, anderson:design}.

 A complex $y\in\C$ is given as $y=(y(S_1),\dots,y(S_n))$ and $y(S)$ is called the  \emph{stoichiometric coefficient} of the species $S$ in $y$.
 Furthermore, we define the \emph{support} of $y$ as the set of species $S$ such that $y(S)>0$, in which case  we write $S\in y$. Moreover, define $\C_S$ as the complexes whose support contains $S$ and $\R_S$ as the reactions in $\R$ that change the counts of $S$:
 \begin{align}
  \label{eq:C_S}
  \C_S&\df\prg{y\in\C\,\colon\,S\in y},\\
  \label{eq:R_S}
  \R_S&\df\prg{r\in\R\,\colon\,\xi_{r}(S)\neq 0}.
 \end{align}
 
 Finally, we define a \emph{kinetics} $\K$ as a set of functions indexed by $\R$ of the form
 $$\begin{array}{rrcl}
   \lambda_r\colon & \NN_{\geq 0}^{\Sp} & \rightarrow & \RR_{\geq 0} \\
                   & x                  & \mapsto     & \lambda_r(x).
  \end{array}$$
 Intuitively, $\lambda_r$ is the rate by which  reaction $r$ occurs and it will be referred to as the \emph{reaction rate}. We allow reaction rates to be constantly $0$, in which case the corresponding reaction could be removed from the network.
  
 A reaction network equipped with a kinetics can be modelled as a continuous-time Markov process $X_\cdot$ on $\NN^{\Sp}$, where $X_t(S)$ is the number of molecules of the species $S$ at time $t$. Taken together with $\K$ and $X_\cdot$, a reaction network is called a \emph{stochastic reaction network} (\emph{SRN}). The state of $X_\cdot$ changes whenever a reaction takes place, for example, if the reaction $r$ occurs at time $t^*$ the new state is
 $$X_{t^*}=X_{t^*-}+\xi_r.$$
 The kinetics $\K$ represents the transition rates for the process $X_\cdot$, such that 
 \begin{equation}\label{eq:general_form_stochastic_reaction_network}
  X_t=X_0+\sum_{r\in\R}\xi_r Y_r\pr{\int_0^t\lambda_r(X_s)ds},
 \end{equation}
 with $Y_r(\cdot)$ independent and identically distributed unit-rate Poisson processes \citep{kurtz:strong}. The random variable $Y_r\pr{\int_0^t\lambda_r(X_s)ds}$ counts how many times the reaction $r$ has occurred up to time $t$. This stochastic model is typically chosen if the number of reactant molecules is low, so that the behaviour of the system is similar to the evolution of a jump process. Changes occur only in a discrete set of time points and it is uncertain which reaction will take place next. 
 
 A typical choice of kinetics is \emph{mass-action kinetics}, where the reaction rate of $r\colon y_i\to y_j$ is given by
 $$\lambda_{r}(x)=k_{r}\prod_{S\in y_i}\frac{x(S)!}{(x(S)-y_i(S))!}\mathbbm{1}_{\{x(S)\geq y_i(S)\}},$$
 and $k_{r}$ are non-negative real numbers, called \emph{rate constants}. We usually express this as $y_i\xrightarrow{k_{r}}y_j.$ Note that the reaction rates are proportional to the number of ordered subsets of molecules that can give rise to an occurrence of the reaction. This choice of kinetics is natural if we assume  the system is well stirred.
 
 To define a reduced reaction network we introduce the concept of an intermediate species \citep{feliu:intermediates}.
 \begin{defi}\label{def:intermediate_species}
  Let $(\Sp,\C,\R)$ be a reaction network and $\In\subset\Sp$. We say that the species in $\In$ are \emph{intermediate species} (or simply \emph{intermediates}) if the following conditions hold:
  \begin{itemize}
   \item for each $\is\in\In$ and $y\in\C$, if $\is$ is in the support of $y$, then $y=\is$. This implies that $\In\subset\C$.
   \item for each $\is\in\In$, there is a directed path of complexes such that 
   $$y_i\rightarrow \is_{\ell_1}\rightarrow\dots\rightarrow \is\rightarrow\dots\rightarrow \is_{\ell_k}\rightarrow y_j$$
   for some complexes $y_i, y_j\in\C\setminus\In$ and $H_{\ell_i}\in\In$ for all $1\leq i\leq k$.
   The path
   $$\is_{\ell_1}\rightarrow\dots\rightarrow \is\rightarrow\dots\rightarrow \is_{\ell_k}$$
   is called a \emph{chain of intermediates}.
  \end{itemize}
 \end{defi}
 
According to the definition, intermediate species always appear alone and with stoichiometric coefficient one. For example, the species $\is$ in the Michaelis-Menten mechanism and the species $\is_1$, $\is_2$ and $\is_3$ in \eqref{eq:example_without_rates_big} meet Definition \ref{def:intermediate_species}. We denote by $\SR$, $\FP$ the subsets of $\C$ such that
 \begin{itemize}
  \item for all $y\in\SR$, there exists $\is\in\In$, such that $y\rightarrow \is\in\R$
  \item for all $y\in\FP$, there exists $\is\in\In$, such that $\is\rightarrow y\in\R$
 \end{itemize}
 We refer to $\SR$ and to $\FP$, respectively, as the \emph{initial reactants} and the \emph{final products}. In general the two sets can have non-empty intersection (as in Example \ref{ex:main_example_wo_N}). For any initial reactant $y_i$ we introduce the set $\Ini$ of intermediate species $\is$ such that $y_i\rightarrow \is\in\R$.
 We index the set $\In$ using the ordering of the set $\C$, such that $H_\ell=y_\ell$ for any intermediate $\is_\ell\in\In$. Further, we introduce the index sets $\sr$, $\ii$, $\iii$ and $\fp$ of $\SR$, $\In$, $\Ini$ and $\FP$, respectively, such that
 $$\SR=\prg{y_i}_{i\in\sr},\quad\In=\prg{\is_\ell}_{\ell\in\ii},\quad\Ini=\prg{\is_\ell}_{\ell\in\iii},\quad\FP=\prg{y_j}_{j\in\fp}.$$
 
 \section{The Reduced Stochastic Reaction Network}\label{sec:the_reduced_model}
 
 Let $(\Sp, \C, \R)$ be a reaction network equipped with a kinetics $\K$ and let $\In\subset\Sp$ be a set of intermediate species. 
 
 The reduced reaction network obtained from $(\Sp, \C, \R)$ is the triple 
 \begin{equation}\label{eq:reduced_network}
  (\Sp\setminus\In\,,\,\C\setminus\In\,,\,\R^*),
 \end{equation}
 where $\R^*$ consists of the reactions in $\R$ not involving intermediates and the reactions $y_i\rightarrow y_j$, where $y_j$ is obtainable from $y_i$ through a chain of intermediate species of $(\Sp,\C,\R)$, as in Definition \ref{def:intermediate_species}. Thus, the intermediate species have been eliminated from the original network by contraction of reaction paths.
 
 If $(\Sp,\C,\R)$ is equipped with a kinetics $\K$, then $(\Sp\setminus\In,\C\setminus\In,\R^*)$ inherits a kinetics $\K^*$ from $(\Sp,\C,\R)$ if certain additional conditions are fulfilled. To define $\K^*$ we first make the following assumption:
 
 \begin{ass}[Rate functions and intermediates]\label{ass:independence_of_the_dynamics_without_N}
  The consumption of the intermediate species is governed by mass-action kinetics, that is for any $\ell,\ell^\prime\in\ii$ and $j\in\fp$,
  $$\lambda_{\ell j}(x)=k_{\ell j}x(\is_\ell),\quad\text{and}\quad\lambda_{\ell\ell^\prime}(x)=k_{\ell \ell^\prime}x(\is_\ell),$$
  for some non-negative constants $k_{\ell j}$, $k_{\ell \ell^\prime}$. This condition implies that any molecule of an intermediate species will be consumed at a constant rate. Further, we assume that all other reaction rates do not depend on $\is_\ell$.
 \end{ass}
 
 Let $X_\cdot$ be the process associated with $(\Sp,\C,\R)$. We enlarge the filtration of $X_\cdot$ by the $\sigma$-algebras $\sigma_t$, such that $\sigma_t$ contains the information on the evolution up to time $t$ of every  occurrence of a molecule of an intermediate species  in the experiment. In particular, we introduce a Markov process, that describes the dynamics, or fate, of a molecule of an intermediate species. Consider the $n$-th reaction occurring in $X_\cdot$ that turns a non-intermediate complex into an intermediate species. Let this reaction be $y_i\rightarrow \is_\ell$ and assume it takes place at time $t_n$. The intermediate molecule $\is_\ell$ will eventually be transformed into a final product $y_j$. The chain of transformations leading to $y_j$ can be described by a continuous-time Markov chain $C_n(\cdot)$, starting at time $t_n$, with state space $\In \cup \FP$ and $C_n(t_n)\in\In$. The final products are treated as absorbing states for the Markov process. The transition rate matrix, which is independent on $n$, has the following block structure:
 {\renewcommand\arraystretch{1.9}
  \begin{equation}\label{eq:transition_matrix}
   Q=\left[\begin{array}{C|C}
            Q_{\ii, \ii} & Q_{\ii, \fp} \\ \hline
            0 & 0 \\
           \end{array}
        \right],
  \end{equation}}
 where
  \begin{alignat*}{2}
   q_{\ell\ell^\prime}&=k_{\ell \ell^\prime}           &&\qquad\text{for all } \ell,\ell^\prime\in\ii\text{ with }\ell\neq\ell^\prime \\
   q_{\ell j}         &=k_{\ell j}                     &&\qquad\text{for all } \ell\in\ii\text{ and } j\in\fp  \\
   q_{\ell \ell}      &=-\sum_{\ell^\prime\in\ii}k_{\ell \ell^\prime}-\sum_{j\in\fp}k_{\ell j} &&\qquad\text{for all } \ell\in\ii.
  \end{alignat*}
 We define by $\tau_n$ the time until the production of the final product, i.e.
 \begin{equation*}
  \tau_n\df\inf\prg{t\geq t_n \colon C_n(t)\in\FP}-t_n,
 \end{equation*}
 and for all $\ell\in\iii$, we define by $\pi_{\ell j}$ the probability that the final product produced is $y_j$, given that the intermediate chain started in $\is_\ell$. Namely,
 \begin{equation}\label{eq:pi}
  \pi_{\ell j}\df P\prcon{C_n(t_n+\tau_n)=y_j}{C_n(t_n)=\is_{\ell}},
 \end{equation}
 with $\pi_{\ell j}=0$ if $j\notin\fp$. Since $C_n(\cdot)$ is a finite state Markov process with absorbing states, $\tau_n$ is almost surely finite. Moreover note that $\pi_{\ell j}$ does not depend on $n$, since $Q$ does not depend on $n$. In this context, we have
 \begin{equation}\label{eq:sigmat}
 \sigma_t=\sigma\pr{X_s,C_n(s)\colon s\in[0,t], n\in\NN}.
 \end{equation}
 
 Let $\K$ be a kinetics fulfilling Assumption \ref{ass:independence_of_the_dynamics_without_N}. If we let $\lambda_{i\ell}=0$ whenever $y_i\rightarrow H_\ell\notin\R$, then  the kinetics $\K^*$ of the reduced reaction network is defined by
 \begin{equation}\label{eq:def_of_rates_Z}
  \lambda^*_{ij}(x)=\lambda_{ij}(x)+\sum_{\ell\in\iii} \pi_{\ell j}\lambda_{i\ell}(x),
 \end{equation}
 for any $y_i\rightarrow y_j\in\R^*$. Thus, the rate of a reaction originating from a chain of intermediates is the sum of the rates $\lambda_{i\ell}(\cdot)$ by which the first intermediate is produced from $y_i$ multiplied by the probability  $\pi_{\ell j}$ that the chain ends in $y_j$. To this we add $\lambda_{ij}(\cdot)$ if the reaction $y_i\rightarrow y_j$ is already in $\R$.

 Our main goal is to prove that the behaviour of $X_\cdot$, under certain conditions, is captured by the behaviour of the process associated with the reduced SRN. In the broader setting of multiscale models \citep{ball:asymptotic,kurtz:rescale,popovich:rescale}, we prove that a suitable rescaled version of $X_\cdot$ can be approximated by a similarly rescaled version of the process of the reduced SRN. We will show this by constructing a particular process $Z_\cdot$ on the same probability space as $X_\cdot$, which is distributed as the process associated with the reduced SRN, and by further proving convergence in probability of the difference between the rescaled versions of $X_\cdot$ and $Z_\cdot$ in various senses. Specifically, we are able to prove uniform punctual convergence in probability to zero as well as convergence in occupation measure (cf.\ Theorems \ref{thm:eq_true_for_bounded_rates} and \ref{thm:main_theorem}).  Under additional assumptions, we prove convergence in probability  to zero of the difference of the rescaled processes in the Skorohod topology  (cf.\ Theorems \ref{thm:weak_convergence_bounded_rates} and \ref{thm:main_theorem}).
  
 The reduced reaction network defined here is the same as the  reduced reaction network introduced in \cite{feliu:intermediates}. Moreover, the procedure to obtain the kinetics of the reduced model coincides with that in  \cite{feliu:intermediates}. We prove this in Theorem \ref{thm:model_comparison}. It is worth noting, however, that the aims of  \cite{feliu:intermediates} and this paper are very different. Indeed, we study various convergences of the stochastic processes associated with $(\Sp,\C,\R)$, while in \cite{feliu:intermediates} the reaction networks are deterministically modelled through a system of ODEs, and a relation between the  steady states of the original and the reduced models is investigated. 
 
 In \cite{feliu:intermediates}, the kinetics of the reduced reaction network is given by
 \begin{equation}\label{eq:reduced_rates_in_wiuf}
  \widetilde{\lambda}_{ij}(x)=\lambda_{ij}(x)+\sum_{\ell\in\iii} k_{\ell j}\mu_{i\ell}(x),
 \end{equation} where $\mu_{i\ell}$ is defined as follows: consider the labelled directed graph $\mathcal{G}_i^x$ with node set $\In\cup\{\star\}$ and labelled edge set given by:
 \smallskip
 \begin{equation}\label{eq:G_i^x}
  \text{\storestyleof{itemize}
 \begin{listliketab}
  \begin{tabular}{Lll}
   \textbullet & $\is_\ell\xrightarrow[\phantom{\sum_{j\in\fp} k_{\ell j}}]{k_{\ell\ell^\prime}}\is_{\ell^\prime}$ & if $k_{\ell\ell^\prime}\neq 0$ and $\ell\neq\ell^\prime$ \\
   \textbullet & $\is_\ell\xrightarrow{\sum_{j\in\fp} k_{\ell j}}\star$ & if $\displaystyle\sum_{j\in\fp} k_{\ell j}\neq 0$ \\
   \textbullet & $\hspace*{0.25cm}\star\xrightarrow[\phantom{\sum_{j\in\fp} k_{\ell j}}]{\lambda_{i \ell}(x)}\is_\ell$ & if $\lambda_{i \ell}(x)\neq0$
  \end{tabular}
 \end{listliketab}}
 \end{equation}
 We recall some notion from graph theory: let $\mathcal{G}$ be a labelled directed graph. A labelled spanning tree of $\mathcal{G}$ rooted at some node $g$ is a labelled directed graph $\zeta$ satisfying the following conditions:
 \begin{enumerate}[i)]
  \item the set of nodes of $\zeta$ coincides with the set of nodes of $\mathcal{G}$;
  \item any directed edge of $\zeta$ is a directed edge of $\mathcal{G}$, and the labels are conserved;
  \item $\zeta$ contains no cycle;
  \item for any node $g'\neq g$, in $\zeta$ there exists a directed path from $g'$ to $g$.
 \end{enumerate}
 Let $\Theta_i^x(\cdot)$ be the set of labelled spanning trees of $\mathcal{G}_i^x$ rooted at the argument, and let $w(\cdot)$ be the product of the edge labels of the tree in the argument. Then,  $\mu_{i\ell}(x)$ is defined as
 \begin{equation}\label{eq:definition_of_mu}
  \mu_{i\ell}(x)\df\frac{\sum_{\zeta\in\Theta_i^x(\is_\ell)}w(\zeta)}{\sum_{\zeta\in\Theta_i^x(\star)}w(\zeta)}.
 \end{equation}
There might be no spanning tree rooted at a given intermediate species for some $x\in\mathbb{N}^{\Sp}$. In that case, $\mu_{i\ell}(x)$ is 0. The denominator is always strictly positive as any intermediate is eventually turned into a non-intermediate (Definition \ref{def:intermediate_species}). The proof of the following result is given in Section \ref{sec:proofs}.
 
 \begin{thm}\label{thm:model_comparison}
  For all $x\in\mathbb{N}^{\Sp}$, $i\in\sr$, $j\in\fp$, we have $\lambda^*_{ij}(x)=\widetilde{\lambda}_{ij}(x)$, hence  \eqref{eq:def_of_rates_Z} and \eqref{eq:reduced_rates_in_wiuf}  coincide.
 \end{thm}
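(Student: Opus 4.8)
Both \eqref{eq:def_of_rates_Z} and \eqref{eq:reduced_rates_in_wiuf} contain the term $\lambda_{ij}(x)$, so the plan is to establish the single identity
$$\sum_{\ell\in\ii}\pi_{\ell j}\,\lambda_{i\ell}(x)=\sum_{\ell\in\ii}k_{\ell j}\,\mu_{i\ell}(x)$$
for each fixed $i\in\sr$, $j\in\fp$ and $x\in\NN^{\Sp}$; on the left one may sum over $\ii$ instead of $\iii$ since $\lambda_{i\ell}(x)=0$ whenever $y_i\rightarrow\is_\ell\notin\R$, and adding $\lambda_{ij}(x)$ to both sides will then give $\lambda^*_{ij}(x)=\widetilde\lambda_{ij}(x)$. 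The key observation is that the labelled graph $\mathcal{G}_i^x$ of \eqref{eq:G_i^x} is exactly the transition graph of a continuous-time Markov chain $\overline{C}$ on $\In\cup\{\star\}$: namely the chain $C_n$ from \eqref{eq:transition_matrix} with all final products identified with a single state $\star$, which instead of being absorbing jumps to $\is_\ell$ at rate $\lambda_{i\ell}(x)$. By Definition \ref{def:intermediate_species} every intermediate lies on a chain ending in a non-intermediate, so $K_\ell\df -q_{\ell\ell}>0$ for every $\ell\in\ii$ and every intermediate has a directed path to $\star$ in $\mathcal{G}_i^x$ (in particular the denominator in \eqref{eq:definition_of_mu} is positive, as already noted).

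The first main step is to identify $\mu_{i\ell}(x)$ with a ratio of stationary probabilities of $\overline{C}$. Let $\widehat{\In}$ be the set of intermediates reachable from $\star$ in $\mathcal{G}_i^x$. Restricting $\overline{C}$ to $\widehat{\In}\cup\{\star\}$ gives an irreducible chain (every node is reachable from $\star$ by construction, and reaches $\star$ by the paragraph above), and by the Markov chain tree theorem its stationary distribution $\rho$ satisfies $\rho(g)\propto\sum_{\zeta\in\Theta_i^x(g)}w(\zeta)$; the common normalising constant cancels, so $\mu_{i\ell}(x)=\rho(\is_\ell)/\rho(\star)$ for $\is_\ell\in\widehat{\In}$. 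If $\is_\ell\notin\widehat{\In}$ then $\lambda_{i\ell}(x)=0$ (otherwise the edge $\star\rightarrow\is_\ell$ would witness reachability) and $\mu_{i\ell}(x)=0$ (no spanning tree of $\mathcal{G}_i^x$ can be rooted at a node not reachable from $\star$), so such $\ell$ contribute nothing to either side; and if $\widehat{\In}=\varnothing$ both sides vanish and there is nothing to prove. One point needs care here: passing from $\mathcal{G}_i^x$ to the induced subgraph on $\widehat{\In}\cup\{\star\}$ multiplies every rooted-tree weight $\sum_{\zeta\in\Theta_i^x(g)}w(\zeta)$, $g\in\widehat{\In}\cup\{\star\}$, by one and the same factor — the total weight of the forests that hang the unreachable intermediates onto the reachable part, which does not depend on the chosen root — so the ratio defining $\mu_{i\ell}(x)$ is unaffected by the restriction.

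With $\mu_{i\ell}(x)=\rho(\is_\ell)/\rho(\star)$ in hand, the identity follows by combining two linear systems. First-step analysis for the absorbing chain $C_n$ gives, for every $\ell\in\ii$,
$$K_\ell\,\pi_{\ell j}=k_{\ell j}+\sum_{\ell'\neq\ell}k_{\ell\ell'}\,\pi_{\ell'j},$$
while the global balance equations for $\rho$, extended by $0$ off $\widehat{\In}$, read
$$K_\ell\,\rho(\is_\ell)=\rho(\star)\,\lambda_{i\ell}(x)+\sum_{\ell'\neq\ell}k_{\ell'\ell}\,\rho(\is_{\ell'}).$$
Now evaluate $\sum_{\ell}K_\ell\,\rho(\is_\ell)\,\pi_{\ell j}$ in two ways, substituting the first relation into one copy and the second into the other. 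Both expansions contain the double sum $\sum_{\ell}\sum_{\ell'\neq\ell}k_{\ell\ell'}\,\rho(\is_\ell)\,\pi_{\ell'j}$ — in the expansion coming from the balance relation it appears after relabelling $\ell\leftrightarrow\ell'$ — so it cancels, leaving
$$\sum_{\ell}k_{\ell j}\,\rho(\is_\ell)=\rho(\star)\sum_{\ell}\lambda_{i\ell}(x)\,\pi_{\ell j}.$$
Dividing by $\rho(\star)>0$ and using $\mu_{i\ell}(x)=\rho(\is_\ell)/\rho(\star)$ yields the desired identity, and the theorem follows.

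I expect the genuine obstacle to be the first step: realising $\mu_{i\ell}(x)$ as a stationary-probability ratio via the Markov chain tree theorem, and dealing correctly with the reducibility of $\mathcal{G}_i^x$ — the intermediates not reachable from $\star$, the degenerate case $\widehat{\In}=\varnothing$, and the cancellation of the common forest factor under restriction. Everything after that is the short algebraic manipulation above.
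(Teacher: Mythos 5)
Your proposal is correct, and it proves exactly the paper's target identity \eqref{eq:eq_to_prove_model_comparison}, $\sum_{\ell\in\ii}\pi_{\ell j}\lambda_{i\ell}(x)=\sum_{\ell\in\ii}k_{\ell j}\mu_{i\ell}(x)$, but by a genuinely different route. The paper first establishes Lemma \ref{lem:mu_expectation}, which identifies $\mu_{i\ell}(x)$ as $\tfrac{\outx{i}}{\out{\ell}}$ times the expected number of visits of the embedded discrete chain $D^x(\cdot)$ to $\is_\ell$; this is done by explicit linear algebra --- writing $\sum_{n\geq1}\PP(x)^n$ via $(I-\PP_{\ii,\ii})^{-1}$, expressing its entries as cofactor ratios of the (normalised) Laplacian of $\mathcal{G}_i^x$, and invoking the Matrix--Tree theorem to turn those determinants into the spanning-tree sums of \eqref{eq:definition_of_mu} --- and the theorem then follows by decomposing the absorption probability over the step at which $D^x(\cdot)$ jumps from an intermediate to $y_j$. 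You instead read \eqref{eq:G_i^x} directly as the transition graph of a ``recycled'' chain in which $\star$ re-injects into $\In$ at rates $\lambda_{i\ell}(x)$, identify $\mu_{i\ell}(x)=\rho(\is_\ell)/\rho(\star)$ through the Markov chain tree theorem, and get the identity from the cancellation between the first-step equations for $\pi_{\ell j}$ and the stationary balance equations for $\rho$. The points that are delicate only on your route --- reducibility of $\mathcal{G}_i^x$, restriction to the part reachable from $\star$, the fact that unreachable intermediates contribute zero to both sides, and that the restriction multiplies every rooted-tree weight by one common root-independent (and nonzero) forest factor --- are precisely the right ones, and your handling of them is sound; note also that the balance equation extended by zero holds at unreachable intermediates because both $\lambda_{i\ell}(x)$ and every incoming term $k_{\ell'\ell}\rho(\is_{\ell'})$ vanish there. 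What the paper's route buys is Lemma \ref{lem:mu_expectation} itself, a standalone visit-count formula that is reused verbatim in the proof of Proposition \ref{prop:mu_suff_ass}; what yours buys is avoidance of the matrix inversion and cofactor manipulations, at the price of the reachability bookkeeping that the determinant argument sidesteps. Your implicit assumptions that every intermediate has positive exit rate and is absorbed in $\FP$ almost surely are the same ones the paper makes when asserting $\tau_n<\infty$ a.s.\ and positivity of the denominator in \eqref{eq:definition_of_mu}, so no extra hypotheses are introduced.
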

 
 Below we give an example of a reduced SRN.
 
 \begin{ex}\label{ex:main_example_wo_N}
  Consider the reaction network with intermediate species $\is_1$, $\is_2$, taken  with mass-action kinetics
  \begin{center}
   \begin{tikzpicture}[baseline={(current bounding box.center)}]
    \node[state] (E+R)  at (.5,0)    {$E+R$};
    \node[state] (\is1) at (4,1.5)  {$\is_1$};
    \node[state] (\is2) at (4,-1)   {$\is_2$};
    \node[state] (E+P1) at (7,1.5)  {$E+P_1$};
    \node[state] (E+P2) at (7,-1)   {$E+P_2$};
    \path[->]
     (E+R)  edge[bend left] node {$k_1$} (\is1)
            edge            node[swap] {$k_2$} (\is2)
     (\is1) edge[bend left] node[swap] {$k_3$}   (E+R)
            edge            node {$k_4$} (E+P1)
            edge[bend left] node {$k_5$} (\is2)
     (\is2) edge            node {$k_6$} (E+P2)
            edge[bend left] node {$k_7$} (\is1)
     (E+P2) edge[bend left=50] node {$k_8$} (E+R);
   \end{tikzpicture}
  \end{center}
  In this case there is only one initial reactant, namely $E+R$, while the final products are $E+R$, $E+P_1$ and $E+P_2$. Therefore the set of initial reactants and the set of final products have \mbox{non-empty} intersection. If we let $E+P_1=y_3$ and $E+P_2=y_4$, then,  by summing the probabilities of all possible paths from $\is_1$ to $E+P_1$, we find that
  $$\pi_{13}=\frac{k_4}{k_3+k_4+k_5}\sum_{n\in\NN}\pr{\frac{k_5}{k_3+k_4+k_5}\cdot\frac{k_7}{k_6+k_7}}^n=\frac{k_4(k_6+k_7)}{(k_3+k_4)(k_6+k_7)+k_5k_6}.$$
  Similarly, we calculate $\pi_{14}$, $\pi_{23}$ and $\pi_{24}$ and obtain
  \begin{align*}
   \pi_{14}&=\frac{k_5k_6}{(k_3+k_4)(k_6+k_7)+k_5k_6},\\
   \pi_{23}&=\frac{k_4k_7}{(k_3+k_4)(k_6+k_7)+k_5k_6},\\
   \pi_{24}&=\frac{(k_3+k_4+k_5)k_6}{(k_3+k_4)(k_6+k_7)+k_5k_6}.
  \end{align*}
  The reduced reaction network with mass-action kinetics is therefore
  \begin{equation}\label{eq:example_reduced_network_wo_N}
   \begin{tikzpicture}[baseline={(current bounding box.center)}]
    \node[state] (E+R)   at (0,0)  {$E+R$};
    \node[state] (E+P1)  at (5,1)  {$E+P_1$};
    \node[state] (E+P2)  at (5,-2) {$E+P_2$};
    \path[->]
     (E+R)  edge            node[pos=.7]{$k_1\pi_{13}+k_2\pi_{23}$} (E+P1)
            edge[bend left] node[pos=.6]{$k_1\pi_{14}+k_2\pi_{24}$} (E+P2)
     (E+P2) edge[bend left] node{$k_3$} (E+R);
   \end{tikzpicture}
  \end{equation}
 \end{ex}
 
 \section{Results}
 
 Before formalising the setting and the assumptions, we provide some examples to motivate it. Recall Example \ref{ex:main_example_wo_N}. Intuitively, the reduced SRN behaves similarly to the original SRN if the time spent in intermediate states (states with at least one intermediate molecule being present)  is insignificant compared to the time spent in other states. Thus, it is natural to consider situations for which the reaction rates out of intermediate states are all high, though this is not what is required for our results to hold (Example \ref{ex:negative_powers_N}).
 
 Consider a reaction network $(\Sp,\C,\R)$ and a sequence of kinetics $\K^N$ indexed by $N\in\NN$. Let $X^N_\cdot$ be the process \eqref{eq:general_form_stochastic_reaction_network} associated with the kinetics $\K^N$. Generally, we will have in mind that the consumption rates of the intermediates species increase in $N$. We will consider a multiscale setting, where the species abundances also are scaled according to $N$.  Hence, we consider the asymptotic behaviour of the process $X^N_\cdot$ as $N\to\infty$, when both  species abundances and rate constants depend on $N$, similarly to what is done in \cite{ball:asymptotic,kurtz:rescale,popovich:rescale}. 
 
 To increase readability, in the examples the reaction rates depending on $N$ are simple powers of $N$ with no prefactors (e.g.\  $N^2$ rather than $kN^2$). In the results these restrictions are not assumed and more general forms of reaction rates are allowed.

  \begin{ex}\label{ex:main_example}
  Consider the SRN from Example \ref{ex:main_example_wo_N} with rate constants
  \begin{equation}\label{eq:example_with_rates_small}
   \begin{tikzpicture}[baseline={(current bounding box.center)}]
    \node[state] (E+R)  at (.5,0)   {$E+R$};
    \node[state] (\is1) at (4,1.5)  {$\is_1$};
    \node[state] (\is2) at (4,-1)   {$\is_2$};
    \node[state] (E+P1) at (7,1.5)  {$E+P_1$};
    \node[state] (E+P2) at (7,-1)   {$E+P_2$};
    \path[->]
     (E+R)  edge[bend left] node {$k_1$} (\is1)
            edge            node[swap] {$k_2$} (\is2)
     (\is1) edge[bend left] node[swap] {$N$}   (E+R)
            edge            node {$N^3$} (E+P1)
            edge[bend left] node {$N^3$} (\is2)
     (\is2) edge            node {$N^2$} (E+P2)
            edge[bend left] node {$N^2$} (\is1)
     (E+P2) edge[bend left=50] node {$k_3$} (E+R);
   \end{tikzpicture}
  \end{equation}
  The reduced SRN has reaction rates given by \eqref{eq:example_reduced_network_wo_N} with
  $$\pi^N_{13}=\frac{2N^3}{3N^3+2N},\quad\pi^N_{14}=\frac{N^3}{3N^3+2N},\quad\pi^N_{23}=\frac{N^3}{3N^3+2N},\quad\pi^N_{24}=\frac{2N^3+N}{3N^3+2N}.$$
  We assume that the molecular abundances of $R,P_1,P_2$ are of order $O(N)$, while $X^N_t(E)=O(1)$. We further assume that at time 0 there are no intermediates present, that is, $X_0(\is_1)=X_0(\is_2)=0$. The expression $O(N)$ will be made precise later, but it indicates that at a typical time $t>0$, the molecular abundances of $R,P_1,P_2$ are of the same order of magnitude as $N$. With the assumption on the abundances, the rates of the reactions $E+R\rightarrow \is_1$, $E+R\rightarrow \is_2$ and $E+P_2\rightarrow E+R$ are of order $O(N)$, while the intermediate species are consumed considerably faster. Therefore it seems reasonable that the intermediates might be eliminated from the description of the system and the dynamics described by the simpler reduced SRN in \eqref{eq:example_reduced_network_wo_N}. We will show that the dynamics of the reduced SRN approximates the dynamics of (\ref{eq:example_with_rates_small}) for $N$ large. Specifically, we will show that the difference between the two stochastic processes associated with the two networks converges to 0 in the sense of Theorems \ref{thm:eq_true_for_bounded_rates} and \ref{thm:main_theorem} for $N\rightarrow\infty$.
 \end{ex}
 
 \begin{ex}[trapped in the intermediate chain]\label{ex:time_does_not_go_to_zero}
  Consider the same reaction network as in Example \ref{ex:main_example}, but with slightly changed reaction rates.
  The reaction $\is_2\rightarrow E+P_2$ is slowed down and has rate $N$ (before $N^2$). The reaction $\is_1\rightarrow \is_2$ is accelerated and has rate $N^4$ (before $N^3$). All other rates are left unchanged. We assume as before that the molecular abundances of $R,P_1,P_2$ are of order $O(N)$, while $X^N_t(E)=O(1)$. Although the intermediate species are consumed faster than the other species (the life time of a molecule of $\is_1$ and of $\is_2$ are of order $O(1/N^4)$ and $O(1/N^2)$, respectively), it is not possible to approximate the above SRN with one of the form (\ref{eq:example_reduced_network_wo_N}), for any choice of kinetics. Indeed, it is more likely that an intermediate molecule is transformed into another intermediate molecule than into one of the two final products, $E+P_1$ and $E+P_2$. On average, an intermediate molecule will undergo the cycle of transformations $\is_1\rightarrow \is_2 \rightarrow \is_1$ $N$ times before producing a non-intermediate complex. Since the life time of a molecule of $\is_2$ is of order $O(1/N^2)$, the expected time until consumption of such a cycle of intermediates is of the order $O(1/N)$, while the rate of production of intermediate molecules is of order $O(N)$ when molecules of $E$ are present, according to the hypothesis $X^N_t(R)=O(N)$. This will result in a positive number of intermediate species being present at any fixed time $t$. Therefore, in this case, the intermediate species cannot be eliminated in the sense of this paper. 
 \end{ex}
 \begin{ex}[rescaling of time]\label{ex:rescaling_of_time}
  Consider the following SRN, which is a modified version of (\ref{eq:example_with_rates_small}). The enzyme $E$ is removed from the product complexes $E+P_1$ and $E+P_2$, and the reaction $E+P_2\rightarrow E+R$ is deleted:
  \begin{center}
   \begin{tikzpicture}
    \node[state] (E+R) at (.5,0)   {$E+R$};
    \node[state] (\is1)  at (4,1.5)  {$\is_1$};
    \node[state] (\is2)  at (4,-1)   {$\is_2$};
    \node[state] (P1)  at (7,1.5)  {$P_1$};
    \node[state] (P2)  at (7,-1)   {$P_2$};
    \path[->]
     (E+R)  edge[bend left] node {$k_1$} (\is1)
            edge            node[swap] {$k_2$} (\is2)
     (\is1) edge[bend left] node[swap] {$N$}   (E+R)
            edge            node {$N^3$} (P1)
            edge[bend left] node {$N^3$} (\is2)
     (\is2) edge            node {$N^2$} (P2)
            edge[bend left] node {$N^2$} (\is1);
   \end{tikzpicture}
  \end{center}    
  Assume that the molecular abundance of $R$ is of order $O(N)$ and that the molecular abundance of $E$ is of order $O(1)$.
  The small amount of enzyme molecules will be consumed fast and none will be produced. Therefore, after a while, there will be no enzyme molecules present. Each intermediate molecule will fast produce $P_1$ or $P_2$ and, after that, no other reaction can possibly take place. That is, after a time of order $O(1/N)$, no reaction will take place. Thus, in order to observe the dynamics of the system, time should be rescaled by a factor $N$. That is, the time $\tilde{t}=t/N$ should be considered. This is the same as studying the SRN with all reaction rates rescaled by a factor of $1/N$.

  Despite some reaction rates tend to zero with $N$, our results can be applied to approximate the dynamics of the SRN. In particular the reduced SRN is given by
  \begin{center}
   \begin{tikzpicture}
    \node[state] (E+R) at (1,0)  {$E+R$};
    \node[state] (P1)  at (6,1)  {$P_1$};
    \node[state] (P2)  at (6,-1) {$P_2$};
    \path[->]
     (E+R) edge node[pos=0.63]      {$\frac{(2k_1+k_2)N}{3N^2+2}$} (P1)
           edge node[swap,pos=0.75] {$\frac{N^2k_1+(2N^2+1)k_2}{3N^3+2N}$} (P2);
   \end{tikzpicture}
  \end{center}    
  where the magnitudes of the molecular abundances of $E$, $R$, $P_1$, $P_2$ are the same as in the full reaction network.
  \end{ex}
 
 \subsection{Assumptions}
 
 Let $(\Sp,\C,\R)$ be a SRN with a set of intermediate species $\In\subset\Sp$,  let $\K^N$ be a sequence  of kinetics indexed by $N\in\NN$, and let $X^N_\cdot$ be the corresponding stochastic process (\ref{eq:general_form_stochastic_reaction_network}). Define
 \begin{align}\label{eq:R^0}
  \R^0&\df\prg{y_i\rightarrow y_j\in\R\colon y_i\notin \In},\\
  \label{eq:R^00}
  \R^1&\df\prg{y_i\rightarrow y_j\in\R\colon y_i,y_j\notin \In}\subset\R^0.
 \end{align}
 Specifically, $\R^0$ is the set of reactions whose reactant is not an intermediate, while $\R^1$ is the set of reactions not involving intermediates at all. 
 
 Fix a non-negative vector of scaling coefficients, $\alpha=(\alpha(S))_{S\in\Sp\setminus\In}\in\RR^{\Sp\setminus\In}_{\geq0}$, and define the rescaled process, 
 \begin{equation}\label{rescale}
 \widehat{X}_t^N=N^{-\alpha} p(X_t^N),
 \end{equation}
 where $ p\colon\RR^{\Sp}\rightarrow\RR^{\Sp\setminus\In}$ is the projection onto the non-intermediate species space and the multiplication $N^{-\alpha} p(X_t^N)$ is intended component-wise.
 The process $\widehat{X}_\cdot^N$ is the rescaled process in the sense of \cite{ball:asymptotic,kurtz:rescale,popovich:rescale} for the non-intermediate species. Since $\alpha(S)$ might differ from species to species, $\widehat{X}_\cdot^N$ is a multiscale process.

\begin{ass}\label{ass:big_with_N}
 Let  $\alpha$ be given as in \eqref{rescale}.

 \begin{enumerate}[(i)]
  \item\label{ass:independence_of_the_dynamics} (Rate functions and intermediates)  We assume that $(\Sp,\C,\R)$ equipped with $\K^N$ satisfies Assumption \ref{ass:independence_of_the_dynamics_without_N} for all $N\in\NN$.
  \item\label{ass:rescale} (Rescaling of abundances) We assume that for any non-intermediate species $S\in\Sp\setminus\In$,
   \begin{equation}\label{eq:O(1)}
    \widehat{X}_t^N(S)=\mathcal{O}(1),
   \end{equation}
   that is, the scaled abundances do not blow up before time $t$.
   To make (\ref{eq:O(1)}) precise, we require that there exists $T>0$ such that for any $S\in\Sp\setminus\In$,
   \begin{subnumcases}
    \displaystyle \forall\,\nu>0\,\,\exists\,\Upsilon_{\nu}\colon\limsup_{N\rightarrow\infty}P\pr{\sup_{[0,T]}\widehat{X}_t^N(S)>\Upsilon_{\nu}}<\nu\label{eq:hypothesis_O_1_a}\\
    \displaystyle \La\prg{t\in[0,T]\colon\lim_{N\rightarrow\infty}\widehat{X}_t^N(S)=0\text{ a.s.}}=0,\label{eq:hypothesis_O_1_b}
   \end{subnumcases}
   where $\La$ denotes the usual Lebesgue measure on $\RR$. 
  \item\label{ass:ratefct} (Convergence of rate functions)
   We assume that there exist  a set of locally Lipschitz functions $\prg{\lambda_r(\cdot)}_{r\in \R^0}$ defined on $\RR^{\Sp\setminus\In}_{\geq0}$, fulfilling
   $$x\in\RR^{\Sp\setminus\In}_{>0}\Rightarrow\lambda_r(x)>0,$$
   and a set of non-negative real numbers $\prg{\beta_r}_{r\in \R^0}$ such that, for all $r\in \R^0$,
   \begin{equation}\label{eq:limit_rates}
    N^{-\beta_r}\lambda_r^N(N^{\alpha}x)\xrightarrow[N\rightarrow\infty]{}\lambda_r(x)
   \end{equation}
   uniformly on compact sets, where the rate functions $\lambda_r^N$ are extended to the real vectors by considering the floor function of the argument.
  \item\label{ass:degradation_intermediates} (Degradation of intermediates)  Let $C^N_n$, $\tau^N_n$, $t^N_n$ and $\pi^N_{\ell j}$ be as defined after Assumption \ref{ass:independence_of_the_dynamics_without_N}. Let
   $$
    \beta^*_\ell=\max_{i\in\sr}\beta_{i\ell},\qquad\alpha^*_j=\min_{S\in y_j}\alpha(S),
   $$
   where $\beta_{i\ell}$ is as in (\ref{ass:ratefct}) for $r=y_i\to \is_\ell$. Moreover, define
   \begin{equation}\label{eq:p_ell_j}
    p_{\ell j}^\varepsilon(N)\df P\prconf{\tau^N_1>\frac{N^{\alpha^*_j}\varepsilon}{N^{\beta^*_{\ell}}\pi^N_{\ell j}}}{C^N_1(t^N_1)=\is_\ell, C^N_1(t^N_1+\tau^N_1)=y_j}.
   \end{equation}
   By definition of the continuous-time Markov chains $C^N_n(\cdot)$, for any $n$
   $$P\prconf{\tau^N_n>\frac{N^{\alpha^*_j}\varepsilon}{N^{\beta^*_{\ell}}\pi^N_{\ell j}}}{C^N_n(t^N_n)=\is_\ell, C^N_n(t^N_n+\tau^N_n)=y_j}=p_{\ell j}^\varepsilon(N).$$
   We assume that the size of $\tau^N_n$ is controlled, that is, for all $\varepsilon>0$, $\ell\in\bigcup_{i\in\sr}\iii$ and $j\in\fp$, we have
   \begin{equation}\label{eq:lifespan_going_to_zero_relaxed}
    \pi^N_{\ell j}N^{\beta^*_\ell-\alpha^*_j}p_{\ell j}^\varepsilon(N)\xrightarrow[N\rightarrow\infty]{}0.
   \end{equation}
   Sufficient conditions for \eqref{eq:lifespan_going_to_zero_relaxed} are given in Propositions \ref{prop:sufficient_conditions} and \ref{prop:mu_suff_ass}.
   

  \item\label{ass:single_scale_system} (Single scale system)
   For any non-intermediate species $S\in\Sp\setminus\In$, let 
   $$\R^1_S\df\R_S\cap\R^1\qquad\text{and}\qquad\overline{\R}_S\df\prg{r\in\R^*\setminus\R^1\,\colon\,\xi_{r}(S)\neq 0}.$$
   Moreover, for all $\ell\in\iii$ and $j$ in the set of complexes indices, let $\pi^N_{\ell j}$ be as in \eqref{eq:pi}. We  assume that 
   \begin{equation}\label{eq:probability_limit_assumption}
    \begin{cases}
     \displaystyle\exists\, \gamma_{\ell j}=\lim_{N\rightarrow\infty} \log_N \pi^N_{\ell j} &\in [-\infty,0]\\
     \\
     \displaystyle\exists\, \lim_{N\rightarrow\infty} \pi^N_{\ell j}N^{-\gamma_{\ell j}} &\text{if }\gamma_{\ell j}>-\infty.
    \end{cases}
   \end{equation}
   and
   \begin{equation}\label{eq:single_scale_limit}
    \max\pr{\prg{\beta_r}_{r\in \R^1_S}\cup \prg{\beta_{i\ell}+\gamma_{\ell j}}_{\ell\in\iii,y_i\to y_j\in\overline{\R}_S}}\leq\alpha(S),
   \end{equation}
   where $\beta_r$ with $r\in\R^0$ is as in (\ref{ass:ratefct}), and $\max\emptyset=-\infty$.
  \end{enumerate}
 \end{ass}
 
 \begin{rem}\label{rem:single_scale_system}
  `Single scale system' in Assumption \ref{ass:big_with_N}(\ref{ass:single_scale_system}) refers to the time scale of the reduced SRN, as defined in \cite{popovich:rescale}.
 \end{rem}
 
 \begin{rem}
  Time rescaling in the sense of Example \ref{ex:rescaling_of_time} might be considered. It is equivalent to a rescaling of all the rate functions by a common factor, and therefore equivalent to adding a common term to all the $\beta$'s . Thus, time rescaling is implicitly considered in our framework of model reduction. We will ignore it in the development of the theory.
 \end{rem}

 \begin{rem}\label{rem:massaction_beta}
  Assume mass-action kinetics and assume that for any reaction $r\colon y_i\rightarrow y_j\in\R^0$, the constant $k^N_r$ is of the form $N^{\eta_r}k_r$ with $k_r>0$ and $\eta_r\in\RR$. Thus, 
  $$\lambda^N_r(N^{\alpha}x)=N^{\eta_r}k_r\prod_{S\in y_i}\frac{(N^{\alpha(S)}x(S))!}{(N^{\alpha(S)}x(S)-y_i(S))!}\mathbbm{1}_{\prg{N^{\alpha(S)}x(S)\geq y_i(S)}}\,.$$
  This means that the right scaling for the rate function $\lambda^N_r$ is
  $$\beta_r=\eta_r+\sum_{S\in y_i}\alpha(S)\cdot y_i(S)\,.$$
  Indeed,
  $$N^{-\beta_r}\lambda^N_r(N^{\alpha}x)\xrightarrow[N\rightarrow\infty]{}\lambda_r(x)$$
  uniformly on compact sets, where
  $$\lambda_r(x)=k_r\prod_{\substack{S\in y_i \\ \alpha(S)=0}}\frac{x(S)!}{(x(S)-y_i(S))!}\mathbbm{1}_{\prg{x(S)\geq y_i(S)}}\prod_{\substack{S\in y_i \\ \alpha(S)>0}}x(S)^{y_i(S)}\mathbbm{1}_{\prg{x(S)>0}}\,.$$
 \end{rem}
 
 \begin{rem}\label{rem:relaxed_hypothesis}
  Theorems \ref{thm:eq_true_for_bounded_rates} and \ref{thm:weak_convergence_bounded_rates} below hold even if \eqref{eq:probability_limit_assumption} and \eqref{eq:single_scale_limit} in Assumption \ref{ass:big_with_N}(v) are replaced by the weaker conditions
  \begin{gather}
   \label{eq:limsup_limited_hypothesis}
   \exists c_{\ell j}>0 \text{ s.t. }\limsup_{N\rightarrow\infty} \pi_{\ell j}^N N^{\beta_{\ell}^*-\alpha^*_j}\leq c_{\ell j}\\
   \label{eq:single_scale}
   \max\pr{\prg{\beta_r}_{r\in \R^1_S}\cup\prg{\limsup_{N\rightarrow\infty}\pr{\beta_{i\ell}+\log_N \pi_{\ell j}^N}}_{\ell\in\iii, y_i\to y_j\in\overline{\R}_S}}\leq\alpha(S).
  \end{gather}
  We will use these in the proof of Theorems \ref{thm:eq_true_for_bounded_rates} and \ref{thm:weak_convergence_bounded_rates}.
 \end{rem}

 Under the assumption that $\widehat{X}^N_0$ is bounded uniformely on $N$, condition \eqref{eq:hypothesis_O_1_a} is fulfilled for a special class of reaction networks called \emph{conservative} reaction networks (cf. Remark \ref{rem:conservative_case}).
 In order to state suffiecient conditions for \eqref{eq:lifespan_going_to_zero_relaxed} to hold, for any $\ell\in\iii$ we define
   $$a_\ell=\min_{y_j\in\FP_\ell}\alpha^*_j,$$
   where $\FP_\ell\subseteq\FP$ denotes the set of final products which are obtainable from $\is_\ell$ through a path of intermediates. In other words, $\FP_\ell$ is the set of final products $y_j$ such that there exists a path of the form
   $$\is_\ell\to\is_{\ell_1}\to\dots\to\is_{\ell_k}\to y_j.$$
   The following holds:
   \begin{prop}\label{prop:sufficient_conditions}
    Equation \eqref{eq:lifespan_going_to_zero_relaxed} holds if for all $\ell\in\bigcup_{i\in\sr}\iii$ and $\varepsilon>0$, we have
   \begin{equation}\label{eq:lifespan_going_to_zero}
    N^{\beta^*_\ell-a_\ell}P\prconf{\tau^N_1>N^{a_\ell-\beta^*_{\ell}}\varepsilon}{C^N_1(t^N_1)=\is_\ell}\xrightarrow[N\rightarrow\infty]{}0.
   \end{equation}
   Moreover, \eqref{eq:lifespan_going_to_zero_relaxed} holds if for all $\ell\in\bigcup_{i\in\sr}\iii$ and $\varepsilon>0$, we have \eqref{eq:limsup_limited_hypothesis} and
   \begin{equation}\label{eq:exp_lifespan_going_to_zero}
    N^{\beta^*_{\ell}-a_\ell}\Econf{\tau^N_1}{C^N_1(t^N_1)=\is_{\ell}}\xrightarrow[N\rightarrow\infty]{}0.
   \end{equation}
   \end{prop}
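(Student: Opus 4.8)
In both cases the plan is to bound the quantity $\pi^N_{\ell j}N^{\beta^*_\ell-\alpha^*_j}p^\varepsilon_{\ell j}(N)$ appearing in \eqref{eq:lifespan_going_to_zero_relaxed} by an expression in which the final product index $j$ enters only through the inequality $\alpha^*_j\ge a_\ell$, and in which every probability or expectation is conditioned on the \emph{starting} state $\is_\ell$ of the intermediate chain $C^N_1$ rather than on the pair (start, end). Two elementary observations make this possible. First, when $\pi^N_{\ell j}>0$ the product $y_j$ is reachable from $\is_\ell$ through a chain of intermediates, so $y_j\in\FP_\ell$ and hence $\alpha^*_j\ge a_\ell$; when $\pi^N_{\ell j}=0$ the corresponding term in \eqref{eq:lifespan_going_to_zero_relaxed} vanishes by convention, so only $\pi^N_{\ell j}>0$ matters. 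Second, discarding the conditioning on the final product can only increase a probability, and for the mean, $\pi^N_{\ell j}\,\Econf{\tau^N_1}{C^N_1(t^N_1)=\is_\ell,\,C^N_1(t^N_1+\tau^N_1)=y_j}=\Econf{\tau^N_1\mathbbm{1}_{\{C^N_1(t^N_1+\tau^N_1)=y_j\}}}{C^N_1(t^N_1)=\is_\ell}\le\Econf{\tau^N_1}{C^N_1(t^N_1)=\is_\ell}$.

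For the sufficiency of \eqref{eq:lifespan_going_to_zero}: multiplying the defining equation \eqref{eq:p_ell_j} by $\pi^N_{\ell j}$ turns the conditional probability into a joint probability given $C^N_1(t^N_1)=\is_\ell$; dropping the event on the final product, and using $\pi^N_{\ell j}\le1$ and $\alpha^*_j\ge a_\ell$ to note that the threshold $N^{\alpha^*_j}\varepsilon/(N^{\beta^*_\ell}\pi^N_{\ell j})$ is at least $N^{a_\ell-\beta^*_\ell}\varepsilon$, one obtains $\pi^N_{\ell j}p^\varepsilon_{\ell j}(N)\le P\bigl(\tau^N_1>N^{a_\ell-\beta^*_\ell}\varepsilon\mid C^N_1(t^N_1)=\is_\ell\bigr)$. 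Multiplying by $N^{\beta^*_\ell-\alpha^*_j}\le N^{\beta^*_\ell-a_\ell}$ and invoking \eqref{eq:lifespan_going_to_zero} then yields \eqref{eq:lifespan_going_to_zero_relaxed}.

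For the sufficiency of \eqref{eq:limsup_limited_hypothesis} together with \eqref{eq:exp_lifespan_going_to_zero}: apply Markov's inequality to $\tau^N_1$ under the law conditioned on both $C^N_1(t^N_1)=\is_\ell$ and $C^N_1(t^N_1+\tau^N_1)=y_j$ to get $p^\varepsilon_{\ell j}(N)\le\varepsilon^{-1}N^{\beta^*_\ell-\alpha^*_j}\pi^N_{\ell j}\,\Econf{\tau^N_1}{C^N_1(t^N_1)=\is_\ell,\,C^N_1(t^N_1+\tau^N_1)=y_j}$; combining this with the mean bound above and with $\alpha^*_j\ge a_\ell$ gives $\pi^N_{\ell j}N^{\beta^*_\ell-\alpha^*_j}p^\varepsilon_{\ell j}(N)\le\varepsilon^{-1}\bigl(\pi^N_{\ell j}N^{\beta^*_\ell-\alpha^*_j}\bigr)\bigl(N^{\beta^*_\ell-a_\ell}\Econf{\tau^N_1}{C^N_1(t^N_1)=\is_\ell}\bigr)$. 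Here the first factor stays bounded by \eqref{eq:limsup_limited_hypothesis} and the second tends to $0$ by \eqref{eq:exp_lifespan_going_to_zero}, so the product tends to $0$, which is \eqref{eq:lifespan_going_to_zero_relaxed}.

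I expect the only slightly delicate point to be the bookkeeping that trades the product-dependent objects ($\alpha^*_j$ and the conditioning on $y_j$) for the product-free ones ($a_\ell$ and the conditioning on $\is_\ell$); this rests entirely on the reachability remark $y_j\in\FP_\ell$ and on the two monotonicity facts noted at the outset, so it should be routine. The degenerate cases --- $\pi^N_{\ell j}=0$, or $\beta^*_\ell=-\infty$ when no reaction feeds $\is_\ell$ --- are covered by the stated conventions and do not occur for the indices $\ell\in\bigcup_{i\in\sr}\iii$ in the statement, so no separate treatment is needed.
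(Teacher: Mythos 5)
Your proposal is correct and follows essentially the same route as the paper: for the first claim you bound the conditional threshold from below by $N^{a_\ell-\beta^*_\ell}\varepsilon$ using $\pi^N_{\ell j}\le 1$ and $\alpha^*_j\ge a_\ell$ (reachability, $y_j\in\FP_\ell$) and then remove the conditioning on the final product, which is exactly the paper's estimate (the paper just phrases the last step as summing $\pi^N_{\ell j}P(\cdot\mid\text{start},\text{end})$ over $j$ via total probability, while you drop the end-event term by term). For the second claim your combination of Markov's inequality with the bound $\pi^N_{\ell j}\,\Econf{\tau^N_1}{\text{start},\text{end}}\le\Econf{\tau^N_1}{\text{start}}$ and the boundedness from \eqref{eq:limsup_limited_hypothesis} is a trivial rearrangement of the paper's argument, so the proof is sound as written.
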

   \begin{proof}
    The first part of the proposition is proven by
    \begin{align*}
     \sum_{j\in\fp}\pi^N_{\ell j}N^{\beta^*_\ell-\alpha^*_j}p_{\ell j}^\varepsilon(N)&\leq N^{\beta^*_\ell-a_\ell}\sum_{j\in\fp}\pi^N_{\ell j}P\prconf{\tau^N_1>N^{a_\ell-\beta^*_{\ell}}\varepsilon}{C^N_1(t^N_1)=\is_\ell, C^N_1(t^N_1+\tau^N_1)=y_j}\\
     &=N^{\beta^*_\ell-a_\ell}P\prconf{\tau^N_1>N^{a_\ell-\beta^*_{\ell}}\varepsilon}{C^N_1(t^N_1)=\is_\ell}.
    \end{align*}
    The second part of the proposition follows from 
    \begin{align*}
     \sum_{j\in\fp}\pi^N_{\ell j}N^{\beta^*_\ell-\alpha^*_j}\Econf{\tau^N_1}{C^N_1(t^N_1)=\is_\ell, C^N_1(t^N_1+\tau^N_1)=y_j}&\\
     &\hspace*{-63pt}\leq N^{\beta^*_\ell-a_\ell}\sum_{j\in\fp}\pi^N_{\ell j}\Econf{\tau^N_1}{C^N_1(t^N_1)=\is_\ell, C^N_1(t^N_1+\tau^N_1)=y_j}\\
     &\hspace*{-63pt}=N^{\beta^*_\ell-a_\ell}\Econf{\tau^N_1}{C^N_1(t^N_1)=\is_{\ell}}.
    \end{align*}
    Therefore, \eqref{eq:exp_lifespan_going_to_zero} implies that for any $j\in\fp$
    $$\pi^N_{\ell j}N^{\beta^*_\ell-\alpha^*_j}\Econf{\tau^N_1}{C^N_1(t^N_1)=\is_\ell, C^N_1(t^N_1+\tau^N_1)=y_j}\xrightarrow[N\rightarrow\infty]{}0.$$
    By Markov inequality, this implies that $p_{\ell j}^\varepsilon(N)$ tends to zero as $N$ goes to infinity. By \eqref{eq:limsup_limited_hypothesis}, the latter leads to \eqref{eq:lifespan_going_to_zero_relaxed}, and the proof is complete.
   \end{proof}

   Since $\tau^N_n$ is a phase-type distributed random variable, we can express \eqref{eq:lifespan_going_to_zero} in terms of the exponential of the transition rate matrix \eqref{eq:transition_matrix}. Specifically, \eqref{eq:lifespan_going_to_zero} is equivalent to
   $$N^{\beta^*_\ell-a_\ell}(e_{\ell})^\top \exp\pr{N^{a_\ell-\beta^*_{\ell}}\varepsilon Q_{\ii,\ii}^N}e\xrightarrow[N\rightarrow\infty]{}0\qquad\forall \ell\in\bigcup_{i\in\sr}\iii,$$
   where $(e_{\ell})^\top$ denotes the transpose of the canonical base vector with a one in the $\ell$-th entry and $e$ is the vector with all entries equal to one.  
  A sufficient condition for \eqref{eq:exp_lifespan_going_to_zero} to hold is given in the proposition below:
 
 \begin{prop}\label{prop:mu_suff_ass}
  Assume Assumptions \ref{ass:big_with_N}(\ref{ass:independence_of_the_dynamics},\ref{ass:ratefct}) are fulfilled for some $\alpha\in\RR^{\Sp\setminus\In}$. For all $i\in\sr$ and $\ell\in\ii$, let $\mu_{i\ell}^N(x)$ be as in (\ref{eq:definition_of_mu}) and define
  $$\alpha^*=\min_{j\in\fp}\alpha^*_j.$$
  We have that, if
  \begin{equation}\label{eq:mu_tend_to_zero}
   N^{-\alpha^*}\mu_{i\ell}^N(N^{\alpha}x)\xrightarrow[N\rightarrow\infty]{} 0
  \end{equation}
  for all $x\in\RR^{\Sp\setminus\In}_{\ge 0}$ and for all $i\in\sr$, $\ell\in\ii$, then \eqref{eq:exp_lifespan_going_to_zero} in Assumption \ref{ass:big_with_N}(\ref{ass:degradation_intermediates}) holds. Moreover, if $\alpha^*_j=\alpha^*_{j^\prime}$ for all $j,j^\prime\in\fp$, then \eqref{eq:mu_tend_to_zero} is also a necessary condition for \eqref{eq:exp_lifespan_going_to_zero} to hold.
 \end{prop}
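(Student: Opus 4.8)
The argument hinges on the identity, valid for every $i\in\sr$, $\ell\in\ii$ and $x\in\RR^{\Sp\setminus\In}_{\ge0}$,
\begin{equation}\label{eq:mu_via_green_sketch}
\mu^N_{i\ell}(x)=\sum_{\ell'\in\ii}\lambda^N_{i\ell'}(x)\,G^N_{\ell'\ell},\qquad\text{where }G^N:=\bigl(-Q^N_{\ii,\ii}\bigr)^{-1},
\end{equation}
and $G^N_{\ell'\ell}$ is the expected amount of time $C^N_n$ spends in $\is_\ell$ before absorption when started from $\is_{\ell'}$; here $-Q^N_{\ii,\ii}$ is invertible because, by Definition~\ref{def:intermediate_species}, every intermediate is eventually turned into a non-intermediate. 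To prove \eqref{eq:mu_via_green_sketch} I would recognise the labelled graph $\mathcal{G}_i^x$ of \eqref{eq:G_i^x} as the transition graph of the finite continuous-time Markov chain on $\In\cup\{\star\}$ obtained from $C^N_n$ by merging all final products into a single state $\star$ and letting $\star$ jump to $\is_\ell$ at rate $\lambda^N_{i\ell}(x)$; the $\ii$-block of the generator of this chain is precisely $Q^N_{\ii,\ii}$ from \eqref{eq:transition_matrix}. By the Markov chain tree theorem its (unique) stationary distribution $\rho$ satisfies $\rho(g)\propto\sum_{\zeta\in\Theta_i^x(g)}w(\zeta)$, so \eqref{eq:definition_of_mu} reads $\mu^N_{i\ell}(x)=\rho(\is_\ell)/\rho(\star)$, with $\rho(\star)>0$ (the denominator in \eqref{eq:definition_of_mu} is positive). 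Restricting the stationarity relations to the coordinates in $\ii$ gives $\bigl(\rho(\is_{\ell'})\bigr)_{\ell'\in\ii}\,Q^N_{\ii,\ii}=-\rho(\star)\,\bigl(\lambda^N_{i\ell}(x)\bigr)_{\ell\in\ii}$, and dividing by $\rho(\star)$ yields \eqref{eq:mu_via_green_sketch} (when $\Theta_i^x(\is_\ell)=\emptyset$ both sides vanish, since then $\is_\ell$ is unreachable from $\star$, hence from every $\is_{\ell'}$ with $\lambda^N_{i\ell'}(x)>0$). A relation of this type also appears in the proof of Theorem~\ref{thm:model_comparison}. Summing \eqref{eq:mu_via_green_sketch} over $\ell\in\ii$ and writing $\theta^N_{\ell}:=\Econf{\tau^N_1}{C^N_1(t^N_1)=\is_\ell}=\sum_{\ell'\in\ii}G^N_{\ell\ell'}$,
\begin{equation}\label{eq:mu_sum_sketch}
\sum_{\ell\in\ii}\mu^N_{i\ell}(x)=\sum_{\ell'\in\iii}\lambda^N_{i\ell'}(x)\,\theta^N_{\ell'},
\end{equation}
the sum running over $\iii$ because $\lambda^N_{i\ell'}\equiv0$ for $\ell'\notin\iii$. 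Setting up \eqref{eq:mu_via_green_sketch} carefully — the collapsed chain, its balance equations, and the empty-tree case — is the technical crux; the rest is routine manipulation of limits of nonnegative sequences, using Assumption~\ref{ass:big_with_N}(\ref{ass:ratefct}).

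To prove that \eqref{eq:mu_tend_to_zero} implies \eqref{eq:exp_lifespan_going_to_zero}, fix $\ell\in\bigcup_{i\in\sr}\iii$, choose $i\in\sr$ with $\ell\in\iii$ and $\beta_{i\ell}=\beta^*_\ell$, and fix any $x_0\in\RR^{\Sp\setminus\In}_{>0}$. Summing \eqref{eq:mu_tend_to_zero} over the finitely many intermediates and inserting \eqref{eq:mu_sum_sketch} at $x=x_0$ shows $N^{-\alpha^*}\sum_{\ell'\in\iii}\lambda^N_{i\ell'}(N^\alpha x_0)\,\theta^N_{\ell'}\to0$; since all summands are nonnegative, the $\ell'=\ell$ summand converges to $0$. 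By Assumption~\ref{ass:big_with_N}(\ref{ass:ratefct}), $N^{-\beta_{i\ell}}\lambda^N_{i\ell}(N^\alpha x_0)\to\lambda_{i\ell}(x_0)>0$, so dividing the $\ell'=\ell$ limit by $N^{-\beta_{i\ell}}\lambda^N_{i\ell}(N^\alpha x_0)$ gives $N^{\beta^*_\ell-\alpha^*}\theta^N_\ell\to0$; as $\FP_\ell\subseteq\FP$ implies $a_\ell\ge\alpha^*$, a fortiori $N^{\beta^*_\ell-a_\ell}\theta^N_\ell\le N^{\beta^*_\ell-\alpha^*}\theta^N_\ell\to0$, which is \eqref{eq:exp_lifespan_going_to_zero}.

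For the converse, assume $\alpha^*_j$ is the same for all $j\in\fp$; then $a_\ell=\alpha^*$ for every $\ell$ (note $\FP_\ell\ne\emptyset$ by Definition~\ref{def:intermediate_species}), so \eqref{eq:exp_lifespan_going_to_zero} reads $N^{\beta^*_\ell-\alpha^*}\theta^N_\ell\to0$ for all $\ell\in\bigcup_{i\in\sr}\iii$. Fix $i\in\sr$, $\ell\in\ii$ and $x\in\RR^{\Sp\setminus\In}_{\ge0}$. Since every $\mu^N_{i\ell}\ge0$, \eqref{eq:mu_sum_sketch} gives $0\le N^{-\alpha^*}\mu^N_{i\ell}(N^\alpha x)\le\sum_{\ell'\in\iii}N^{-\alpha^*}\lambda^N_{i\ell'}(N^\alpha x)\,\theta^N_{\ell'}$. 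For each $\ell'\in\iii$ write $N^{-\alpha^*}\lambda^N_{i\ell'}(N^\alpha x)\,\theta^N_{\ell'}=\bigl(N^{-\beta_{i\ell'}}\lambda^N_{i\ell'}(N^\alpha x)\bigr)\cdot N^{\beta_{i\ell'}-\alpha^*}\theta^N_{\ell'}$: the first factor is bounded because it converges to $\lambda_{i\ell'}(x)$ by Assumption~\ref{ass:big_with_N}(\ref{ass:ratefct}), while $N^{\beta_{i\ell'}-\alpha^*}\theta^N_{\ell'}\le N^{\beta^*_{\ell'}-\alpha^*}\theta^N_{\ell'}\to0$ since $\beta_{i\ell'}\le\beta^*_{\ell'}$ and, as just noted, $N^{\beta^*_{\ell'}-\alpha^*}\theta^N_{\ell'}\to0$. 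Hence the upper bound is a finite sum of terms each tending to $0$, so $N^{-\alpha^*}\mu^N_{i\ell}(N^\alpha x)\to0$, which is \eqref{eq:mu_tend_to_zero}.
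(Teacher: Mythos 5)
Your proof is correct, and at the level of the proposition it follows the same strategy as the paper: everything is funneled through the identity $\sum_{\ell\in\ii}\mu^N_{i\ell}(x)=\sum_{\ell'\in\iii}\lambda^N_{i\ell'}(x)\,\Econf{\tau^N_1}{C^N_1(t^N_1)=\is_{\ell'}}$ (the paper's \eqref{eq:sum_exp_in_proof_prop_mu}), followed by the same nonnegativity and scaling bookkeeping, including the observation $a_\ell\geq\alpha^*$ for the forward direction and $a_\ell=\alpha^*$ for the converse. Where you genuinely diverge is in how you obtain that identity. The paper derives it from Lemma \ref{lem:mu_expectation}, proved via the embedded discrete-time chain, cofactor and Laplace expansions of the Laplacian, and the Matrix--Tree theorem; you instead interpret $\mathcal{G}_i^x$ as the transition graph of the chain obtained by collapsing $\FP$ into $\star$, invoke the Markov chain tree theorem to read \eqref{eq:definition_of_mu} as a ratio of stationary probabilities, and convert the stationarity equations into the Green's-function formula $\mu^N_{i\ell}(x)=\sum_{\ell'}\lambda^N_{i\ell'}(x)\bigl[(-Q^N_{\ii,\ii})^{-1}\bigr]_{\ell'\ell}$. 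This buys a derivation expressed directly in terms of expected occupation times (exactly what the proposition needs) and avoids the determinant manipulations; its cost is that the Markov chain tree theorem is usually stated for irreducible chains, whereas the collapsed chain is reducible whenever some $\lambda^N_{i\ell}(x)=0$ (or $\outx{i}=0$), so strictly you should note that the tree-weight vector always lies in the left kernel of the generator and that the closed class containing $\star$ is unique, which makes the proportionality with the stationary distribution valid; your remark about empty tree sets covers only part of this. The paper's longer route through Lemma \ref{lem:mu_expectation} pays off elsewhere, since the same lemma drives Theorem \ref{thm:model_comparison}. A small point in your favour: in the forward direction you explicitly choose $x_0\in\RR^{\Sp\setminus\In}_{>0}$ so that $\lambda_{i\ell}(x_0)>0$ before dividing, a step the paper's ``if and only if'' phrasing leaves implicit.
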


 We prove Proposition \ref{prop:mu_suff_ass} in Section \ref{sec:proofs}. The condition \eqref{eq:exp_lifespan_going_to_zero} is sufficient for \eqref{eq:lifespan_going_to_zero_relaxed} to hold, but it is not necessary, as shown in Example \ref{ex:negative_powers_N}. Before moving on, we make a number of remarks.
 

 \subsection{The Process $Z^N_\cdot$}
 
 In order to show that the reduced SRN provides a good approximation, under the given assumptions, of different features of the original SRN, we define a sequence of processes $Z_\cdot^N$ ad hoc. We choose them such that for any fixed $t$ the (rescaled) difference $\abs{X^N_t-Z^N_t}$ tends to zero in probability, and such that the process $Z_\cdot^N$ is distributed as the process associated with the reduced SRN. We will prove other convergence statements in Theorems \ref{thm:eq_true_for_bounded_rates}, \ref{thm:weak_convergence_bounded_rates} and \ref{thm:main_theorem}.
 
 Recall that $ p\colon\RR^{\Sp}\rightarrow\RR^{\Sp\setminus\In}$ is the projection onto the non-intermediate species space. By Assumption \ref{ass:big_with_N}(\ref{ass:independence_of_the_dynamics}), 
 the reaction rates $\lambda^N_r(\cdot)$ with $r\in\R^0$ do not depend on the counts of intermediates. That is, for any $x\in\NN^\Sp$,
 $$\lambda^N_r(x)=\bar{\lambda}^N_r(p(x)),$$
 for some function $\bar{\lambda}^N_r\colon\NN^{\Sp\setminus\In}\rightarrow\RR_{\geq0}$. For the sake of convenience, we will abuse notation and let $\bar{\lambda}^N_r(x)=\lambda^N_r(x)$ for all $x\in\RR^{\Sp\setminus\In}$.

 Given the $n$-th chain of intermediates $C^N_n(\cdot)$ appearing in relation to the process $X^N_\cdot$, we denote by $\{C^N_n(\cdot)\in C_{i\ell j}\}$ the event that $C^N_n(\cdot)$ originates from the reaction $y_i\rightarrow \is_\ell$ and eventually produces the final complex $y_j$. Such an event is measurable with respect to the $\sigma$-algebra $\sigma^N_\infty$ as introduced in \eqref{eq:sigmat}.  Furthermore, let $M^N_{i\ell j}(t)$ denote the number of the chains originated before time $t$ and such that $\{C^N_n(\cdot)\in C_{i\ell j}\}$:
 $$M^N_{i\ell j}(t)=\#\prg{n\colon C^N_n(\cdot)\in C_{i\ell j}\,,\,t^N_n\leq t}=\sum_{n=1}^{Y_{i\ell}\pr{\int_0^t \lambda^N_{i\ell}(X^N_s)ds}}\mathbbm{1}_{\{C^N_n(\cdot)\in C_{i\ell j}\}}.$$
 The processes $M^N_{i\ell j}(\cdot)$ are therefore arrival processes, and we might represent them in terms of independent and identically distributed unit-rate Poisson processes $Y_{i\ell j}(\cdot)$ such that
 \begin{equation}\label{eq:def_Y_ilj}
  M^N_{i\ell j}(t)=Y_{i\ell j}\pr{\int_0^t \pi^N_{\ell j}\lambda^N_{i\ell}(X^N_s)ds}.
 \end{equation}
 In this context, $Y_{i\ell}(t)=\sum_{j\in\fp}Y_{i\ell j}(t)$. Moreover, let $t_{i\ell j,n}^N$ be the time of the $n$-th jump of the process $M^N_{i\ell j}(\cdot)$, and let $\tau_{i\ell j,n}^N$ be a collection of independent random variables distributed as $\tau^N_1$ given $(C^N_1\in C_{i \ell j})$. We now consider the process counting the number of chains of intermediates $C^N_n(\cdot)$ consumed before time $t$ and such that $\{C^N_n(\cdot)\in C_{i\ell j}\}$. Such a process is distributed as
 $$\overline{M}_{i\ell j}^N(t)=\sum_{n=1}^{M^N_{i\ell j}(t)}\mathbbm{1}_{\{t_{i\ell j,n}^N+\tau_{i\ell j,n}^N\leq t\}}.$$
 For any time $t$, we have $M_{i\ell j}^N(t)\geq \overline{M}_{i\ell j}^N(t)$. The process $\widehat{X}^N_\cdot$ can be equivalently expressed as
 \begin{equation}\label{eq:X_t^N_hat}
  \widehat{X}^N_t=\widehat{X}^N_0+N^{-\alpha}\prq{\sum_{r\in\R^1}\xi_r Y_r\pr{\int_0^t \lambda^N_r(X^N_s)ds}+\sum_{i\in\sr}\sum_{j\in\fp}\pr{y_j\sum_{\ell\in\iii}\overline{M}_{i\ell j}^N(t)-y_i\sum_{\ell\in\iii} M_{i\ell j}^N(t)}},
 \end{equation}
 where the Poisson processes $Y_r(\cdot)$ are the same as those appearing in \eqref{eq:general_form_stochastic_reaction_network}. We will use this representation in the remaining part of the paper.
 
 We define the process $\widehat{Z}^N_\cdot$ on $N^{-\alpha}\NN^{\Sp\setminus\In}$ as 
 \begin{equation}\label{eq_Z}
  \widehat{Z}^N_t=\widehat{Z}^N_0+N^{-\alpha}\prq{\sum_{r\in\R^1}\xi_r Y_r\pr{\int_0^t \lambda^N_r(Z^N_s)ds}+\sum_{i\in\sr}\sum_{j\in\fp}(y_j-y_i)\sum_{\ell\in\iii} Y_{i\ell j}\pr{\int_0^t \pi^N_{\ell j}\lambda^N_{i\ell}(Z^N_s)ds}}.
 \end{equation}
 For any fixed $t\geq 0$, the random variables $\widehat{X}^N_t$ and $\widehat{Z}^N_t$ are measurable with respect to
 $$\sigma\pr{Y_r(s),Y_{i\ell j}(s),\tau_{i\ell j,n}^N\colon r\in\R^1, i\in\sr,\ell\in\iii,j\in\fp, n,N\in\NN \text{ and }0\leq s<\infty}.$$
 The above $\sigma$-algebra contains information about the Poisson processes $Y_r(\cdot)$ for reactions not involving intermediates, about the Poisson processes  $Y_{i\ell j}(\cdot)$ that drive $M^N_{i\ell j}(\cdot)$ and about the \emph{delays} $\tau_{i\ell j,n}^N$ of the reactions proceeding through intermediates species. It does not contain full information on the intermediate chains $C^N_n(\cdot)$, but that is not required in the description of the processes $\widehat{X}^N_\cdot$ and $\widehat{Z}^N_\cdot$. The   random variables we are interested in will all be measurable with respect to the above $\sigma$-algebra, and therefore are defined on the same probability space. Since $\widehat{Z}^N_t$ is, up to rescaling, expressed in the form (\ref{eq:general_form_stochastic_reaction_network}), it is distributed as the rescaled stochastic process associated with \eqref{eq:reduced_network}.

 There is a precise intuition behind the choice of $\widehat{Z}^N_t$ as approximating process for the original system. Consider \eqref{eq:X_t^N_hat}: if (\ref{eq:lifespan_going_to_zero_relaxed}) holds, then we expect the lifetime of the intermediate species to decrease with $N$. Thus, we could imagine that, for any fixed time $t$, $M_{i\ell j}^N(t)=\overline{M}_{i\ell j}^N(t)$ with high probability and, thus, that $\widehat{X}^N_t$ is approximated by
 \begin{equation}\label{eq:W^N_t}
  \widehat{W}^N_t\df\widehat{X}^N_0+N^{-\alpha}\pr{\sum_{r\in\R^1}\xi_r Y_r\pr{\int_0^t \lambda^N_r(X^N_s)ds}+\sum_{i\in\sr}\sum_{j\in\fp}(y_j-y_i)\sum_{\ell\in\iii} M_{i\ell j}^N(t)}.
 \end{equation}
 The process $\widehat{Z}_\cdot^N$ in (\ref{eq_Z}) is defined analogously to (\ref{eq:W^N_t}).
 
 Unfortunately, we cannot hope for $\widehat{X}_\cdot^N$ to converge weakly to $\widehat{Z}_\cdot^N$ in the Skorohod topology in general (cf.\ Example \ref{ex:weak_convergence}). However, we will show a uniform punctual convergence in probability as well as convergence in occupation measure for the difference of the stopped processes $\widehat{X}_{\cdot\wedge T}^N$ and $\widehat{Z}_{\cdot\wedge T}^N$, for any fixed $T>0$. Furthermore, we will give additional hypothesis under which the convergence in probability in the Skorohod space holds.
 
 \subsection{Bounded Reaction Rates}
 
 Recall that $\R^0$ in (\ref{eq:R^0}) is the set of reactions whose reactant is not an intermediate. Here we are concerned with the case when all reaction rates of reactions in $\R^0$ are bounded by a power of $N$, specifically for any $r\in\R^0$,
  \begin{equation}\label{eq:bounded_rates}
   N^{-\beta_r}\lambda^N_r(x)\leq B_r\quad\forall\,N\in\NN,\,\forall\,x\in \RR^{\Sp\setminus\In}_{\ge 0},
  \end{equation}
  where $\beta_r$ is as in Assumption \ref{ass:big_with_N}(\ref{ass:ratefct}) and $B_r$ is a positive constant (later the constant will also be referred to as $B_{i \ell}$ if in relation to the reaction $y_i\to H_\ell$). It is worth mentioning that in this case, \eqref{eq:hypothesis_O_1_a} in Assumption \ref{ass:big_with_N}(\ref{ass:rescale}) is always fulfilled if $\widehat{X}^N_0$ is stochastically bounded (cf.\ Remark \ref{rem:bounded_rates_O_1}). This is desirable because it suffices to control stochastic boundeness of a real random variable rather than of an entire stochastic process. Moreover, \eqref{eq:bounded_rates} can be assume to hold if the network is conservative and $\widehat{X}^N_0$ is bounded independently of $N$ (cf. Remark \ref{rem:conservative_case}).
 
 The proofs of Theorems \ref{thm:eq_true_for_bounded_rates} and \ref{thm:weak_convergence_bounded_rates} can be found in Section \ref{sec:proofs_bounded_rates}, using the relaxed version of Assumption \ref{ass:big_with_N}(\ref{ass:single_scale_system}) as given in Remark \ref{rem:relaxed_hypothesis}. The weaker condition is sufficient to prove Corollary \ref{cor:cor_to_bounded_rate_lemma} as well. 
 
 \begin{thm}\label{thm:eq_true_for_bounded_rates}
  Assume Assumption \ref{ass:big_with_N} is fulfilled for some $\alpha\in\RR^{\Sp\setminus\In}$. Further, assume that
  $$E\prq{\abs{\widehat{X}^N_0-\widehat{Z}^N_0}}\xrightarrow[N\rightarrow\infty]{}0,$$
  and that the initial amounts of the intermediate species are 0. Finally, assume that for any $r\in\R^0$, \eqref{eq:bounded_rates} holds and $\lambda_r$ is Lipschitz. Then, if $T$ is as in Assumption \ref{ass:big_with_N}(\ref{ass:rescale}), we have that
  \begin{equation}\label{eq_expectation_goes_to_zero}
   \sup_{t\in[0,T]}E\prq{\abs{\widehat{X}^N_t-\widehat{Z}^N_t}}\xrightarrow[N\rightarrow\infty]{}0,
  \end{equation}
  In particular, (\ref{eq_expectation_goes_to_zero}) implies  that for  all $\varepsilon>0$, 
  \begin{equation}\label{eq_to_be_proved}
   \sup_{t\in[0,T]}P\pr{\abs{\widehat{X}^N_t-\widehat{Z}^N_t}>\varepsilon}\xrightarrow[N\rightarrow\infty]{}0.
  \end{equation}
  Finally, for any continuous function $f\colon\RR^{\Sp\setminus\In}\to\RR$ we have
  \begin{equation}\label{eq:occupation_measure_bounded_rates}
   P\pr{\sup_{t\in[0,T]}\abs{\int_0^t\pr{f(\widehat{X}^N_s)-f(\widehat{Z}^N_s)}ds}>\varepsilon}\xrightarrow[N\rightarrow\infty]{}0.
  \end{equation}
 \end{thm}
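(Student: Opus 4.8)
The plan is to compare $\widehat{X}^N_\cdot$ and $\widehat{Z}^N_\cdot$ term by term using the common driving Poisson processes, and to show that the discrepancy between them is controlled by (a) the total number of intermediate chains that are still ``in flight'' at the relevant times, which is governed by Assumption \ref{ass:big_with_N}(\ref{ass:degradation_intermediates}), and (b) a Gr\"onwall-type argument handling the Lipschitz dependence of the rates on the state. Concretely, I would first write
$$
\widehat{X}^N_t-\widehat{Z}^N_t
= \widehat{X}^N_0-\widehat{Z}^N_0
+ N^{-\alpha}\sum_{r\in\R^1}\xi_r\Bigl(Y_r\bigl({\textstyle\int_0^t}\lambda^N_r(X^N_s)ds\bigr)-Y_r\bigl({\textstyle\int_0^t}\lambda^N_r(Z^N_s)ds\bigr)\Bigr)
+ E^N_t,
$$
where the error term $E^N_t$ collects the difference between $\sum_\ell \overline{M}^N_{i\ell j}(t)$ (appearing in \eqref{eq:X_t^N_hat}) and $\sum_\ell M^N_{i\ell j}(t)$, i.e.\ essentially $\sum_\ell(M^N_{i\ell j}(t)-\overline{M}^N_{i\ell j}(t))$ together with the difference of the $M^N$-type processes evaluated along $X^N$ versus the $Y_{i\ell j}$-integrals along $Z^N$.

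The core estimate is on $E[M^N_{i\ell j}(t)-\overline{M}^N_{i\ell j}(t)]$, the expected number of chains born before time $t$ but not yet absorbed. Using the representation $\overline{M}^N_{i\ell j}(t)=\sum_{n\le M^N_{i\ell j}(t)}\mathbbm{1}_{\{t^N_{i\ell j,n}+\tau^N_{i\ell j,n}\le t\}}$, conditioning on the Poisson process $Y_{i\ell j}$ and using independence of the delays $\tau^N_{i\ell j,n}$, the expected gap is bounded by the expected number of births in the last window of length $\sim N^{\alpha^*_j-\beta^*_\ell}\varepsilon/\pi^N_{\ell j}$ plus the mass of births with an atypically long delay; the bounded-rate hypothesis \eqref{eq:bounded_rates} turns the former into something of order $\pi^N_{\ell j}N^{\beta_{i\ell}}\cdot N^{\alpha^*_j-\beta^*_\ell}\varepsilon/\pi^N_{\ell j} = N^{\beta_{i\ell}-\beta^*_\ell}N^{\alpha^*_j}\varepsilon$, and the latter into $\pi^N_{\ell j}N^{\beta_{i\ell}}T\cdot p^\varepsilon_{\ell j}(N)$. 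After multiplying by $N^{-\alpha}$, these are exactly the quantities that \eqref{eq:lifespan_going_to_zero_relaxed} (together with $\beta_{i\ell}\le\beta^*_\ell$ and $\alpha^*_j\le\alpha(S)$ for $S\in y_j$) forces to zero — this is where Assumption \ref{ass:big_with_N}(\ref{ass:degradation_intermediates}) and (\ref{ass:single_scale_system}) do the work. The single-scale condition \eqref{eq:single_scale_limit} (in its relaxed form, Remark \ref{rem:relaxed_hypothesis}) guarantees that the prefactors $N^{-\alpha(S)}$ in front of each reaction channel dominate the corresponding rate scales $N^{\beta_r}$ and $N^{\beta_{i\ell}+\gamma_{\ell j}}$, so no term blows up.

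Having bounded the ``in-flight'' discrepancy, I would close the argument with Gr\"onwall. Taking absolute values and expectations, using that $|Y_r(a)-Y_r(b)|$ has mean $|a-b|$ and that $\lambda^N_r$ (after rescaling, via $\bar\lambda^N_r$ and the Lipschitz limit $\lambda_r$ — here I use that $N^{-\beta_r}\lambda^N_r$ is Lipschitz with a constant uniform in $N$, which follows from \eqref{eq:bounded_rates}, \eqref{eq:limit_rates} and local Lipschitzness; on the relevant compact set fixed by \eqref{eq:hypothesis_O_1_a} this is a genuine global Lipschitz bound) satisfies $N^{-\alpha}|\lambda^N_r(X^N_s)-\lambda^N_r(Z^N_s)|\le L|\widehat{X}^N_s-\widehat{Z}^N_s|$, one gets
$$
E\bigl[\,\sup_{u\le t}|\widehat{X}^N_u-\widehat{Z}^N_u|\,\bigr]
\le E|\widehat{X}^N_0-\widehat{Z}^N_0| + c_N + C\int_0^t E\bigl[\sup_{u\le s}|\widehat{X}^N_u-\widehat{Z}^N_u|\bigr]\,ds,
$$
with $c_N\to0$ by the previous paragraph and $E|\widehat{X}^N_0-\widehat{Z}^N_0|\to0$ by hypothesis, whence Gr\"onwall gives $\sup_{t\le T}E[\,|\widehat{X}^N_t-\widehat{Z}^N_t|\,]\le (E|\widehat{X}^N_0-\widehat{Z}^N_0|+c_N)e^{CT}\to0$, which is \eqref{eq_expectation_goes_to_zero}. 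One subtlety: \eqref{eq:hypothesis_O_1_a} only gives stochastic boundedness of $\widehat{X}^N_\cdot$, not of $\widehat{Z}^N_\cdot$, so I would first establish, using \eqref{eq:bounded_rates} and a separate a priori moment bound, that $\widehat{Z}^N_\cdot$ is likewise stochastically bounded on $[0,T]$, and run the Gr\"onwall argument on the high-probability event where both processes stay in a fixed compact set, absorbing the complement into an $O(\nu)$ term. Then \eqref{eq_to_be_proved} is immediate from \eqref{eq_expectation_goes_to_zero} by Markov's inequality, and \eqref{eq:occupation_measure_bounded_rates} follows because for continuous $f$, on a compact set $f$ is uniformly continuous, so $\sup_{t\le T}|\int_0^t(f(\widehat{X}^N_s)-f(\widehat{Z}^N_s))ds|\le T\,\omega_f(\delta)+2\|f\|_\infty\,\La\{s\le T:|\widehat{X}^N_s-\widehat{Z}^N_s|>\delta\}$, and the Lebesgue measure of the bad set has expectation $\le \delta^{-1}\int_0^T E|\widehat{X}^N_s-\widehat{Z}^N_s|ds\to0$ by \eqref{eq_expectation_goes_to_zero}; letting $\delta\to0$ after $N\to\infty$ finishes it. I expect the main obstacle to be the bookkeeping in the ``in-flight chains'' estimate — correctly matching the three sources of error (late births, long delays, and the $X^N$-vs-$Z^N$ mismatch in the arrival intensities themselves) against the three scaling quantities in Assumption \ref{ass:big_with_N}(\ref{ass:degradation_intermediates})--(\ref{ass:single_scale_system}), and making sure the Gr\"onwall constant $C$ is genuinely $N$-independent on the compact set.
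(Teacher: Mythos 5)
Your overall architecture is the same as the paper's: couple $\widehat{X}^N_\cdot$ and $\widehat{Z}^N_\cdot$ through the shared Poisson processes, isolate the ``in-flight'' discrepancy $M^N_{i\ell j}-\overline{M}^N_{i\ell j}$ and bound its expectation by a late-birth window plus a long-delay tail (this is exactly the paper's Lemma \ref{lem:difference_M}, which in addition handles by a subsequence argument the degenerate case in which the window $\varepsilon N^{\alpha^*_j-\beta^*_\ell}/\pi^N_{\ell j}$ exceeds $t$), establish stochastic boundedness of $\widehat{Z}^N_\cdot$ from \eqref{eq:bounded_rates} and the law of large numbers, and close with Gr\"onwall and Markov; your modulus-of-continuity argument for \eqref{eq:occupation_measure_bounded_rates} is a legitimate and in fact simpler alternative to the paper's occupation-measure compactness argument (note that $\sup_{t\le T}\abs{\int_0^t(f(\widehat{X}^N_s)-f(\widehat{Z}^N_s))ds}\le\int_0^T\abs{f(\widehat{X}^N_s)-f(\widehat{Z}^N_s)}ds$ plus Markov already suffices on the truncation event). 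However, there is a genuine flaw in your Gr\"onwall step: you run it on $E\prq{\sup_{u\le t}\abs{\widehat{X}^N_u-\widehat{Z}^N_u}}$ and claim the additive error $c_N\to0$ ``by the previous paragraph''. That paragraph only controls the in-flight count at a \emph{fixed} time, $E\prq{M^N_{i\ell j}(t)-\overline{M}^N_{i\ell j}(t)}$; to feed a sup-inside-the-expectation Gr\"onwall you would need $N^{-\alpha^*_j}E\prq{\sup_{t\le T}(M^N_{i\ell j}(t)-\overline{M}^N_{i\ell j}(t))}\to0$, which fails whenever $\alpha^*_j=0$: as soon as one chain is born the supremum is at least $1$, and this happens with probability bounded away from zero. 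Indeed $E\prq{\sup_{t\in[0,T]}\abs{\widehat{X}^N_t-\widehat{Z}^N_t}}\to0$ is false under the hypotheses of the theorem --- Example \ref{ex:weak_convergence} is a counterexample, and this is precisely why the uniform statement requires $\alpha>0$ (Theorem \ref{thm:weak_convergence_bounded_rates}). The repair is standard and is what the paper does: bound $E\abs{\widehat{X}^N_t-\widehat{Z}^N_t}$ at fixed $t$, observe the resulting bound is nondecreasing in $t$, and apply Gr\"onwall to $t\mapsto\sup_{u\le t}E\abs{\widehat{X}^N_u-\widehat{Z}^N_u}$, i.e.\ keep the supremum outside the expectation.

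A secondary, easily repaired misstep: the claim that $N^{-\beta_r}\lambda^N_r(N^\alpha\cdot)$ is Lipschitz with a constant uniform in $N$ does not follow from \eqref{eq:bounded_rates}, \eqref{eq:limit_rates} and local Lipschitzness of $\lambda_r$ --- uniform convergence on compacts to a Lipschitz limit gives no control on the Lipschitz constants of the prelimit functions, and in fact $\lambda^N_r$ extended by the floor function is piecewise constant, hence not Lipschitz at all. What you actually need (and what your displayed inequality can absorb, since it already carries an additive error) is the paper's device: on the compact set provided by \eqref{eq:hypothesis_O_1_a} and the a priori bound on $\widehat{Z}^N_\cdot$, replace $N^{-\beta_r}\lambda^N_r(N^\alpha x)$ by the Lipschitz limit $\lambda_r(x)$ up to a uniform-on-compacts error $o_\nu(N)$, and control the complement event by \eqref{eq:bounded_rates}, which contributes an $O(\nu)$ term. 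With these two corrections your outline coincides with the paper's proof.
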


 \begin{rem}\label{rem:bounded_rates_O_1}
  Assume that \eqref{eq:bounded_rates}, \eqref{eq:limsup_limited_hypothesis}, and \eqref{eq:single_scale} hold. Assume further that $\widehat{X}^N_0$ is stochastically bounded, meaning that for every $\nu>0$ there exists $\Upsilon_\nu$ such that for every $S\in\Sp\setminus\In$
  $$\limsup_{N\rightarrow\infty}P\pr{\widehat{X}^N_0(S)>\Upsilon_\nu}<\nu.$$
  Our aim is to prove \eqref{eq:hypothesis_O_1_a}. By \eqref{eq:X_t^N_hat}
  \begin{multline*}
   \sup_{t \in[0,T]}\widehat{X}_t^N(S)\leq X^N_0(S)+N^{-\alpha(S)}\sum_{r\in\R_S^1}\abs{\xi_r(S)}Y_r(N^{\beta_r}B_r T)+\\
   +N^{-\alpha(S)}\sum_{i\in\sr}\sum_{j\in\fp}2\pr{y_j(S)+y_i(S)}\sum_{\ell\in\iii}Y_{i\ell j}(\pi^N_{\ell j}N^{\beta_{i\ell}}B_{i\ell}T),
  \end{multline*}
  where $\R^1_S$ is defined according to (\ref{eq:R_S}). Using assumptions \eqref{eq:limsup_limited_hypothesis}, \eqref{eq:single_scale} and the Law of Large Numbers for Poisson processes to control the above expression for $\alpha(S)>0$, we obtain that, for any $\nu>0$, there exists $\Upsilon_\nu^\prime>0$, such that
  $$\limsup_{N\rightarrow\infty}P\pr{\sup_{t\in[0,T]}\widehat{X}_t^N(S)>\Upsilon_\nu^\prime}<\nu.$$
 \end{rem}
 
 \begin{rem}\label{rem:conservative_case}
  Conservative reaction networks are a special class of reaction networks \citep{horn:general_mass_action}. In a conservative reaction network, a positive linear combination of the species abundances is preserved throughout time and, hence, the total abundances are bounded from above given any initial condition.
  In such class of reaction networks, if $\widehat{X}^N_0$ is bounded uniformly on $N$ then condition (\ref{eq:bounded_rates}) is fulfilled. Indeed, if the original reaction network is conservative, then the reduced reaction network is conservative as well \citep{feliu:intermediates}. Let $\mathcal{S}_1$ and $\mathcal{S}_2$ denote the spaces spanned by the reaction vectors of the original and of the reduced network, respectively. Moreover let $\mathcal{S}=p(\mathcal{S}_1)\cup \mathcal{S}_2\subset \RR^{\Sp\setminus\In}$. It can be shown that $\mathcal{S}_2\subseteq p(\mathcal{S}_1)$, but this lies outside our concerns. The initial condition $\widehat{X}^N_0$ varies in a compact set $K_0$. Therefore, for any $r\in\R^0$, we might consider a modified version of the rate functions $\lambda_r^N$,  such that
  $$N^{-\beta_r}\lambda_r^N(N^{\alpha}x)=1\quad\forall x\notin (S+K_1)\cap\RR^{\Sp\setminus\In}_{\geq0},$$
  and $K_1\supset K_0$ is a compact set. Thus, the limit functions $\lambda_r$  in Assumption \ref{ass:big_with_N}(\ref{ass:ratefct})
  are 1 outside a compact set and therefore bounded. Due to \eqref{eq:limit_rates}, condition (\ref{eq:bounded_rates}) is met.
  In particular, it follows from Remark \ref{rem:bounded_rates_O_1} that in this case \eqref{eq:hypothesis_O_1_a} always holds.
 \end{rem}
 
 \begin{cor}\label{cor:cor_to_bounded_rate_lemma}
  Assume that the assumptions of Theorem \ref{thm:eq_true_for_bounded_rates} hold. Then, the difference between the processes $\widehat{X}^N_{\cdot\wedge T}$ and $\widehat{Z}^N_{\cdot\wedge T}$ converges in finite dimensional distribution to 0.
 \end{cor}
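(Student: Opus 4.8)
\medskip
\noindent\textbf{Proof proposal.}
The plan is to read this off directly from Theorem~\ref{thm:eq_true_for_bounded_rates}, specifically from the uniform punctual bound \eqref{eq_to_be_proved}, without any new estimate. First I would unwind the definition: fixing $k\in\NN$ and times $0\leq t_1<\dots<t_k$, finite-dimensional convergence of the difference process to $0$ means that the $\RR^{k\cdot\#(\Sp\setminus\In)}$-valued random vector $\bigl(\widehat{X}^N_{t_m\wedge T}-\widehat{Z}^N_{t_m\wedge T}\bigr)_{m=1}^k$ converges in distribution to the deterministic zero vector. Since the candidate limit is constant, I would invoke the standard equivalence between convergence in distribution to a constant and convergence in probability, so the task reduces to showing this random vector tends to $0$ in probability as $N\to\infty$.

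Next I would note that stopping at $T$ makes every evaluation time harmless: for each $m$ we have $t_m\wedge T\in[0,T]$, so $\widehat{X}^N_{t_m\wedge T}-\widehat{Z}^N_{t_m\wedge T}$ is precisely one of the random variables already controlled by the supremum in \eqref{eq_to_be_proved}; consequently $P\bigl(\lvert\widehat{X}^N_{t_m\wedge T}-\widehat{Z}^N_{t_m\wedge T}\rvert>\varepsilon\bigr)\leq\sup_{t\in[0,T]}P\bigl(\lvert\widehat{X}^N_t-\widehat{Z}^N_t\rvert>\varepsilon\bigr)\to0$ for every $\varepsilon>0$. Thus each coordinate converges to $0$ in probability. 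I would then upgrade coordinatewise convergence in probability to joint convergence in probability of the whole vector by a routine union bound over the finitely many coordinates, and conclude that the vector converges in distribution to $0$; since $k$ and the $t_m$ were arbitrary, this is exactly finite-dimensional convergence of the stopped difference process to $0$.

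There is no real obstacle here; the statement is essentially a corollary in the literal sense. The only points that deserve a word of care are the (elementary) equivalence between weak convergence to a constant and convergence in probability, and the observation that truncation at $T$ keeps all evaluation times inside $[0,T]$ so that \eqref{eq_to_be_proved} applies verbatim. It is worth stressing explicitly that \eqref{eq_to_be_proved} already supplies a bound uniform in $t\in[0,T]$, which is precisely what allows a single limit to dispatch all $k$ chosen times simultaneously.
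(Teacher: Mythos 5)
Your proposal is correct and follows essentially the same route as the paper: both deduce the corollary from the uniform bound \eqref{eq_to_be_proved} by a finite union bound over the chosen time points, the only cosmetic additions being your explicit remarks on the equivalence of weak convergence to a constant with convergence in probability and on the harmlessness of the stopping at $T$.
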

 \begin{proof}
  From Theorem \ref{thm:eq_true_for_bounded_rates} we have that (\ref{eq_to_be_proved}) holds for any $\varepsilon>0$. Thus, for any finite set of time points $\prg{t_m}_{m=0}^p\subseteq[0,T]$ we have that
  \begin{align*}
   P\pr{\max_{0\leq m\leq p}\abs{\widehat{X}^N_{t_m}-\widehat{Z}^N_{t_m}}>\varepsilon}&=P\pr{\bigcup_{m=0}^p\prg{\abs{\widehat{X}^N_{t_m}-\widehat{Z}^N_{t_m}}>\varepsilon}}\\
   &\leq\sum_{m=0}^p P\pr{\abs{\widehat{X}^N_{t_m}-\widehat{Z}^N_{t_m}}>\varepsilon}\xrightarrow[N\rightarrow\infty]{}0,
  \end{align*}
  hence the corollary holds.
 \end{proof}

 We discuss here some applications of Theorem \ref{thm:eq_true_for_bounded_rates} and Corollary \ref{cor:cor_to_bounded_rate_lemma}.
  
 \begin{ex}\label{ex:main_example_works}
  Consider the reaction network in Example \ref{ex:main_example}. Assumption \ref{ass:big_with_N}(\ref{ass:independence_of_the_dynamics}) holds. Further, if we let $\alpha(E)=0$ and $0<\alpha(R)=\alpha(P_1)=\alpha(P_2)<2$, then Assumption \ref{ass:big_with_N}(\ref{ass:rescale}-\ref{ass:single_scale_system}) are satisfied if we choose the initial value  $X_0^N$ proportional to the scaling $N^\alpha$ and $\beta_r$ according to Remark \ref{rem:massaction_beta}. Note that the reaction network is conservative in the sense of Remark \ref{rem:conservative_case}. Thus, \eqref{eq:bounded_rates} holds and by Theorem \ref{thm:eq_true_for_bounded_rates} and Corollary \ref{cor:cor_to_bounded_rate_lemma}, the probability distribution of the process associated with the reduced SRN approximates, in the sense of Theorem \ref{thm:eq_true_for_bounded_rates} and Corollary \ref{cor:cor_to_bounded_rate_lemma}, the probability distribution of the process \eqref{eq:example_with_rates_small}.
 \end{ex}
  
 \begin{ex}
  Consider the Michaelis-Menten mechanism taken with mass-action kinetics:
  \begin{center}
   \begin{tikzpicture}
    \node[state] (E+R) at (0,0) {$E+R$};
    \node[state] (\is) at (3,0) {$\is$};
    \node[state] (E+P) at (5,0) {$E+P$};
    \path[->] 
     (E+R) edge[bend left] node[pos=.45]{$k_0$} (\is)
     (\is) edge[bend left] node[pos=.55]{$k_1N^{\eta_1}$} (E+R)
           edge            node{$k_2N^{\eta_2}$} (E+P);
   \end{tikzpicture}
  \end{center}
  Assumption \ref{ass:big_with_N}(\ref{ass:independence_of_the_dynamics}) is satisfied, as well as (\ref{eq:bounded_rates}) since the network is conservative. The probability that a molecule of $\is$ is transformed into the complex $E+R$ is $k_1N^{\eta_1}/\pr{k_1N^{\eta_1}+k_2N^{\eta_2}}$, while the probability that it is transformed into the complex $E+P$ is $k_2N^{\eta_2}/\pr{k_1N^{\eta_1}+k_2N^{\eta_2}}$. The reduced SRN is given by
  \begin{center}
   \begin{tikzpicture}
    \node[state] (E+R) at (1,0) {$E+R$};
    \node[state] (E+P) at (5,0) {$E+P$};
    \path[->] 
     (E+R) edge node{$\frac{k_0k_2N^{\eta_2}}{k_1N^{\eta_1}+k_2N^{\eta_2}}$} (E+P);
   \end{tikzpicture}
  \end{center}
  If we let that $\alpha(E)=0$, $\alpha(R)<\eta_1\vee\eta_2$ and $\alpha(P)=\alpha(R)\wedge\pr{\alpha(R)+\eta_2-\eta_1}$, then Assumption \ref{ass:big_with_N}(\ref{ass:rescale}-\ref{ass:single_scale_system}) are satisfied if we choose the initial value  $X_0^N$ proportional to the scaling $N^\alpha$ and $\beta_r$ according to Remark \ref{rem:massaction_beta}. In this case, Theorem \ref{thm:eq_true_for_bounded_rates} and Corollary \ref{cor:cor_to_bounded_rate_lemma} state in which sense the original process is approximated by the one associated with the reduced SRN. The magnitudes of the molecular abundances are the same as in the original system.
  
  In the reduced SRN the amount of enzyme $E$ is conserved. Hence, the model can further be reduced to
  \begin{center}
   \begin{tikzpicture}
    \node[state] (R) at (1,0) {$R$};
    \node[state] (P) at (4,0) {$P$};
    \path[->] 
     (R) edge node{$\frac{E^0k_0k_2N^{\eta_2}}{k_1N^{\eta_1}+k_2N^{\eta_2}}$} (P);
   \end{tikzpicture}
  \end{center}
  where the amount of $E$ molecules  constantly equals $E^0$.
  
  Let $\delta=\alpha(R)+\min\prg{0,\eta_2-\eta_1}$. If $\delta<0$, we wait a time of order $O(N^{-\delta})$ for the first reaction to occur in the reduced SRN. Thus, we might rescale time in the original SRN by $\widetilde{t}=N^{\delta}t$. As shown in example \ref{ex:rescaling_of_time}, this is equivalent to rescale the rate functions. After rescaling, reduction can be performed again to obtain an approximation of the system's dynamics.
 \end{ex}
 
 The following example concerns a network where not all the rates out of intermediate states are high. Moreover, it shows that condition \eqref{eq:exp_lifespan_going_to_zero} is sufficient for \eqref{eq:lifespan_going_to_zero_relaxed} in Assumption \ref{ass:big_with_N}(\ref{ass:degradation_intermediates}) to hold, but it is not necessary.
 
 \begin{ex}\label{ex:negative_powers_N}
 Consider the SRN taken with mass-action kinetics,
  \begin{center}
   \begin{tikzpicture}
    \node[state] (A)    at (1,3) {$A$};
    \node[state] (\is1) at (3,3) {$\is_1$};
    \node[state] (\is2) at (3,1) {$\is_2$};
    \node[state] (B)    at (5,3) {$B$};
    \path[->] 
     (A)  edge node{$\lambda(x)$} (\is1)
     (\is1) edge node{$N^2$}        (B)   
          edge[bend left=30] node{$N$}      (\is2)   
     (\is2) edge[bend left=30] node{$N^{-2}$} (\is1);   
   \end{tikzpicture}
  \end{center}
  with $\alpha(A)=\alpha(B)=0$. Assumption \ref{ass:big_with_N} is fulfilled if we choose the initial value  $X_0^N$ proportional to the scaling $N^\alpha$ and $\beta_r$ according to Remark \ref{rem:massaction_beta}. This is true even though the consumption rate of $\is_2$ tends to zero. Moreover, the reaction network is conservative, thus by Theorem \ref{thm:eq_true_for_bounded_rates}, the reduced SRN
  \begin{center}
   \begin{tikzpicture}
    \node[state] (A)  at (1,3) {$A$};
    \node[state] (B)  at (3,3) {$B$};
    \path[->] 
     (A)  edge node{$\lambda(x)$} (B);
   \end{tikzpicture}
  \end{center}
  provides a good approximation of the dynamics of the original SRN, for $N$ large.
  
  Further, \eqref{eq:lifespan_going_to_zero} holds since for any fixed $\varepsilon>0$, the probability that a chain of intermediates survives for a time bigger than $\varepsilon$ goes to zero with $N\rightarrow\infty$. Hence by Proposition \ref{prop:sufficient_conditions}
  \eqref{eq:lifespan_going_to_zero_relaxed} holds as well.  However, in this case \eqref{eq:exp_lifespan_going_to_zero} does not hold. If we denote $A=y_3$ and $B=y_4$, this can be shown by making use of Proposition \ref{prop:mu_suff_ass} and 
  $$\mu_{32}^N(x)=\frac{Nk\lambda(x)}{N^2\cdot N^{-2}}=Nk\lambda(x)\xrightarrow[N\rightarrow\infty]{}\infty\quad\text{for any }x\in\RR^{\Sp\setminus\In}_{\ge 0}.$$ 
 \end{ex}
  
 For the particular case $\alpha>0$, a stronger convergence result than those stated in Theorem \ref{thm:eq_true_for_bounded_rates} holds. The result does not hold generally for all $\alpha$, as shown in Example \ref{ex:weak_convergence}.
  
 \begin{thm}\label{thm:weak_convergence_bounded_rates}
  Assume the assumptions of Theorem \ref{thm:eq_true_for_bounded_rates} are fulfilled and  that $\alpha>0$. Then, for any $\varepsilon>0$,
  \begin{equation}\label{eq:weak_convergence_bounded_rates}
   P\pr{\sup_{t\in[0,T]}\abs{\widehat{X}^N_{t}-\widehat{Z}^N_{t}}>\varepsilon}\xrightarrow[N\rightarrow\infty]{}0.
  \end{equation}
  In particular, this implies that the difference between the processes $\widehat{X}^N_{\cdot\wedge T}$ and $\widehat{Z}^N_{\cdot\wedge T}$ converges weakly to 0 in the Skorohod topology.
 \end{thm}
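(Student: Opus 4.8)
\emph{Proof plan.} The plan is to strengthen the punctual estimate of Theorem~\ref{thm:eq_true_for_bounded_rates} to a uniform one, exploiting that when $\alpha>0$ every jump of $\widehat{X}^N_\cdot$ and $\widehat{Z}^N_\cdot$ has size $\mathcal{O}(N^{-\min_{S}\alpha(S)})\to 0$. Set $\Delta X^N_t\df\widehat{X}^N_t-\widehat{Z}^N_t$. Combining \eqref{eq:X_t^N_hat} with the identity $\overline{M}^N_{i\ell j}=M^N_{i\ell j}-D^N_{i\ell j}$, where $D^N_{i\ell j}(t)$ is the number of chains of type $(i,\ell,j)$ started but not yet consumed by time $t$, and comparing with \eqref{eq_Z}, the first step is the semimartingale decomposition
\begin{equation*}
 \Delta X^N_t=\Delta X^N_0+\mathcal{I}^N_t+\mathcal{M}^N_t-\mathcal{E}^N_t,
\end{equation*}
where $\mathcal{I}^N_t$ gathers the drift differences $N^{-\alpha}\sum_{r\in\R^1}\xi_r\int_0^t(\lambda^N_r(X^N_s)-\lambda^N_r(Z^N_s))ds$ and $N^{-\alpha}\sum_{i,\ell,j}(y_j-y_i)\int_0^t\pi^N_{\ell j}(\lambda^N_{i\ell}(X^N_s)-\lambda^N_{i\ell}(Z^N_s))ds$, the process $\mathcal{M}^N_t$ is the corresponding sum of compensated time-changed Poisson processes (a martingale for the large filtration, since $X^N_\cdot,Z^N_\cdot$ are adapted to it), and $\mathcal{E}^N_t\df N^{-\alpha}\sum_{i\in\sr}\sum_{j\in\fp}y_j\sum_{\ell\in\iii}D^N_{i\ell j}(t)$ is the accumulated delay error.

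For the drift I would work on the event $\mathcal{A}^N$ that both $\widehat{X}^N_\cdot$ and $\widehat{Z}^N_\cdot$ stay in a fixed compact set $K$ on $[0,T]$, which has probability tending to $1$ by \eqref{eq:hypothesis_O_1_a} and, for $\widehat{Z}^N_\cdot$, by the same type of estimate as in Remark~\ref{rem:bounded_rates_O_1} applied to \eqref{eq_Z}. On $\mathcal{A}^N$ the uniform convergence \eqref{eq:limit_rates} on $K$ and the Lipschitz hypothesis on the limit rates give $\abs{\lambda^N_r(X^N_s)-\lambda^N_r(Z^N_s)}\le N^{\beta_r}(L\abs{\Delta X^N_s}+o(1))$ and likewise for $\lambda^N_{i\ell}$; together with the relaxed single-scale bounds \eqref{eq:limsup_limited_hypothesis}--\eqref{eq:single_scale}, which make $N^{\beta_r-\alpha(S)}$ and $\pi^N_{\ell j}N^{\beta_{i\ell}-\alpha(S)}$ bounded whenever the species $S$ is affected by the corresponding reaction, this gives $\abs{\mathcal{I}^N_t}\le C\int_0^t\abs{\Delta X^N_s}\,ds+o(1)$ on $\mathcal{A}^N$. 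For the martingale, \eqref{eq:bounded_rates} bounds all rates by powers of $N$, so, since the jumps of the $S$-th component come from the symmetric differences of the time changes, the optional quadratic variation satisfies
\[
 E\bigl[[\mathcal{M}^N(S)]_T\bigr]\le N^{-2\alpha(S)}\Bigl(\sum_{r}\xi_r(S)^2N^{\beta_r}B_rT+\sum_{i,\ell,j}(y_j-y_i)(S)^2\pi^N_{\ell j}N^{\beta_{i\ell}}B_{i\ell}T\Bigr);
\]
by the same single-scale inequalities each summand is $\mathcal{O}(N^{-\alpha(S)})$, which tends to $0$ precisely because $\alpha(S)>0$, so Doob's $L^2$-inequality gives $E\prq{\sup_{t\le T}\abs{\mathcal{M}^N_t}^2}\to 0$.

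The main obstacle is the delay error: one must show $\sup_{t\in[0,T]}\abs{\mathcal{E}^N_t}\to 0$ in probability, whereas the proof of Theorem~\ref{thm:eq_true_for_bounded_rates} only controls it at each fixed $t$; here $\alpha>0$ is used a second time. Fix $(i,\ell,j)$ and $\varepsilon>0$, set $\eta_N\df N^{\alpha^*_j}\varepsilon/(N^{\beta^*_\ell}\pi^N_{\ell j})$ (the threshold in \eqref{eq:p_ell_j}), and note that a chain in progress at time $t$ either has lifetime exceeding $\eta_N$ or was started within $(t-\eta_N,t]$, so that
\begin{equation*}
 D^N_{i\ell j}(t)\le\#\prg{n\le M^N_{i\ell j}(T)\colon\tau^N_{i\ell j,n}>\eta_N}+\sup_{t\le T}\pr{M^N_{i\ell j}(t)-M^N_{i\ell j}((t-\eta_N)^+)}.
\end{equation*}
The expectation of the first term is at most $\pi^N_{\ell j}N^{\beta_{i\ell}}B_{i\ell}T\,p^\varepsilon_{\ell j}(N)$, so (using $\beta_{i\ell}\le\beta^*_\ell$) after multiplication by $N^{-\alpha^*_j}$ it is dominated by $B_{i\ell}T\cdot\pi^N_{\ell j}N^{\beta^*_\ell-\alpha^*_j}p^\varepsilon_{\ell j}(N)\to 0$ by \eqref{eq:lifespan_going_to_zero_relaxed} in Assumption~\ref{ass:big_with_N}(\ref{ass:degradation_intermediates}), and Markov's inequality finishes it. For the second term, $M^N_{i\ell j}$ has intensity bounded by $\rho_N\df\pi^N_{\ell j}N^{\beta_{i\ell}}B_{i\ell}$, with $\rho_N\eta_N\le\varepsilon B_{i\ell}N^{\alpha^*_j}$ and $\rho_NT\le c_{\ell j}B_{i\ell}TN^{\alpha^*_j}$ by \eqref{eq:limsup_limited_hypothesis}; a standard bound on the maximal increment of a Poisson-dominated counting process over sliding windows of length $\eta_N$ controls it, with probability tending to $1$, by $\mathcal{O}(\rho_N\eta_N)+\mathcal{O}(\log N)$, so dividing by $N^{\alpha^*_j}$ yields $\mathcal{O}(\varepsilon)+\mathcal{O}(N^{-\alpha^*_j}\log N)$, in which the second piece vanishes because $\alpha^*_j=\min_{S\in y_j}\alpha(S)>0$ (the support of $y_j$ consists of non-intermediates and $\alpha>0$). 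Letting $\varepsilon\downarrow 0$ and summing over the finitely many triples yields $\sup_{t\le T}\abs{\mathcal{E}^N_t}\to 0$ in probability.

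Finally, on $\mathcal{A}^N$ the inequalities above give $\abs{\Delta X^N_t}\le\abs{\Delta X^N_0}+o(1)+\sup_{u\le t}\abs{\mathcal{M}^N_u}+\sup_{u\le t}\abs{\mathcal{E}^N_u}+C\int_0^t\abs{\Delta X^N_s}\,ds$, whence Gronwall's inequality yields $\sup_{t\le T}\abs{\Delta X^N_t}\le e^{CT}\bigl(\abs{\Delta X^N_0}+o(1)+\sup_{t\le T}\abs{\mathcal{M}^N_t}+\sup_{t\le T}\abs{\mathcal{E}^N_t}\bigr)$. Since $E\prq{\abs{\widehat{X}^N_0-\widehat{Z}^N_0}}\to 0$ and the three random quantities on the right tend to $0$ in probability by the previous steps, and $P((\mathcal{A}^N)^c)\to 0$, \eqref{eq:weak_convergence_bounded_rates} follows. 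As supremum-norm convergence on $[0,T]$ dominates the Skorohod metric, the difference $\widehat{X}^N_{\cdot\wedge T}-\widehat{Z}^N_{\cdot\wedge T}$ then converges to the zero process in probability, hence weakly, in the Skorohod topology.
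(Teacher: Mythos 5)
Your argument is correct in substance, but it follows a genuinely different route from the paper's. The paper proves \eqref{eq:weak_convergence_bounded_rates} by discretizing $[0,T]$ with mesh $\delta$: on each subinterval the oscillations of both $\widehat{X}^N_\cdot$ and $\widehat{Z}^N_\cdot$ are dominated by time-changed Poisson increments whose rescaled means are $O(\delta)$ (plus the $M-\overline{M}$ terms, handled through Lemma~\ref{lem:difference_M}), so by the Law of Large Numbers for Poisson processes and $\alpha>0$ they stay below $\varepsilon/3$ with probability tending to one once $\delta$ is small, while the values at the finitely many grid points are controlled by the finite-dimensional convergence of Corollary~\ref{cor:cor_to_bounded_rate_lemma}, i.e.\ by recycling Theorem~\ref{thm:eq_true_for_bounded_rates}. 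You instead prove the uniform bound directly: a semimartingale decomposition of $\widehat{X}^N-\widehat{Z}^N$, Doob's inequality for the compensated time-changed Poisson noise (whose bracket is $O(N^{-\alpha(S)})$, which is where you use $\alpha>0$ the first time), a uniform-in-time strengthening of Lemma~\ref{lem:difference_M} via the lifetime threshold of \eqref{eq:p_ell_j} combined with a sliding-window maximal bound (where $\alpha^*_j>0$ kills the $\log N$ term, the second use of $\alpha>0$), and a pathwise Gronwall step on the event that both processes remain in a compact set. Your route is self-contained --- it does not invoke Theorem~\ref{thm:eq_true_for_bounded_rates} and in effect re-derives a uniform version of it --- and it isolates cleanly the one genuinely new difficulty, namely that the delay error $\mathcal{E}^N$ must be controlled uniformly in $t$ rather than at fixed times; the price is machinery the paper's discretization avoids by working only with first moments and the LLN: you should justify that the compensated differences such as $Y_r\pr{\int_0^t\lambda^N_r(X^N_s)ds}-Y_r\pr{\int_0^t\lambda^N_r(Z^N_s)ds}$ are martingales for a common filtration compatible with the random time changes (standard, e.g.\ via Kurtz's multiparameter time-change results, but worth stating), fix the harmless missing factor $2$ in the quadratic-variation bound (the bracket is dominated by the points crossed by either clock), and spell out the sliding-window estimate (a union bound over the $O\pr{N^{\beta^*_\ell-\alpha^*_j}}$ windows, which is polynomial in $N$ since $\eta_N\geq\varepsilon N^{\alpha^*_j-\beta^*_\ell}$, together with Poisson tail bounds). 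With these details filled in, your proof is valid under exactly the hypotheses of the theorem, including the relaxed conditions \eqref{eq:limsup_limited_hypothesis}--\eqref{eq:single_scale} used by the paper.
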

   
 \subsection{Unbounded Reaction Rates}
 In this section, we will relax the hypothesis of boundedness in Theorem \ref{thm:eq_true_for_bounded_rates}. To begin with, we introduce some new notation. Assume Assumption \ref{ass:big_with_N} is fulfilled and let $\R^*$ be defined as in \eqref{eq:reduced_network}. Define
 $$\beta^*_{ij}=\max_{\ell\in\iii}\prg{\beta_{ij},\beta_{i\ell}+\gamma_{\ell j}},$$
 where $\beta_{ij},\beta_{i\ell}$ is as in Assumption \ref{ass:big_with_N}(\ref{ass:ratefct}).  We have that for any reaction $r\in \R^*$,
 \begin{equation}\label{eq:def_lambda^*}
  N^{-\beta^*_r}\lambda_r^{N,*}(Z^N_t)\xrightarrow[N\rightarrow\infty]{}\lambda^*_r(\widehat{Z}_t),
 \end{equation}
 where $\lambda_r^{N,*}(\cdot)$ is defined in \eqref{eq:def_of_rates_Z} and $\prg{\lambda^*_r}_{r\in \R^*}(\cdot)$ is a set of locally Lipschitz functions such that
 $$v\in\RR^{\Sp}_{>0}\Rightarrow\lambda^*_r\pr{v}>0$$
 (Assumption \ref{ass:big_with_N}(\ref{ass:ratefct})).
 As in \cite{popovich:rescale}, we distinguish between fast and slow reactions. Let
 \begin{align*}
  \R^f&=\bigcup_{S \colon \alpha(S)>0} \prg{y_i\rightarrow y_j\in\R^*_S \colon \alpha(S)=\beta^*_{ij}}\\
  \R^s&=\bigcup_{S \colon \alpha(S)=0} \prg{y_i\rightarrow y_j\in\R^*_S \colon \alpha(S)=\beta^*_{ij}}.
 \end{align*}
 Moreover, let the vector $\xi^*_r\in\RR^{\Sp}$ be defined by its entries
 $$\xi^*_r(S)\df\lim_{N\rightarrow\infty}N^{\beta^*_r-\alpha(S)}\xi_r(S).$$
 Specifically, $\xi^*_r(S)=\xi_r(S)$, if $\alpha(S)=\beta^*_r$, and $\xi^*_r(S)=0$, otherwise.
 
 \begin{lem}\label{lemma:limit_process}
  
  Assume Assumption \ref{ass:big_with_N} is fulfilled for some $\alpha\in\RR^{\Sp\setminus\In}$ and let $T$ be as in Assumption \ref{ass:big_with_N}(\ref{ass:independence_of_the_dynamics}). Assume that up to time $T$, there exists a unique and almost surely well-defined solution to the equation
  \begin{equation}\label{eq:limit_process}
    Z^*_t=Z^*_0+\sum_{r\in\R^s}\xi^*_r Y_r\pr{\int_0^t \lambda^*_r(Z^*_s)ds}+\sum_{r\in\R^f}\xi^*_r\int_0^t\lambda^*_r(Z^*_s)ds,
  \end{equation} 
  where the functions $\lambda^*_r$ are the limit functions (\ref{eq:def_lambda^*}). Then, if $\widehat{Z}^N_0$ converges in probability to $Z^*_0$, the process $\widehat{Z}^N_{\cdot\wedge T}$ converges in probability to $Z^*_{\cdot\wedge T}$ with respect to the Skorohod distance.
 \end{lem}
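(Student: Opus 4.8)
The plan is to recognise $\widehat{Z}^N_\cdot$ as the rescaled process of the reduced SRN $(\Sp\setminus\In,\C\setminus\In,\R^*)$ with rates $\lambda^{N,*}_r$ (cf.\ \eqref{eq:def_of_rates_Z}, \eqref{eq:def_lambda^*}), and to obtain \eqref{eq:limit_process} as its multiscale limit by the random time change and law of large numbers arguments familiar from \cite{ball:asymptotic,kurtz:rescale,popovich:rescale}; the assumed existence and uniqueness of the solution to \eqref{eq:limit_process} then lets me close a Gronwall estimate without having to build the limit process. First I would, by additivity of independent Poisson processes and a random time change, represent $\sum_{\ell\in\iii}Y_{i\ell j}\pr{\int_0^t\pi^N_{\ell j}\lambda^N_{i\ell}(Z^N_s)\,ds}$, together with $Y_{ij}\pr{\int_0^t\lambda^N_{ij}(Z^N_s)\,ds}$ when $y_i\to y_j\in\R^1$, as a single term $Y_r\pr{\int_0^t\lambda^{N,*}_r(Z^N_s)\,ds}$, so that \eqref{eq_Z} takes the form
\[
\widehat{Z}^N_t=\widehat{Z}^N_0+\sum_{r\in\R^*}N^{-\alpha}\xi_r\,Y_r\pr{\int_0^t\lambda^{N,*}_r(Z^N_s)\,ds},
\]
and I couple $\widehat{Z}^N_\cdot$ with $Z^*_\cdot$ by driving the slow reactions $r\in\R^s$ of \eqref{eq:limit_process} with the same Poisson processes $Y_r$. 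Fixing a non-intermediate species $S$, the single-scale bound \eqref{eq:single_scale_limit} gives $\beta^*_r\le\alpha(S)$ for every $r\in\R^*_S$, so relative to $S$ each such reaction is \emph{negligible} ($\beta^*_r<\alpha(S)$), \emph{fast} ($r\in\R^f$, $\beta^*_r=\alpha(S)>0$), or \emph{slow} ($r\in\R^s$, $\beta^*_r=\alpha(S)=0$).

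I would carry out all estimates on a localised interval $[0,\rho^N_M\wedge T]$ with $\rho^N_M=\inf\prg{t\colon\abs{\widehat{Z}^N_t}\vee\abs{Z^*_t}\ge M}$, on which the rescaled rates are uniformly bounded by a constant times $N^{\beta^*_r}$ and $N^{-\beta^*_r}\lambda^{N,*}_r(N^\alpha x)\to\lambda^*_r(x)$ uniformly, cf.\ \eqref{eq:def_lambda^*}. For a negligible reaction, $N^{-\alpha(S)}\abs{\xi_r(S)}\,Y_r(cN^{\beta^*_r}T)=\mathcal{O}(N^{\beta^*_r-\alpha(S)})\to0$ uniformly on $[0,T]$ by the law of large numbers for the Poisson process. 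For a fast reaction I split $Y_r(u)=u+\widetilde{Y}_r(u)$: Doob's $L^2$ inequality shows the martingale part contributes $\mathcal{O}(N^{-\alpha(S)/2})$ in $L^2$, hence $0$ in the limit since $\alpha(S)>0$, while the drift equals $\int_0^tN^{-\beta^*_r}\lambda^{N,*}_r(N^\alpha\widehat{Z}^N_s)\,ds$, which I match against $\int_0^t\lambda^*_r(Z^*_s)\,ds$ using uniform rate convergence and local Lipschitzness of $\lambda^*_r$. For a slow reaction the Poisson process is retained and only $\int_0^t\lambda^{N,*}_r(Z^N_s)\,ds\to\int_0^t\lambda^*_r(Z^*_s)\,ds$ is needed, after which a random time change lemma transfers the convergence through $Y_r$. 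Collecting these and applying a Gronwall inequality on $[0,\rho^N_M\wedge T]$ yields $\sup_{t\le\rho^N_M\wedge T}\abs{\widehat{Z}^N_t-Z^*_t}\to0$ in probability; a standard bootstrap — using that $Z^*$ is a.s.\ well defined up to $T$, hence leaves the ball of radius $M$ before $T$ with probability tending to $0$ as $M\to\infty$, uniformly in $N$, so that $P(\rho^N_M\le T)\to0$ — upgrades this to $\sup_{t\le T}\abs{\widehat{Z}^N_t-Z^*_t}\to0$ in probability, which is stronger than convergence in the Skorohod topology.

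The hard part will be the Gronwall step in the presence of the slow stochastic terms and of a limit equation that is only \emph{locally} Lipschitz: the integrated rates must be compared \emph{before} entering the Poisson processes $Y_r$, $r\in\R^s$, so no global Lipschitz constant is available and the comparison is entangled with the bound one is trying to establish. Making this precise needs the localisation by $\rho^N_M$ together with the c\`adl\`ag structure of the $Y_r$ — each has only finitely many jumps on the compact time-changed interval, so off an event of small probability and away from the finitely many times at which the limiting integrated rate meets such a jump, a small perturbation of the time argument perturbs the value negligibly — and, when the limiting trajectory can touch the boundary of $\RR^{\Sp\setminus\In}_{\ge0}$, the fact (in the spirit of \eqref{eq:hypothesis_O_1_b}) that it spends zero Lebesgue time there, which keeps the local Lipschitz constants integrable. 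A subordinate technicality is making rigorous the opening random time change representation and the coupling of $\widehat{Z}^N_\cdot$ with $Z^*_\cdot$, since \eqref{eq_Z} is written through the auxiliary processes $Y_{i\ell j}$ rather than one Poisson process per reaction of $\R^*$.
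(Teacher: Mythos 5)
There is a genuine gap, and it sits exactly where you place the ``hard part'': the estimate you aim for, $\sup_{t\in[0,T]}\abs{\widehat{Z}^N_t-Z^*_t}\to0$ in probability, is false in general, so no Gronwall argument can deliver it. Whenever $\R^s\neq\emptyset$, the limit $Z^*_\cdot$ retains jumps of size $\abs{\xi^*_r(S)}\geq1$ in components with $\alpha(S)=0$, and $\widehat{Z}^N_\cdot$ has jumps of the same (unrescaled) size in those components. Under your coupling the two processes jump when their respective integrated intensities $\int_0^t\lambda^{N,*}_r(Z^N_s)ds$ and $\int_0^t\lambda^*_r(Z^*_s)ds$ cross the arrival times of the shared Poisson clock; these integrated intensities converge but are not equal, so the corresponding jump times differ by a small, generically nonzero amount, and on the interval between the two jump times the paths differ by at least $1$. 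Since slow reactions fire before $T$ with probability bounded away from zero (whenever the limiting slow rates are positive with positive probability), $P\pr{\sup_{t\in[0,T]}\abs{\widehat{Z}^N_t-Z^*_t}>1/2}$ does not vanish. Your own caveat --- that the perturbation is harmless ``away from the finitely many times at which the limiting integrated rate meets such a jump'' --- concedes precisely the exceptional intervals on which the sup-norm is lost, and they cannot be discarded in a uniform statement. This is why the lemma is stated in the Skorohod distance, and why the paper upgrades to uniform convergence only when $\alpha>0$ (proof of Theorem \ref{thm:main_theorem}), in which case $\R^s=\emptyset$ and $Z^*_\cdot$ is continuous. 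Your argument essentially works in that special case, and, run in expectation at fixed $t$ after localisation, it yields pointwise-in-$t$ and finite-dimensional statements in general; but to get Skorohod convergence in probability you must additionally construct a time deformation $\eta^N$ with $\sup_{t\in[0,T]}\abs{\eta^N(t)-t}\to0$ that aligns the crossing times of the integrated slow intensities, and prove uniform closeness along $\eta^N$. That step is absent.

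For context, the paper's proof is a one-line reduction: $Z^N_\cdot$ is the process of a single-time-scale system, and the convergence is exactly Lemma 2.8 of \cite{popovich:rescale}; your plan amounts to re-proving that lemma, which is legitimate but must then contain the jump-alignment step above. A secondary repair: your opening replacement of $\sum_{\ell\in\iii}Y_{i\ell j}\pr{\int_0^t\pi^N_{\ell j}\lambda^N_{i\ell}(Z^N_s)ds}$ (plus the $\R^1$ term) by a single $Y_r\pr{\int_0^t\lambda^{N,*}_r(Z^N_s)ds}$ holds only in distribution, because the split of $\lambda^{N,*}_{ij}$ among the channels $\ell$ is state-dependent; since the lemma asserts convergence in probability, the coupling between \eqref{eq_Z} and \eqref{eq:limit_process} must be fixed pathwise before any estimate is made, and this interacts with the jump-alignment issue rather than being purely cosmetic.
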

 
 \begin{proof}
  Just note that, in our setting, $Z^N_\cdot$ is the process associated to a \emph{single-scale system} satisfying the condition of Lemma 2.8 in \cite{popovich:rescale}, and the result follows.
 \end{proof}

 \begin{ex}
  Consider again Example \ref{ex:main_example}. In Example \ref{ex:main_example_works}, we saw that the reduced SRN approximate the behaviour of  \eqref{eq:example_with_rates_small} for $N$ large, in the sense of Theorem \ref{thm:eq_true_for_bounded_rates} and Corollary \ref{cor:cor_to_bounded_rate_lemma}. Here we present a weak limit for the process of the reduced reaction network, given by Lemma \ref{lemma:limit_process}. It is easy to check that the probabilities $\pi^N_{13}$, $\pi^N_{14}$, $\pi^N_{23}$ and $\pi^N_{24}$ tend to $2/3$, $1/3$, $1/3$ and $2/3$, respectively, for $N\rightarrow\infty$. The weak limit is given by the deterministic system
  $$\begin{cases}
     \displaystyle x_t(E)=x_0(E)\\
     \displaystyle x_t(R)=x_0(R)+x_0(E)\int_0^t\prbig{k_3x_s(P_2)-(k_1+k_2)x_s(R)}ds\\
     \displaystyle x_t(P_1)=x_0(P_1)+x_0(E)\int_0^t\frac{2k_1+k_2}{3}x_s(R)ds\\
     \displaystyle x_t(P_2)=x_0(P_2)+x_0(E)\int_0^t\pr{\frac{k_1+2k_2}{3}x_s(R)-k_3x_s(P_2)}ds,
    \end{cases}$$
  where, according to the choice of $\alpha$, the counts of the species $E$ and the (scaled) concentrations of the species $R,P_1,P_2$ are considered.
 \end{ex}
 
 \begin{thm}\label{thm:main_theorem}
  Assume that the hypotheses of Lemma \ref{lemma:limit_process} are satisfied. Moreover, assume that both $\widehat{X}^N_0$ and $\widehat{Z}^N_0$ converge in probability to $Z^*_0$.
  Then, for any $\varepsilon>0$,
  \begin{equation}\label{eq_to_be_proved_unbounded_rates}
   \sup_{t\in[0,T]}P\pr{\abs{\widehat{X}^N_t-Z^*_t}>\varepsilon}\xrightarrow[N\rightarrow\infty]{}0.
  \end{equation}
  Moreover, for any continuous function $f\colon\RR^{\Sp\setminus\In}\to\RR$ we have
  \begin{equation}\label{eq:occupation_measure_unbounded_rates}
   P\pr{\sup_{t\in[0,T]}\abs{\int_0^t\pr{f(\widehat{X}^N_s)-f(Z^*_s)}ds}>\varepsilon}\xrightarrow[N\rightarrow\infty]{}0.
  \end{equation}
  Finally, if $\alpha>0$ then
  \begin{equation}\label{eq:weak_convergence}
   P\pr{\sup_{t\in[0,T]}\abs{\widehat{X}^N_{t}-Z^*_{t}}>\varepsilon}\xrightarrow[N\rightarrow\infty]{}0.
  \end{equation}
  The latter gives weak convergence of $\widehat{X}^N_{\cdot\wedge T}$ to $Z^*_{\cdot\wedge T}$ in the Skorohod topology.
 \end{thm}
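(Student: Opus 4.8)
The strategy is to factor the argument through the process $\widehat{Z}^N_\cdot$ and use the triangle inequality
$$\abs{\widehat{X}^N_t-Z^*_t}\leq\abs{\widehat{X}^N_t-\widehat{Z}^N_t}+\abs{\widehat{Z}^N_t-Z^*_t},$$
so that it suffices to control each of the two terms separately in each of the three modes of convergence claimed. The second term is the easy one: since we are assuming the hypotheses of Lemma \ref{lemma:limit_process} and that $\widehat{Z}^N_0\to Z^*_0$ in probability, Lemma \ref{lemma:limit_process} gives $\widehat{Z}^N_{\cdot\wedge T}\to Z^*_{\cdot\wedge T}$ in probability in the Skorohod topology. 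From convergence in the Skorohod topology to a limit with almost surely no fixed discontinuities (which holds here, the limit being a solution of \eqref{eq:limit_process}) one deduces punctual convergence $\widehat{Z}^N_t\to Z^*_t$ in probability for each fixed $t$, convergence of the occupation integrals $\int_0^t f(\widehat{Z}^N_s)\,ds\to\int_0^t f(Z^*_s)\,ds$ for continuous $f$ (uniformly in $t\in[0,T]$, via $\sup$-norm control and boundedness of $f$ on the relevant compacts), and, when $\alpha>0$, uniform convergence $\sup_{[0,T]}\abs{\widehat{Z}^N_t-Z^*_t}\to0$ in probability, because the limit process has continuous paths when $\alpha>0$ (all surviving reactions being fast, the $Y_r$-driven jump part disappears) and Skorohod convergence to a continuous limit upgrades to uniform convergence.

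The first term, $\abs{\widehat{X}^N_t-\widehat{Z}^N_t}$, is where the bulk of the work sits, but here the plan is to reuse the bounded-rate machinery. Under the present hypotheses, $\alpha>0$ when needed and, more importantly, the existence of the well-defined limit \eqref{eq:limit_process} together with Assumption \ref{ass:big_with_N} should let us localize: for each $\nu>0$ pick a compact set $K\subset\RR^{\Sp\setminus\In}$ such that, with probability at least $1-\nu$ uniformly in $N$, the processes $\widehat{X}^N$ and $\widehat{Z}^N$ stay in $K$ on $[0,T]$ (this uses \eqref{eq:hypothesis_O_1_a} for $\widehat{X}^N$, and for $\widehat{Z}^N$ the tightness coming from Lemma \ref{lemma:limit_process}). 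On the event that both processes remain in $K$, the rate functions $\lambda^N_r$, $r\in\R^0$, restricted to $N^\alpha K$ are bounded after division by $N^{\beta_r}$ — by \eqref{eq:limit_rates} they converge uniformly on the compact $K$ to the locally Lipschitz $\lambda_r$, hence are eventually uniformly bounded and uniformly Lipschitz there. Thus after replacing each $\lambda^N_r$ by a globally bounded, globally Lipschitz modification agreeing with it on $N^\alpha K$, the hypotheses of Theorem \ref{thm:eq_true_for_bounded_rates} are met for the modified network, giving $\sup_{t\in[0,T]}E\abs{\widehat{X}^{N,\mathrm{mod}}_t-\widehat{Z}^{N,\mathrm{mod}}_t}\to0$, hence \eqref{eq_to_be_proved}, \eqref{eq:occupation_measure_bounded_rates}, and (since $\alpha>0$) \eqref{eq:weak_convergence_bounded_rates} for the modified processes. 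Since the modified and unmodified processes coincide on the high-probability event, and $\nu$ is arbitrary, the same three conclusions transfer to $\abs{\widehat{X}^N_t-\widehat{Z}^N_t}$.

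Combining: punctual convergence in probability of both terms gives \eqref{eq_to_be_proved_unbounded_rates}; the occupation-integral statements for both terms give \eqref{eq:occupation_measure_unbounded_rates} (using $\abs{\int_0^t(f(\widehat{X}^N_s)-f(Z^*_s))ds}\leq\abs{\int_0^t(f(\widehat{X}^N_s)-f(\widehat{Z}^N_s))ds}+\int_0^T\abs{f(\widehat{Z}^N_s)-f(Z^*_s)}ds$ and dominated convergence / continuity of $f$ on compacts for the second piece); and, when $\alpha>0$, the uniform statements for both terms give \eqref{eq:weak_convergence}, which since $Z^*_{\cdot\wedge T}$ has continuous paths is equivalent to weak convergence of $\widehat{X}^N_{\cdot\wedge T}$ to $Z^*_{\cdot\wedge T}$ in the Skorohod topology. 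I expect the main obstacle to be the localization argument in the first term: one must check that truncating the rate functions outside $N^\alpha K$ does not interfere with Assumption \ref{ass:big_with_N}(\ref{ass:degradation_intermediates}) and (\ref{ass:single_scale_system}) — i.e.\ that the $\pi^N_{\ell j}$, the delays $\tau^N_n$, and the scaling exponents $\beta_r$, $\gamma_{\ell j}$ are unaffected by the modification (they are, since the modification only changes $\lambda^N_r$ for $r\in\R^0$ on a region the processes avoid, and Assumption \ref{ass:independence_of_the_dynamics_without_N} keeps the intermediate dynamics decoupled) — and that the high-probability confinement event can indeed be chosen uniformly in $N$.
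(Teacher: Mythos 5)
Your plan follows essentially the same route as the paper's proof: localize by truncating the rate functions outside a ball that $\widehat{X}^N$, $\widehat{Z}^N$ and $Z^*$ leave only with probability at most $\nu$ (the paper's radius $\Psi^*_\nu$ is built from the a.s.\ well-definedness of $Z^*$ and \eqref{eq:hypothesis_O_1_a}), apply Theorems \ref{thm:eq_true_for_bounded_rates} and \ref{thm:weak_convergence_bounded_rates} to the truncated processes, and transfer back on the high-probability coincidence event. The remaining leg from $\widehat{Z}^N$ to $Z^*$ is handled exactly as in the paper, via Lemma \ref{lemma:limit_process}, the continuous mapping theorem for the occupation integrals, continuity of $Z^*$ when $\alpha>0$, and the triangle inequality, so your proposal is correct and matches the paper's argument.
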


 \begin{proof}
  Since $Z^*$ is almost surely unique and well defined, we have that for any $\nu>0$, there exists a constant $\Psi_\nu>0$ such that
  $$P\pr{\sup_{t\in[0,T]}\abs{Z^*_t}>\Psi_\nu}<\nu.$$
  Since the number of species is finite, due to (\ref{eq:hypothesis_O_1_a}) there exists some constant $\Upsilon^*_{\nu}>0$ such that for $N$ large enough
  $$P\pr{\sup_{t\in[0,T]}\abs{\widehat{X}_t^N}>\Upsilon^*_{\nu}}<\nu.$$
  Let
  $$\Psi_\nu^*=\max\prg{\Psi_\nu,\Upsilon^*_{\nu}}.$$
  Moreover, let $D(h)$ denote the disc of radius $h$ in $\RR_{\geq0}^{\Sp\setminus \In}$ centred in the origin, with respect to the euclidean norm. For any $r\in\R^0$, we define $\lambda^N_{b,r}(\cdot)$ such that
  $$\lambda^N_{b,r}(x)=\begin{cases}
                        \lambda^N_r(x)&\text{if }x\in D(\Psi^*_\nu)\\
                        (1-|x|-\Psi^*_\nu)\lambda^N_r\pr{\frac{\Psi^*_\nu}{|x|}x}+(|x|-\Psi^*_\nu)N^{\beta_r}&
                        \text{if }x\in D(\Psi^*_\nu+1)\setminus D(\Psi^*_\nu)\\
                        N^{\beta_r}&\text{otherwise.}   
                       \end{cases}
  $$
  These functions are Lipschitz and define a new kinetics $\K^N_b$. Let $X^N_{b,\cdot}$, $Z^N_{b,\cdot}$ and $Z^*_{b,\cdot}$ be the corresponding processes, with
  \[X^N_{b,0}=X^N_0\mathbbm{1}_{D(\Psi^*_\nu)}(\widehat{X}^N_0), \quad Z^N_{b,0}=Z^N_0\mathbbm{1}_{D(\Psi^*_\nu)}(\widehat{Z}^N_0)\quad\text{and}\quad Z^*_{b,0}=Z^*_0\mathbbm{1}_{D(\Psi^*_\nu)}(Z^*_0).\]
  With this choice, we have
  \[P\pr{X^N_{b,0}=X^N_0}\geq 1-\nu,\quad P\pr{Z^N_{b,0}=Z^N_0}\geq 1-\nu\quad\text{and}\quad P\pr{Z^*_{b,0}=Z^*_0}\geq 1-\nu,\]
  at least for $N$ large enough (by hypothesis $\widehat{Z}^N_0$ converges in probability to $Z^*_0$). Therefore
  \begin{align*}
   P\pr{\sup_{t\in[0,T]}\abs{Z^*_{b,t}}>\Psi^*_\nu}&\leq P\pr{\sup_{t\in[0,T]}\abs{Z^*_t}>\Psi^*_\nu}+\nu<2\nu,\\
   P\pr{\sup_{t\in[0,T]}\abs{\widehat{Z}^N_t}>\Psi^*_\nu}&\leq P\pr{\sup_{t\in[0,T]}\abs{\widehat{Z}^N_{b,t}}>\Psi^*_\nu}+\nu.
  \end{align*}
  The rates $\lambda^N_{b,r}(\cdot)$ satisfy the condition in Theorem \ref{thm:eq_true_for_bounded_rates} and
  $$E\prq{\abs{\widehat{X}^N_{b,0}-\widehat{Z}^N_{b,0}}}\xrightarrow[N\to\infty]{}0.$$
  From Theorem \ref{thm:eq_true_for_bounded_rates}, we have
  $$\sup_{t\in[0,T]}P\pr{\abs{\widehat{X}^N_{b,t}-\widehat{Z}^N_{b,t}}>\varepsilon}\xrightarrow[N\rightarrow\infty]{}0,$$
  and by Lemma \ref{lemma:limit_process},
  $$P\pr{\sup_{t\in[0,T]}\abs{\widehat{Z}^N_t}>\Psi^*_\nu}\leq P\pr{\sup_{t\in[0,T]}\abs{\widehat{Z}^N_{b,t}}>\Psi^*_\nu}+\nu\xrightarrow[N\rightarrow\infty]{}P\pr{\sup_{t\in[0,T]}\abs{Z^*_{b,t}}>\Psi^*_\nu}+\nu<3\nu.$$  
  Putting it all together, we have
  \begin{multline*}
    \limsup_{N\rightarrow\infty}\sup_{t\in[0,T]}P\pr{\abs{\widehat{X}^N_t-\widehat{Z}^N_t}>\varepsilon}\\
    \leq\limsup_{N\rightarrow\infty}\sup_{t\in[0,T]}P\pr{\abs{\widehat{X}^N_t-\widehat{Z}^N_t}>\varepsilon,\sup_{t\in[0,T]}\pr{\abs{\widehat{X}^N_t}\vee\abs{\widehat{Z}^N_t}}>\Psi^*_\nu}+\\
    +\limsup_{N\rightarrow\infty}\sup_{t\in[0,T]}P\pr{\abs{\widehat{X}^N_t-\widehat{Z}^N_t}>\varepsilon,\sup_{t\in[0,T]}\pr{\abs{\widehat{X}^N_t}\vee\abs{\widehat{Z}^N_t}}\leq\Psi^*_\nu}\\
    \leq\limsup_{N\rightarrow\infty}P\pr{\sup_{t\in[0,T]}\pr{\abs{\widehat{X}^N_t}\vee\abs{\widehat{Z}^N_t}}>\Psi^*_\nu}+\limsup_{N\rightarrow\infty}\sup_{t\in[0,T]}P\pr{\abs{\widehat{X}^N_{b,t}-\widehat{Z}^N_{b,t}}>\varepsilon} <4\nu.
  \end{multline*}
  Since $\nu>0$ is arbitrary, we have \eqref{eq_to_be_proved}. Similarly,
  \begin{multline*}
    \limsup_{N\rightarrow\infty}P\pr{\sup_{t\in[0,T]}\abs{\int_0^t\pr{f(\widehat{X}^N_s)-f(\widehat{Z}^N_s)}ds}>\varepsilon}\\
    \leq\limsup_{N\rightarrow\infty}P\pr{\sup_{t\in[0,T]}\abs{\int_0^t\pr{f(\widehat{X}^N_s)-f(\widehat{Z}^N_s)}ds}>\varepsilon,\sup_{t\in[0,T]}\pr{\abs{\widehat{X}^N_t}\vee\abs{\widehat{Z}^N_t}}>\Psi^*_\nu}+\\
    +\limsup_{N\rightarrow\infty}P\pr{\sup_{t\in[0,T]}\abs{\int_0^t\pr{f(\widehat{X}^N_s)-f(\widehat{Z}^N_s)}ds}>\varepsilon,\sup_{t\in[0,T]}\pr{\abs{\widehat{X}^N_t}\vee\abs{\widehat{Z}^N_t}}\leq\Psi^*_\nu}\\
    \leq\limsup_{N\rightarrow\infty}P\pr{\sup_{t\in[0,T]}\pr{\abs{\widehat{X}^N_t}\vee\abs{\widehat{Z}^N_t}}>\Psi^*_\nu}+\limsup_{N\rightarrow\infty}P\pr{\sup_{t\in[0,T]}\abs{\int_0^t\pr{f(\widehat{X}^N_{b,s})-f(\widehat{Z}^N_{b,s})}ds}>\varepsilon} <4\nu,
  \end{multline*}
  which implies that \eqref{eq:occupation_measure_bounded_rates} holds.
 Since $\widehat{Z}^N_\cdot$ converges in probability to $Z^*_\cdot$ in the Skorohod space, by a version of the continuous mapping theorem \cite[Section 5.4]{hoffmann:probability} it follows that  
  $$P\pr{\sup_{t\in[0,T]}\abs{\int_0^t\pr{f(\widehat{Z}^N_s)-f(Z^*_s)}ds}>\varepsilon}\xrightarrow[N\rightarrow\infty]{}0,$$
  where we used that the Skorohod distance for continuous functions is equivalent to the uniform distance.
  Hence, \eqref{eq:occupation_measure_unbounded_rates} is a consequence of triangular inequality. By similar arguments and by Theorem \ref{thm:weak_convergence_bounded_rates}, if $\alpha>0$ we have
  \begin{align*}
   P\pr{\sup_{t\in[0,T]}\abs{\widehat{X}^N_{t}-\widehat{Z}^N_{t}}>\varepsilon}\leq  4\nu + P\pr{\sup_{t\in[0,T]}\abs{\widehat{X}^N_{b,t}-\widehat{Z}^N_{b,t}}>\varepsilon}\xrightarrow[N\rightarrow\infty]{}4\nu.
  \end{align*}
  If $\alpha>0$ then $Z^*_\cdot$ is continuous, therefore by Lemma \ref{lemma:limit_process} we have
  $$P\pr{\sup_{t\in[0,T]}\abs{\widehat{Z}^N_t-Z^*_t}>\varepsilon}\xrightarrow[N\rightarrow\infty]{}0.$$
  The proof is then concluded by the arbitrariness of $\nu$, and the triangular inequality.
 \end{proof}

 \begin{rem}
  Convergence of the processes $\widehat{X}^N_{\cdot\wedge T}$ to the process $Z^*_{\cdot\wedge T}$ in occupation measure is implied by \eqref{eq:occupation_measure_unbounded_rates} \cite[Theorem 4.5]{kallenberg:lectures}.
 \end{rem}
 
 \begin{cor}\label{cor:cor_to_general_case}
  Assume that the hypotheses of Lemma \ref{lemma:limit_process} are satisfied. Then, the difference between the processes $\widehat{X}^N_{\cdot\wedge T}$ and $Z^*_{\cdot\wedge T}$ converges in finite dimensional distribution to 0.
 \end{cor}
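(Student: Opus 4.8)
The plan is to read off the statement from the punctual-in-probability convergence \eqref{eq_to_be_proved_unbounded_rates} furnished by Theorem \ref{thm:main_theorem}, exactly as Corollary \ref{cor:cor_to_bounded_rate_lemma} was deduced from \eqref{eq_to_be_proved} in Theorem \ref{thm:eq_true_for_bounded_rates}. First I would record that the hypotheses in force (those of Lemma \ref{lemma:limit_process}, together with the convergence in probability of $\widehat{X}^N_0$ and $\widehat{Z}^N_0$ to $Z^*_0$ required by Theorem \ref{thm:main_theorem}) put us in the setting of Theorem \ref{thm:main_theorem}, so for every $\varepsilon>0$ we have $\sup_{t\in[0,T]}P\pr{\abs{\widehat{X}^N_t-Z^*_t}>\varepsilon}\xrightarrow[N\to\infty]{}0$.

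Next I would fix an arbitrary finite collection of time points $\prg{t_m}_{m=0}^p\subseteq[0,\infty)$ and reduce to $[0,T]$: for $t_m\geq T$ the stopped processes satisfy $\widehat{X}^N_{t_m\wedge T}=\widehat{X}^N_T$ and $Z^*_{t_m\wedge T}=Z^*_T$, so the displayed bound applies at every $t_m\wedge T\in[0,T]$. A union bound then gives
\begin{align*}
 P\pr{\max_{0\leq m\leq p}\abs{\widehat{X}^N_{t_m\wedge T}-Z^*_{t_m\wedge T}}>\varepsilon}
 &\leq\sum_{m=0}^p P\pr{\abs{\widehat{X}^N_{t_m\wedge T}-Z^*_{t_m\wedge T}}>\varepsilon}\\
 &\leq (p+1)\sup_{t\in[0,T]}P\pr{\abs{\widehat{X}^N_t-Z^*_t}>\varepsilon}\xrightarrow[N\rightarrow\infty]{}0.
\end{align*}
Since $\varepsilon>0$ and the points $t_0,\dots,t_p$ were arbitrary, the random vector $\pr{\widehat{X}^N_{t_0\wedge T},\dots,\widehat{X}^N_{t_p\wedge T}}-\pr{Z^*_{t_0\wedge T},\dots,Z^*_{t_p\wedge T}}$ converges to $0$ in probability, hence in distribution, which is precisely convergence in finite dimensional distribution of the difference of the stopped processes to $0$.

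I do not expect a genuine obstacle here: all the analytic work is contained in Theorem \ref{thm:main_theorem}, and the corollary is a one-line union-bound consequence, the only minor bookkeeping being the reduction of arbitrary time points to $[0,T]$ via the stopping at $T$. Accordingly I would keep the written proof to just a few lines.
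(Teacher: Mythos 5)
Your proposal is correct and follows essentially the same route as the paper: the paper likewise deduces the corollary from \eqref{eq_to_be_proved_unbounded_rates} of Theorem \ref{thm:main_theorem} via a union bound over a finite set of time points in $[0,T]$, exactly as in Corollary \ref{cor:cor_to_bounded_rate_lemma}. Your extra bookkeeping (making explicit the initial-condition hypotheses needed to invoke Theorem \ref{thm:main_theorem} and reducing arbitrary time points to $[0,T]$ via the stopping at $T$) is harmless and, if anything, slightly more careful than the paper's one-line argument.
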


 \begin{proof}
  The proof is identical to the proof of Corollary \ref{cor:cor_to_bounded_rate_lemma}. Indeed, from Theorem \ref{thm:main_theorem}, we have that \eqref{eq_to_be_proved_unbounded_rates} holds for any $\varepsilon>0$. Thus, for any finite set of time points $\prg{t_m}_{m=0}^p\subseteq[0,T]$, we have that
  \begin{align*}
   P\pr{\max_{0\leq m\leq p}\abs{\widehat{X}^N_{t_m}-Z^*_{t_m}}>\varepsilon}&=P\pr{\bigcup_{m=0}^p\prg{\abs{\widehat{X}^N_{t_m}-Z^*_{t_m}}>\varepsilon}}\\
   &\leq\sum_{m=0}^p P\pr{\abs{\widehat{X}^N_{t_m}-Z^*_{t_m}}>\varepsilon}\xrightarrow[N\rightarrow\infty]{}0,
  \end{align*}
  and the result follows.
 \end{proof}

 \section{Discussion}
 
 We close by presenting  a collection of examples and remarks.  A particular strength of our approach is that the reduced reaction network  is easily found from the original reaction network and that the reaction rates of the reduced SRN can be found through a simple algebraic procedure. If the definition of intermediate species is relaxed, it might still be possible to find an approximating reduced SRN in concrete cases.  However, a  general technique does not seem to present itself easily. 

 We assume mass-action kinetics unless otherwise specified. If the stoichiometric coefficient of the intermediates were allowed to be different from one, or if different intermediate species were allowed to interact, our results would not be true in general:
 
 \begin{ex}[Relaxing the definition of intermediates, I]
Consider the SRN
  \begin{center}
   \begin{tikzpicture}
    \node[state] (A)  at (1,0)  {$A$};
    \node[state] (3\is) at (3,1)  {$3\is$};
    \node[state] (B)  at (5,1)  {$B$};
    \node[state] (2\is) at (3,-1) {$2\is$};
    \node[state] (C)  at (5,-1) {$C$};
    \path[->] 
     (A)  edge node{$k_1$} (3\is)
     (3\is) edge node{$N$}   (B)
     (A)  edge node{$k_2$} (2\is)
     (2\is) edge node{$N$}   (C);
   \end{tikzpicture}
  \end{center}
  with $\alpha=0$. A single molecule of $\is$ could be trapped as the two reactions $3\is\rightarrow B$ and $2\is\rightarrow A$ compete against each other. Thus, there does not exist an approximation without intermediates as in Theorem \ref{thm:eq_true_for_bounded_rates} or Theorem \ref{thm:main_theorem}. An approximation with no fast species, however, still exists. Since the dynamics of the system changes depending on whether a molecule of $\is$ is present or not, we might introduce two dummy variables $D_1$ and $D_2$ with $D_1+D_2=1$, and $D_1=1$ if and only if no molecules of $\is$ are present. Let $\widehat{p}$ denote the projection onto the space of non-dummy variables. The finite dimensional distributions of $p(X^N_\cdot)$ are approximated by the finite dimensional distributions of $\widehat{p}(Z^N_\cdot)$, where $Z^N_\cdot$ is the process associated with
  \begin{center}
   \begin{tikzpicture}
    \node[state] (A+D1)  at (1,0)  {$A+D_1$};
    \node[state] (B+D1)  at (3,1)  {$B+D_1$};
    \node[state] (C+D1)  at (3,0)  {$C+D_1$};
    \node[state] (C+D2)  at (3,-1) {$C+D_2$};
    \node[state] (A+D2)  at (5,0)  {$A+D_2$};
    \node[state] (B+D2)  at (7,1)  {$B+D_2$};
    \node[state] (2C+D1) at (7,-1) {$2C+D1$};
    \path[->] 
     (A+D1)  edge node{} (B+D1)
             edge node{} (C+D1)
             edge node{} (C+D2)
     (A+D2)  edge node{} (B+D2)
             edge node{} (C+D2)
             edge node{} (B+D1)
             edge node{} (2C+D1);
   \end{tikzpicture}
  \end{center}
  for a suitable choice of kinetics and with initial conditions $X_0(D_1)=1$ and $X_0(D_2)=0$. A general reduction technique that can deal with examples of this kind is subject of further investigation.  Similar arguments can be made if intermediate species are interacting, for example, if $3H$ and $2H$ are replaced by $H_1+H_2$ and $H_1$, respectively. 
 \end{ex}
 
 \begin{ex}[Relaxing the definition of intermediates, II]
Consider the SRN below with  $\alpha(C)=\alpha(F)=1$ and $\alpha(A)=\alpha(B)=\alpha(D)=\alpha(E)=0$:
 \begin{center}
  \begin{tikzpicture}
   \node[state] (A)       at (3,1)   {$A$};
   \node[state] (\is_1+\is_2) at (5.5,1) {$\is_1+\is_2$};
   \node[state] (B)       at (3,0)   {$B$};
   \node[state] (\is_1)     at (5,0)   {$\is_1$};
   \node[state] (\is_2)     at (5,-1)  {$\is_2$};
   \node[state] (C)       at (3,-1)  {$C$};
   \node[state] (D)       at (8,1)   {$D$};
   \node[state] (E)       at (7,0)   {$E$};
   \node[state] (F)       at (7,-1)  {$F$};
   \path[->] 
    (A)       edge node{$k_1$} (\is_1+\is_2)
    (\is_1+\is_2) edge node{$N^7$} (D)
    (B)       edge node{$k_2$} (\is_1)
    (\is_1)     edge node{$N$}   (E)
    (C)       edge node{$k_3$} (\is_2)
    (\is_2)     edge node{$N^2$} (F);
  \end{tikzpicture}
 \end{center}
Here, a reaction of type $C\rightarrow \is_2$ can occur before a present molecule of $\is_1$ is consumed, leading to the production of $D$ from $\is_1+\is_2\rightarrow D$. It can be shown that the right limit is given by the rescaled process associated with
 \begin{center}
  \begin{tikzpicture}
   \node[state] (A)       at (3,1)   {$A$};
   \node[state] (B)       at (3,0)   {$B$};
   \node[state] (C)       at (7,.5)  {$C$};
   \node[state] (D)       at (5,1)   {$D$};
   \node[state] (E)       at (5,0)   {$E$};
   \node[state] (F)       at (9,.5)  {$F$};
   \path[->] 
    (A) edge node{$k_1$} (D)
    (B) edge node{}      (D)
    (B) edge node{}      (E)
    (C) edge node{$k_3$} (F);
  \end{tikzpicture}
 \end{center}
 where
$$\lambda^N_{B\rightarrow E}(x)=k_2x(B)\frac{N}{N+k_3x(C)},\qquad
   \lambda^N_{B\rightarrow D}(x)=k_2x(B)\frac{k_3x(C)}{N+k_3x(C)}.$$
 If we change the rate constant of $\is_1\rightarrow E$ to $N^2$ and let $\alpha(B)=\alpha(E)=1$, a different reduced SRN is obtained in which a new complex appears:
 \begin{center}
  \begin{tikzpicture}
   \node[state] (A)   at (3,0)   {$A$};
   \node[state] (B+C) at (7.5,0) {$B+C$};
   \node[state] (B)   at (9,0) {$B$};
   \node[state] (C)   at (12,0) {$C$};
   \node[state] (D)   at (5,0)   {$D$};
   \node[state] (E)   at (11,0) {$E$};
   \node[state] (F)   at (14,0) {$F$};
   \path[->] 
    (A)   edge node{$k_1$}                (D)
    (B)   edge node{$k_2$}                (E)
    (C)   edge node{$k_3$}                (F)
    (B+C) edge node[swap]{$\frac{k_2k_3}{N^2}$} (D);
  \end{tikzpicture}
 \end{center}
 \end{ex}

 It would be desirable to state Theorem \ref{thm:main_theorem} in terms of the stronger notion of convergence in probability in the Skorohod space, or at least in terms of the weak convergence in the Skorohod topology. This is done for $\alpha>0$ (cf. Theorems \ref{thm:eq_true_for_bounded_rates} and \ref{thm:main_theorem}), however it cannot be done in general as shown in the next example.

 \begin{ex}[Weak convergence]\label{ex:weak_convergence}
  Consider
  \begin{center}
   \begin{tikzpicture}
    \node[state] (A) at (1,1) {$A$};
    \node[state] (\is) at (3,1) {$\is$};
    \node[state] (B) at (5,1) {$B$};
    \path[->] 
     (A) edge node{$k$} (\is)
     (\is) edge node{$N$} (B);
   \end{tikzpicture}
  \end{center}
  with $\alpha(A)=\alpha(B)=0$, and the limit process $\widehat{Z}^N_\cdot$ associated with the SRN
  \begin{center}
   \begin{tikzpicture}
    \node[state] (A) at (1,1) {$A$};
    \node[state] (B) at (3,1) {$B$};
    \path[->] 
     (A) edge node{$k$} (B);
   \end{tikzpicture}
  \end{center}
  Since the reduced SRN does not depend on $N$, we omit $N$ in the notation. The possible states of $\widehat{Z}_\cdot$ satisfy the conservation law $\widehat{Z}_t(A)+\widehat{Z}_t(B)=M$ for some fixed $M$. In contrast, whenever the reaction $A\rightarrow \is$ occurs in $X^N_\cdot$, for a short amount of time at least, $\widehat{X}^N_t(A)+\widehat{X}^N_t(B)\leq M-1$. The latter situation happens with positive probability, such that
  $$E\prq{\inf_{[0,T]}\pr{\widehat{X}^N_t(A)+\widehat{X}^N_t(B)}}-E\prq{\inf_{[0,T]}\pr{\widehat{Z}_t(A)+\widehat{Z}_t(B)}}\xrightarrow[N\rightarrow\infty]{}c\neq0.$$
  Hence, $\widehat{Z}_{\cdot\wedge T}$ does not provide a weak limit in the Skorohod topology for $\widehat{X}^N_{\cdot\wedge T}$. In fact, in this particular case the sequence of processes $\widehat{X}^N_{\cdot\wedge T}$ cannot have a weak limit in the Skorohod topology, since the sequence of the corresponding distributions $P^N$ is not tight. 
 \end{ex}

 A natural question arising from the results of this paper is whether the reduced reaction network could be used to approximate the limit behaviour of the full model as $t\rightarrow\infty$. Specifically, we want to investigate whether for all Borel sets $A\subset\RR^{\Sp\setminus\In}$, it holds that
 \begin{equation}\label{eq:conv_lim_distribution}
  \lim_{t\rightarrow\infty}P\pr{\widehat{X}^N_t\in A}-\lim_{t\rightarrow\infty}P\pr{\widehat{Z}^N_t\in A}\xrightarrow[N\rightarrow\infty]{}0,
 \end{equation}
 under the hypothesis that the limits exist. The answer is negative, as it is shown with the next example.
  
 \begin{ex}[Limit behaviour in the stochastic setting]
  Consider the following SRN.
  \begin{center}
   \begin{tikzpicture}
    \node[state] (A)     at (1,1)   {$A$};
    \node[state] (\is)   at (3,1.5) {$\is$};
    \node[state] (B)     at (5,1)   {$B$};
    \node[state] (empty) at (7,1)   {$0$};
    \path[->] 
     (A)   edge            node{$k_1$}        (\is)
     (\is) edge            node{$N$}          (B)
     (B)   edge[bend left] node{$k_2$}        (A)
           edge            node{$\lambda(x)$} (empty);
   \end{tikzpicture}
  \end{center}
  Let $\alpha(A)=\alpha(B)=0$, assume that $X^N_0(A)+X^N_0(B)=M$ and $X^N_0(\is)=0$, and let
  $$\lambda(x)=(M-x(A)-x(B))\mathbbm{1}_{(0,\infty)}(x(B)).$$
  The first occurrence of the reaction $B\rightarrow0$ can only take place when $H$ is present. Though it is unlikely for big $N$, there is still a positive probability that this happens, i.e.\ that $B\rightarrow0$ occurs before the reaction $\is\rightarrow B$ takes place. With probability one, all molecules of $B$ will eventually be consumed, and the limit distribution of the above SRN is therefore concentrated on the state $0$.
   
  The  SRN satisfies the assumptions of Theorem \ref{thm:eq_true_for_bounded_rates} and those of Theorem \ref{thm:main_theorem}. The reduced reaction network is given by
  \begin{center}
   \begin{tikzpicture}
    \node[state] (A)     at (1,1) {$A$};
    \node[state] (B)     at (3,1) {$B$};
    \node[state] (empty) at (5,1) {$0$};
    \path[->] 
     (A) edge[bend left=30] node{$k_1$} (B)
     (B) edge[bend left=30] node{$k_2$} (A)
         edge node{$\lambda(x)$} (empty);
   \end{tikzpicture}
  \end{center}
  where the initial conditions are the same as in the bigger model. Since in the reduced SRN $\lambda(Z^N_t)=0$ whenever $Z^N_t(A)+Z^N_t(B)=M$, then the reaction $B\to 0$ never occurs. This implies that  the reduced SRN is equivalent to
  \begin{center}
   \begin{tikzpicture}
    \node[state] (A)     at (1,1) {$A$};
    \node[state] (B)     at (3,1) {$B$};
    \path[->] 
     (A) edge[bend left=30] node{$k_1$} (B)
     (B) edge[bend left=30] node{$k_2$} (A);
   \end{tikzpicture}
  \end{center}
  The limit distribution of the above SRN is concentrated on the set $\prg{x\colon x(A)+x(B)=M}$. Therefore it is clear that (\ref{eq:conv_lim_distribution}) does not hold in this case. However, the limit distribution of the latter SRN approximates the \emph{quasi-stationary distribution} of the original SRN when $N$ tends to infinity, if we condition on the event that the reaction $B\to 0$ has not taken place; see for example  \cite{anderson:product-form,anderson:ACR} for a discussion on stationary and quasi-stationary distributions in reaction network theory.
 \end{ex}
 
 \section{Proof of Theorem \ref{thm:model_comparison} and Proposition \ref{prop:mu_suff_ass}}\label{sec:proofs}
 
 This section is devoted to prove Theorem \ref{thm:model_comparison} and Proposition \ref{prop:mu_suff_ass}. First, recall the transition rate matrix \eqref{eq:transition_matrix}. Consider the continuous time Markov chain $C^x$ with state space $\SR\sqcup\In\sqcup\FP$ (disjoint union) and transition rate matrix given by 
 
 {\renewcommand\arraystretch{1.9}
  $$Q(x)=\left[\begin{array}{C|C|C}
                 Q_{\sr, \sr}^x & Q_{\sr, \ii}^x & 0                     \\ \hline
                 0                        & Q_{\ii, \ii}   & Q_{\ii, \fp} \\ \hline
                 0                        & 0                        & 0                     \\
                \end{array}\right],$$}
 where $Q_{\sr, \ii}^x$ is defined by
 $$q_{i\ell}^x=\lambda_{i\ell}(x)$$
 for $i\in\sr$, $\ell\in\ii$, and 
 $$Q_{\sr, \sr}^x\df\diag(-Q_{\sr, \ii}^x e),$$
 where $e$ denotes a vector of suitable length with all entries  1. Given a matrix $M$, we denote by $M_i$ its $i$-th row. Note that the matrix
 {\renewcommand\arraystretch{1.9}
  $$\La^x_i=\left[\begin{array}{D|D}
                 -Q_{\ii, \ii}         & -Q_{\ii, \fp}e         \\ \hline
                 -(Q_{\sr, \ii}^x)_i & (Q_{\sr, \ii}^x)_ie \\
                \end{array}\right]$$}
 is the transposed Laplacian matrix of the graph $\mathcal{G}_i^x$ defined in (\ref{eq:G_i^x}) (row sums are zero). Let $D^x(\cdot)$ denote the discrete time Markov chain embedded in $C^x(\cdot)$ and let $\PP(x)$ be the corresponding transition probability matrix of $D^x$. For any $i\in\sr$, let
  $$P^x_i\pr{\cdot}=P\pr{\cdot | D^x(0)=y_i}, \quad \text{and}\quad  E^x_i\prq{\cdot}=E\prq{\cdot | D^x(0)=y_i}.$$
 Moreover, let
 \begin{align*}
    \outx{i}&=\sum_{\ell\in\iii}\lambda_{i\ell}(x)=(Q_{\sr, \ii}^x)_ie,\\
    \out{\ell}&=\sum_{\ell^\prime\in\ii}k_{\ell\ell^\prime}+\sum_{j\in\fp}k_{\ell j}=\left[Q_{\ii, \ii} \,|\, Q_{\ii, \fp}\right]_\ell e.
 \end{align*}
 We have the following result:
 \begin{lem}\label{lem:mu_expectation}
  For all $\ell\in\ii$, $i\in\sr$ and  $x\in\RR^{\Sp\setminus\In}_{\ge 0}$, 
  \begin{equation}\label{eq:lem_mu_expectation}
   \mu_{i\ell}(x)=\frac{\outx{i}}{\out{\ell}}\sum_{n\geq 1}(\PP(x)^n)_{i\ell}<\infty.
  \end{equation}
  In particular, we have
  $$\mu_{i\ell}(x)=\frac{\outx{i}}{\out{\ell}}E^x_i\prq{\#\text{visits of $D^x(\cdot)$ to }\is_{\ell}}.$$
 \end{lem}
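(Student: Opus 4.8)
The plan is to identify $\mu_{i\ell}(x)$, which by \eqref{eq:definition_of_mu} is a ratio of spanning-tree polynomials of the labelled digraph $\mathcal{G}_i^x$, with a ratio of stationary masses of a Markov chain, and then to convert this into a visit-count of the embedded chain $D^x$. Observe that $\mathcal{G}_i^x$ is exactly the transition diagram of a continuous-time Markov chain $\widetilde C$ on $\In\cup\{\star\}$ whose generator is $-\La_i^x$: from $\star$ one jumps to $\is_\ell$ at rate $\lambda_{i\ell}(x)$, and from $\is_\ell$ one jumps to $\is_{\ell^\prime}$ at rate $k_{\ell\ell^\prime}$ and to $\star$ at the total rate $\sum_{j\in\fp}k_{\ell j}$. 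Equivalently, $\widetilde C$ is obtained from $C^x$ (run from $y_i$) by collapsing $y_i$ and all of $\FP$ into the single state $\star$, which then re-emits intermediates. By the matrix--tree theorem (equivalently the Markov-chain tree theorem), the vector $\rho$ with $\rho_v=\sum_{\zeta\in\Theta_i^x(v)}w(\zeta)$ is a left null vector of $-\La_i^x$, i.e.\ a stationary measure of $\widetilde C$; hence $\mu_{i\ell}(x)=\rho_{\is_\ell}/\rho_\star$, with $\rho_\star>0$ since every intermediate eventually reaches $\star$.

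I would then pass to the embedded discrete-time chain $\widetilde D$ of $\widetilde C$. Its stationary measure $\sigma$ satisfies $\sigma_v=\rho_v\cdot(\text{exit rate at }v)$, and the exit rates at $\is_\ell$ and at $\star$ are precisely $\out{\ell}$ and $\outx{i}$, so $\rho_{\is_\ell}/\rho_\star=(\outx{i}/\out{\ell})\,(\sigma_{\is_\ell}/\sigma_\star)$. Now $\widetilde D$ coincides, transition by transition, with $D^x$ started at $y_i$: the jump probabilities out of $\star$ are $\lambda_{i\ell}(x)/\outx{i}=(\PP(x))_{i\ell}$ and out of $\is_\ell$ are $k_{\ell\ell^\prime}/\out{\ell}=(\PP(x))_{\ell\ell^\prime}$, while ``$\widetilde D$ jumps from $\is_\ell$ to $\star$'' corresponds to ``$D^x$ jumps from $\is_\ell$ into $\FP$ and is absorbed''. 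Hence one excursion of $\widetilde D$ away from $\star$ is precisely a trajectory of $D^x$ from $y_i$ to absorption, and the renewal/cycle formula for stationary measures gives $\sigma_{\is_\ell}/\sigma_\star=E^x_i\prq{\#\text{visits of }D^x\text{ to }\is_\ell}=\sum_{n\ge1}(\PP(x)^n)_{i\ell}$. This sum is finite because every intermediate is transient for $D^x$ (Definition \ref{def:intermediate_species}). Combining the two displays yields \eqref{eq:lem_mu_expectation}, and the ``in particular'' assertion is then immediate, since $\sum_{n\ge1}(\PP(x)^n)_{i\ell}=\sum_{n\ge1}P^x_i(D^x(n)=\is_\ell)$ counts exactly the visits of $D^x$ to $\is_\ell$ (none at time $0$, as $D^x(0)=y_i\neq\is_\ell$).

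The main obstacle is that $\mathcal{G}_i^x$ need not be strongly connected. If $\outx{i}=0$, or more generally if $\is_\ell$ cannot be reached from $y_i$ through a chain of intermediates carrying positive $x$-rates, then $\Theta_i^x(\is_\ell)=\emptyset$, so $\mu_{i\ell}(x)=0$ by definition, while $D^x$ started at $y_i$ never visits $\is_\ell$; thus both sides of \eqref{eq:lem_mu_expectation} vanish and there is nothing to prove. When $\is_\ell$ \emph{is} reachable from $\star$, I would restrict attention to the set $\mathcal{R}\subseteq\In\cup\{\star\}$ of nodes reachable from $\star$. No edge of $\mathcal{G}_i^x$ leaves $\mathcal{R}$, so $\La_i^x$ is block lower-triangular with $\mathcal{R}$ as leading block, the restriction of $\widetilde C$ to $\mathcal{R}$ is irreducible, and every spanning tree of $\mathcal{G}_i^x$ rooted at a node of $\mathcal{R}$ factors uniquely as a spanning tree of $\mathcal{G}_i^x|_{\mathcal{R}}$ rooted at that node together with an in-forest on $(\In\cup\{\star\})\setminus\mathcal{R}$ directed into $\mathcal{R}$; the latter factor contributes a common positive constant, so $\rho_{\is_\ell}/\rho_\star$ is already determined inside $\mathcal{R}$, where the argument of the previous two paragraphs applies verbatim (the exit rates $\out{\ell},\outx{i}$ are unaffected, as no rate leaves $\mathcal{R}$).

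A more computational route avoids the excursion argument and relies only on linear algebra, exploiting the explicit form of $\La_i^x$. By the matrix--tree theorem, $\mu_{i\ell}(x)=\det(\La_i^x[\widehat{\is_\ell}])/\det(-Q_{\ii,\ii})$, where $\La_i^x[\widehat{\is_\ell}]$ denotes $\La_i^x$ with the $\is_\ell$ row and column deleted, and a fundamental-matrix computation for the absorbing chain $C^x$ gives $\sum_{n\ge1}(\PP(x)^n)_{i\ell}=-\frac{\out{\ell}}{\outx{i}}\big((Q_{\sr,\ii}^x)_i\,Q_{\ii,\ii}^{-1}\big)_\ell$ (using that, after the first step $y_i\to\is_\ell$, the substochastic part of the chain on the intermediates equals $-\diag(\out{\ell})^{-1}Q_{\ii,\ii}$). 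The identity then reduces to a cofactor expansion of $\det(\La_i^x[\widehat{\is_\ell}])$ along the $\star$-row, invoking the zero-row-sum property of $\La_i^x$ to reinstate the missing $\is_\ell$-column; this purely determinantal manipulation, essentially Cramer's rule applied to $\La_i^x$, is then the only real work.
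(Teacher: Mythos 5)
Your main argument is correct and takes a genuinely different route from the paper's. The paper, like you, first reduces the expected-visits statement to $\sum_{n\geq 1}(\PP(x)^n)_{i\ell}=\bigl(\PP_{\sr,\ii}^x(I-\PP_{\ii,\ii})^{-1}\bigr)_{i\ell}$ (finite by transience), but from there it proceeds purely by linear algebra: it normalizes the labels of $\mathcal{G}_i^x$ to embedded-chain probabilities, forms the transposed Laplacian $\widetilde{\La}_i^x$, evaluates the fundamental-matrix expression via Cramer's rule, a cofactor and Laplace expansion, and the zero-row-sum trick to reinstate the deleted column, and only then invokes the Matrix-Tree theorem to identify the resulting minors with the spanning-tree sums defining $\mu_{i\ell}(x)$; the factor $\out{\ell}/\outx{i}$ appears as the ratio of the normalizations, and the only degenerate case treated separately is $\outx{i}=0$. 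You instead interpret the tree sums probabilistically from the start: the Markov-chain tree theorem exhibits $\rho_v=\sum_{\zeta\in\Theta_i^x(v)}w(\zeta)$ as a stationary measure of the lumped chain on $\In\cup\{\star\}$ with generator $-\La_i^x$, the passage to the embedded chain produces exactly the exit-rate factor $\outx{i}/\out{\ell}$, and the cycle (excursion) formula identifies $\sigma_{\is_\ell}/\sigma_\star$ with the expected visits of $D^x$ to $\is_\ell$ per absorption run. This is more conceptual and explains where the prefactor comes from, but it needs uniqueness of the stationary measure, which is why you must add the restriction to the class $\mathcal{R}$ reachable from $\star$ together with the spanning-tree factorization (and the positivity of the forest factor, which rests on the same standing assumption the paper uses for positivity of the denominator of $\mu_{i\ell}$); the paper's determinantal route avoids all irreducibility considerations since invertibility of $I-\PP_{\ii,\ii}$ follows from transience alone. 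Your final ``computational route'' is, in essence, the paper's own proof, so only the excursion argument constitutes the new content.
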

 \begin{proof}
 We have
 $$E^x_i\prq{\#\text{visits of $D^x(\cdot)$ to }\is_{\ell}}=E^x_i\prq{\sum_{n\geq 1}\mathbbm{1}_{\prg{\is_{\ell}}}D^x(n)}=\sum_{n\geq 1}P^x_i\pr{D^x(n)=\is_{\ell}}=\sum_{n\geq 1}(\PP(x)^n)_{i\ell}.$$
 Therefore, since every intermediate species is a transient state in $D^x$,
 $$\sum_{n\geq 1}(\PP(x)^n)_{i\ell}=E^x_i\prq{\#\text{visits of $D^x(\cdot)$ to }\is_{\ell}}<\infty.$$
 Thus, we only need to prove (\ref{eq:lem_mu_expectation}).
 The matrix $\PP(x)$ has the following block structure:
 {\renewcommand\arraystretch{1.9}
  $$\PP(x)=\left[\begin{array}{C|C|C}
                0 & \PP_{\sr,\ii}^x & 0 \\ \hline
                0 & \PP_{\ii,\ii}  & \PP_{\ii,\sr} \\ \hline
                0 & 0 & I \\
               \end{array}\right].$$
  Thus, we have
  $$\PP(x)^n=\left[\begin{array}{C|c|C}
                 0 & \PP_{\sr,\ii}^x\PP_{\ii,\ii}^{n-1} & * \\ \hline
                 0 & \PP_{\ii,\ii}^n & * \\ \hline
                 0 & 0 & * \\
                \end{array}\right].$$}
  Since for any $\ell,\ell^\prime\in\ii$,
  $$\pr{\sum_{n\geq 0}\PP_{\ii,\ii}^n}_{\!\!\ell\ell^\prime}=\Econ{\#\text{visits of $D^x(\cdot)$ to }\is_{\ell^\prime}}{D^x(1)=\is_\ell}<\infty,$$
  we have that $\sum_{n\geq 0}\PP_{\ii,\ii}^n$ is well defined and
  $$\sum_{n\geq 0}\PP_{\ii,\ii}^n=(I-\PP_{\ii,\ii})^{-1}.$$
  Therefore
  \begin{equation}\label{eq:equivalence1_det&stoch}
   \sum_{n\geq 1}(\PP(x)^n)_{i\ell}=\sum_{n\geq 0}(\PP_{\sr,\ii}^x\PP_{\ii,\ii}^n)_{i\ell}=\pr{\PP_{\sr,\ii}^x(I-\PP_{\ii,\ii})^{-1}}_{i\ell}.
  \end{equation}
  Assume $\outx{i}\not=0$. Consider the graph $\widetilde{\mathcal{G}}_i^x$ with the same nodes and edges as $\mathcal{G}_i^x$ and normalized labels
  $$\is_\ell\xrightarrow{\displaystyle\frac{k_{\ell\ell^\prime}}{\out{\ell}}}\is_{\ell^\prime}\quad,\quad
  \is_\ell\xrightarrow{\displaystyle\frac{\sum_{j\in\fp} k_{\ell j}}{\out{\ell}}}\star\quad,\quad
  \star\xrightarrow{\displaystyle\frac{\lambda_{i\ell}(x)}{\outx{i}}}\is_\ell.$$
  The transpose of the Laplacian matrix of the graph $\widetilde{\mathcal{G}}_i^x$ is given by
  {\renewcommand\arraystretch{1.9}
  $$\widetilde{\La}_i^x=\left[\begin{array}{c|c}
                               I-\PP_{\ii,\ii} & -\PP_{\ii,\fp}e \\ \hline
                               -(\PP_{\sr,\ii}^x)_i & 1 \\
                              \end{array}\right].$$}
  Given a matrix $M$, denote by $M_{(i,j)}$ the matrix obtained by $M$ eliminating the $i$-th row and the $j$-th column. We have that
  \begin{align*}
   \pr{\PP_{\sr,\ii}^x(I-\PP_{\ii,\ii})^{-1}}_{i\ell}&=\sum_{\ell^\prime\in\ii}(\PP_{\sr,\ii}^x)_{i\ell^\prime}(I-\PP_{\ii,\ii})^{-1}_{\ell^\prime\ell}
   =\sum_{\ell^\prime\in\ii}(\PP_{\sr,\ii}^x)_{i\ell^\prime}\frac{(-1)^{\ell+\ell^\prime}\det(I-\PP_{\ii,\ii})_{(\ell,\ell^\prime)}}{\det(I-\PP_{\ii,\ii})}\\
   &=(-1)^{\ell+\#\In +1}\frac{\det (\widetilde{\La}_i^x)_{(\ell,\#\In +1)}}{\det(\widetilde{\La}_i^x)_{(\#\In +1,\#\In +1)}} 
   =\frac{\det (\widetilde{\La}_i^x)_{(\ell,\ell)}}{\det(\widetilde{\La}_i^x)_{(\#\In +1,\#\In +1)}}\\
   &=\frac{\out{\ell}}{\outx{i}}\frac{\det (\La_i^x)_{(\ell,\ell)}}{\det (\La_i^x)_{(\#\In +1,\#\In +1)}} 
 =\frac{\out{\ell}}{\outx{i}}\frac{\sum_{\zeta\in\Theta_{i,x}(\is_\ell)}w(\zeta)}{\sum_{\zeta\in\Theta_{i,x}(\star)}w(\zeta)} 
 =\frac{\out{\ell}}{\outx{i}}\mu_{i\ell}(x),
  \end{align*}
  where the second equality follows from the co-factor expansion of the determinant, the third from the Laplace expansion and the fourth equality follows from the fact that the last column of the Laplacian matrix is equal to minus the sum of the other columns. The second-last equality follows from the Matrix-Tree theorem \citep{tutte:matrix_tree}. Thus, from (\ref{eq:equivalence1_det&stoch}) it follows that (\ref{eq:lem_mu_expectation}) holds.
  If $\outx{i}=0$, then $\mu_{i\ell}=0$ for all $\ell\in\ii$. Thus, (\ref{eq:lem_mu_expectation}) still holds and the proof is concluded.
 \end{proof}

 The proof of Theorem \ref{thm:model_comparison} follows from Lemma \ref{lem:mu_expectation}.
 
 \begin{proof}[Proof of Theorem \ref{thm:model_comparison}]
  We have to prove that for any fixed $x\in\RR^{\Sp\setminus\In}_{\ge 0}$,
  \begin{equation}\label{eq:eq_to_prove_model_comparison}
   \sum_{\ell\in\ii} \pi_{\ell j}\lambda_{i\ell}(x)=\sum_{\ell\in\ii} k_{\ell j}\mu_{i\ell}(x).
  \end{equation}
  Note that
  \begin{align*}
   \sum_{\ell\in\ii} \pi_{\ell j}\lambda_{i\ell}(x)&=\outx{i}\sum_{\ell\in\ii} P^x_i\prcon{\lim_{n\rightarrow\infty}D^x(n)=y_j}{D^x(1)=\is_\ell}P^x_i\pr{D^x(1)=\is_\ell}\\
   &=\outx{i}P^x_i\pr{\lim_{n\rightarrow\infty}D^x(n)=y_j}=\outx{i}\sum_{n\geq 1}\sum_{\ell\in\ii} P^x_i\pr{D^x(n)=\is_\ell\,,\,D^x(n+1)=y_j}\\
   &=\outx{i}\sum_{\ell\in\ii}\frac{k_{\ell j}}{\out{\ell}}\sum_{n\geq 1}(\PP(x)^n)_{i\ell}.
 \end{align*}
 Therefore, (\ref{eq:eq_to_prove_model_comparison}) follows from Lemma \ref{lem:mu_expectation}.
 \end{proof}

 To prove Proposition \ref{prop:mu_suff_ass}, we make the dependence on $N$ explicit.
 
 \begin{proof}[Proof of Proposition \ref{prop:mu_suff_ass}]
  From Lemma \ref{lem:mu_expectation} we have that
  $$\mu^N_{i\ell}(x)=\frac{\outxN{i}}{\outN{\ell}}E^{N,x}_i\prq{\#\text{visits of $D^{N,x}(\cdot)$ to }\is_{\ell}}$$
  for $x\in \RR^{\Sp\setminus\In}_{\ge 0}$.
  Denote by $T^N_\ell$ the random variable distributed as the time until consumption of a molecule of $\is_\ell$. Its distribution is exponential with parameter $\outN{\ell}$. Note that the Markov chain $D^{N,x}(\cdot)$ is distributed as the discrete time Markov chain embedded in $C^N_1(\cdot)$, and $D^{N,x}(0)$ denotes the initial reactant setting off the chain $C^N_1(\cdot)$. For any $j\in\fp$, we have
  \begin{align*}
   \outaxN{i}\Econ{\tau_1^N}{D^{N,N^\alpha x}(0)=y_i}&=\outaxN{i}\sum_{\ell\in\ii}E^{N,N^\alpha x}_i\prq{\#\text{visits of $D^{N,N^\alpha x}(\cdot)$ to }\is_{\ell}}E\prq{T^N_\ell}\\
   &=\sum_{\ell\in\ii}\mu^N_{i\ell}(N^\alpha x).
  \end{align*}
 Furthermore,
  \begin{align*}
   \outaxN{i}\Econ{\tau_1^N}{D^{N,N^\alpha x}(0)=y_i}&=\\
   &\hspace*{-48pt}=\sum_{\ell\in\ii}\Econ{\tau_1^N}{D^{N,N^\alpha x}(1)=\is_\ell}P\prcon{D^{N,N^\alpha x}(1)=\is_\ell}{D^{N,N^\alpha x}(0)=y_i}\outaxN{i}\notag\\
   &\hspace*{-48pt}=\sum_{\ell\in\ii}\Econ{\tau_1^N}{D^{N,N^\alpha x}(1)=\is_\ell}\lambda^N_{i\ell}(N^\alpha x)\notag\\
   &\hspace*{-48pt}=\sum_{\ell\in\ii}\Econ{N^{\beta_{i\ell}}\tau_1^N}{D^{N,N^\alpha x}(1)=\is_\ell}N^{-\beta_{i\ell}}\lambda^N_{i\ell}(N^\alpha x).
  \end{align*}
  In particular,
  \begin{equation}\label{eq:sum_exp_in_proof_prop_mu}
   \sum_{\ell\in\ii}N^{-\alpha^*}\mu^N_{i\ell}(N^\alpha x)=\sum_{\ell\in\ii}N^{\beta_{i\ell}-\alpha^*}\Econ{\tau_1^N}{D^{N,N^\alpha x}(1)=\is_\ell}N^{-\beta_{i\ell}}\lambda^N_{i\ell}(N^\alpha x).
  \end{equation}
  Therefore, \eqref{eq:mu_tend_to_zero} holds if and only if the right-hand side of \eqref{eq:sum_exp_in_proof_prop_mu} tends to zero as $N\rightarrow\infty$. By Assumption \ref{ass:big_with_N}(\ref{ass:ratefct}) we have
  $$N^{-\beta_{i\ell}}\lambda^N_{i\ell}(N^\alpha x)\xrightarrow[N\rightarrow\infty]{}\lambda_{i\ell}(x),$$
  where $\lambda_{i\ell}$ is a non-null function. It follows that the right-hand side of \eqref{eq:sum_exp_in_proof_prop_mu} tends to zero as $N\rightarrow\infty$ if and only if, for any $i\in\sr,\ell\in\ii$, such that $y_i\rightarrow \is_\ell\in\R$, and for any $j\in\fp$
  $$N^{\beta_{i\ell}-\alpha^*}\Econ{\tau_1^N}{D^{N,N^\alpha x}(1)=\is_\ell}\xrightarrow[N\rightarrow\infty]{}0.$$
  The latter is equivalent to
  \begin{equation}\label{eq:last_eq_proof_prop_mu}
   N^{\beta_{i\ell}-\alpha^*}\Econ{\tau_1^N}{C^N_1(t^N_1)=\is_\ell}\xrightarrow[N\rightarrow\infty]{}0.
  \end{equation}
  By the definition of $\beta^*_{\ell}$ and $a_\ell$, the latter implies
  $$N^{\beta^*_{\ell}-a_\ell}\Econ{\tau_1^N}{C^N_1(t^N_1)=\is_\ell}\xrightarrow[N\rightarrow\infty]{}0$$
  for any $\ell\in\bigcup_{i\in\sr}\iii$, which is what we wanted to prove. If $\alpha^*_j=\alpha^*_{j^\prime}$ for any $j,j^\prime\in\fp$, then $a_\ell=\alpha^*$ for any $\ell\in\ii$. Therefore, \eqref{eq:last_eq_proof_prop_mu} for any $j\in\fp$ is equivalent to \eqref{eq:exp_lifespan_going_to_zero}. The proof is thus concluded.
 \end{proof}
 
 \section{Proof of Theorems \ref{thm:eq_true_for_bounded_rates} and \ref{thm:weak_convergence_bounded_rates}}\label{sec:proofs_bounded_rates}
 
 In this section Theorems \ref{thm:eq_true_for_bounded_rates} and \ref{thm:weak_convergence_bounded_rates} are proven. To this aim, instead of assuming \eqref{eq:probability_limit_assumption} and \eqref{eq:single_scale_limit} in Assumption \ref{ass:big_with_N}(\ref{ass:single_scale_system}), we make use of the weaker conditions \eqref{eq:limsup_limited_hypothesis} and \eqref{eq:single_scale}.  Throughout this section, whenever $t$ is written it is implicitly assume that $t\in[0,T]$. We also use the notation
 $$\norm{x}_{\infty}=\max_{S\in\Sp}\abs{x(S)}\quad\text{for }x\in \RR^{\Sp}.$$
 By the equivalence of norms in $\RR^{\Sp}$, we have that there exists $\theta>0$, such that
 $$\abs{x}\leq \theta \norm{x}_\infty\quad\forall x\in \RR^{\Sp}.$$
 Let $D_\infty(h)$ be the disc of radius $h$ in $\RR_{\geq0}^{\Sp\setminus \In}$ centred in the origin, with respect to the measure $\norm{\cdot}_\infty$, and let $D^C_\infty(r)$ be its complementary.\\
 We start by stating a lemma.
 
 \begin{lem}\label{lem:difference_M}
  Assume the assumptions of Theorem \ref{thm:eq_true_for_bounded_rates} hold. Then,
  $$\sup_{t\in[0,T]}N^{-\alpha^*_j}E\prq{M_{i\ell j}^N(t)-\overline{M}_{i\ell j}^N(t)}\xrightarrow[N\to\infty]{}0$$
 \end{lem}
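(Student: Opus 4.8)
The plan is to control the ``in-flight'' count $M^N_{i\ell j}(t)-\overline M^N_{i\ell j}(t)$, i.e.\ the number of intermediate chains of type $(i\ell j)$ initiated by time $t$ that have not yet been absorbed, by splitting those chains according to whether they were initiated long ago or recently. Fix $\varepsilon>0$ and, for $i\in\sr$, $\ell\in\iii$, $j\in\fp$, put
$$\delta=\delta(N)\df\frac{N^{\alpha^*_j}\varepsilon}{N^{\beta^*_\ell}\pi^N_{\ell j}},$$
which is exactly the threshold appearing in the definition \eqref{eq:p_ell_j} of $p^\varepsilon_{\ell j}(N)$. Using $\overline M^N_{i\ell j}(t)=\sum_{n=1}^{M^N_{i\ell j}(t)}\mathbbm 1_{\{t^N_{i\ell j,n}+\tau^N_{i\ell j,n}\le t\}}$ and the fact that a chain initiated at or before the (deterministic) time $(t-\delta)^+$ and still running at $t$ must have survived for more than $\delta$, one checks directly that for every $t\in[0,T]$
$$M^N_{i\ell j}(t)-\overline M^N_{i\ell j}(t)\le\pr{M^N_{i\ell j}(t)-M^N_{i\ell j}\big((t-\delta)^+\big)}+\sum_{n=1}^{M^N_{i\ell j}((t-\delta)^+)}\mathbbm 1_{\{\tau^N_{i\ell j,n}>\delta\}},$$
because the chains still running at $t$ either were initiated in $\big((t-\delta)^+,t\big]$ (at most $M^N_{i\ell j}(t)-M^N_{i\ell j}((t-\delta)^+)$ of them) or were initiated earlier and have lifetime exceeding $\delta$.

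For the first summand, the boundedness hypothesis \eqref{eq:bounded_rates} applied to $r=y_i\to \is_\ell\in\R^0$ gives $\lambda^N_{i\ell}\le B_{i\ell}N^{\beta_{i\ell}}$, and since the compensator of the counting process $M^N_{i\ell j}(\cdot)$ is $\int_0^\cdot \pi^N_{\ell j}\lambda^N_{i\ell}(X^N_s)\,ds$ we get $E[M^N_{i\ell j}(t)-M^N_{i\ell j}((t-\delta)^+)]\le\pi^N_{\ell j}B_{i\ell}N^{\beta_{i\ell}}\min(t,\delta)\le\pi^N_{\ell j}B_{i\ell}N^{\beta_{i\ell}}\delta$; multiplying by $N^{-\alpha^*_j}$ and using $\beta_{i\ell}\le\beta^*_\ell$ together with the definition of $\delta$ bounds this by $B_{i\ell}\varepsilon$, uniformly in $t$ and $N$. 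For the second summand, I would dominate $M^N_{i\ell j}((t-\delta)^+)$ pathwise by $\widetilde A^N\df Y_{i\ell j}\big(\pi^N_{\ell j}B_{i\ell}N^{\beta_{i\ell}}T\big)$ — legitimate because $Y_{i\ell j}$ is non-decreasing and, by \eqref{eq:bounded_rates}, the time change $\int_0^{\cdot}\pi^N_{\ell j}\lambda^N_{i\ell}(X^N_s)\,ds$ never exceeds $\pi^N_{\ell j}B_{i\ell}N^{\beta_{i\ell}}T$ on $[0,T]$. Since $\widetilde A^N$ is a functional of $Y_{i\ell j}$ alone while the delays $\tau^N_{i\ell j,n}$ are i.i.d.\ copies of $\tau^N_1$ given $(C^N_1\in C_{i\ell j})$ and independent of the driving Poisson processes, Wald's identity yields
$$E\prq{\sum_{n=1}^{\widetilde A^N}\mathbbm 1_{\{\tau^N_{i\ell j,n}>\delta\}}}=\pi^N_{\ell j}B_{i\ell}N^{\beta_{i\ell}}T\cdot P\prconf{\tau^N_1>\delta}{C^N_1(t^N_1)=\is_\ell,\,C^N_1(t^N_1+\tau^N_1)=y_j}=\pi^N_{\ell j}B_{i\ell}N^{\beta_{i\ell}}T\,p^\varepsilon_{\ell j}(N),$$
the last equality holding precisely because $\delta$ was chosen to be the threshold in $p^\varepsilon_{\ell j}(N)$.

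Combining the two estimates, $\sup_{t\in[0,T]}N^{-\alpha^*_j}E[M^N_{i\ell j}(t)-\overline M^N_{i\ell j}(t)]\le B_{i\ell}\varepsilon+B_{i\ell}T\,\pi^N_{\ell j}N^{\beta^*_\ell-\alpha^*_j}p^\varepsilon_{\ell j}(N)$, and the second summand tends to $0$ as $N\to\infty$ by \eqref{eq:lifespan_going_to_zero_relaxed} in Assumption \ref{ass:big_with_N}(\ref{ass:degradation_intermediates}); hence $\limsup_{N\to\infty}\sup_{t\in[0,T]}N^{-\alpha^*_j}E[M^N_{i\ell j}(t)-\overline M^N_{i\ell j}(t)]\le B_{i\ell}\varepsilon$, and letting $\varepsilon\downarrow0$ gives the claim. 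I expect the delicate point to be exactly the domination $M^N_{i\ell j}((t-\delta)^+)\le\widetilde A^N$ and the independence it buys: the random upper limit $M^N_{i\ell j}((t-\delta)^+)$ is entangled with $X^N$ — and hence with every Poisson clock in the construction — so Wald's identity cannot be applied to it directly; replacing it by the deterministic-time evaluation $\widetilde A^N$, which genuinely depends only on $Y_{i\ell j}$, is what the boundedness assumption \eqref{eq:bounded_rates} makes possible, and this is precisely the step that breaks down in the unbounded-rate regime.
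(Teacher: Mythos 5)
Your proof is correct and follows essentially the same route as the paper: the same splitting of the in-flight chains at the threshold $N^{\alpha^*_j}\varepsilon/(N^{\beta^*_\ell}\pi^N_{\ell j})$, the same intensity bound from \eqref{eq:bounded_rates} for recently initiated chains, and the same domination by a Poisson variable evaluated at a deterministic time combined with Wald's identity and \eqref{eq:lifespan_going_to_zero_relaxed} for the long-lived ones. The only (minor, and welcome) difference is that by writing $(t-\delta)^{+}$ and bounding the compensator increment by $\min(t,\delta)\le\delta$ you treat the regime $\delta\ge t$ uniformly, whereas the paper handles it via a compactness/accumulation-point argument that splits on whether $\liminf_{N}\pi^{N}_{\ell j}N^{\beta^*_\ell-\alpha^*_j}$ vanishes.
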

 \begin{proof}
  Remember that $t_{i\ell j,n}^N$ is the time of the $n$-th jump of $M_{i\ell j}^N(t)$, and $\tau_{i\ell j,n}^N$ the life time of the corresponding chain of intermediates. Note that by \eqref{eq:limsup_limited_hypothesis} we have
  $$0\leq \sup_{t\in[0,T]}N^{-\alpha^*_j}E\prq{M_{i\ell j}^N(t)-\overline{M}_{i\ell j}^N(t)}\leq \sup_{t\in[0,T]}N^{-\alpha^*_j}E\prq{M_{i\ell j}^N(t)}\leq \pi^N_{\ell j}N^{\beta^*_\ell-\alpha^*_j}B_{i\ell}t\leq c_{\ell j}B_{i\ell}t.$$
  This implies that the sequence $\sup_{t\in[0,T]}N^{-\alpha^*_j}E\prq{M_{i\ell j}^N(t)-\overline{M}_{i\ell j}^N(t)}$ is contained in a compact set, and it follows that to prove the lemma it is sufficient to show that all the accumulation points of the sequence are 0. To this aim, fix an accumulation point $l$ and consider a subsequence $N_h$ such that
  $$\sup_{t\in[0,T]}N_h^{-\alpha^*_j}E\prq{M_{i\ell j}^{N_h}(t)-\overline{M}_{i\ell j}^{N_h}(t)}\xrightarrow[h\to\infty]{}l.$$
  First, assume that
  $$\liminf_{h\to\infty}\pi^{N_h}_{\ell j}N_h^{\beta^*_\ell-\alpha^*_j}=0,$$
  and let $N_{h_m}$ a subsequence such that
  $$\lim_{m\to\infty}\pi^{N_{h_m}}_{\ell j}N_{h_m}^{\beta^*_\ell-\alpha^*_j}=0.$$
  In this case,
  $$0\leq l=\lim_{m\to\infty}\sup_{t\in[0,T]}N_{h_m}^{-\alpha^*_j}E\prq{M_{i\ell j}^{N_{h_m}}(t)-\overline{M}_{i\ell j}^{N_{h_m}}(t)}\leq \lim_{m\to\infty}\pi^{N_{h_m}}_{\ell j}N_{h_m}^{\beta^*_\ell-\alpha^*_j}B_{i\ell}t=0,$$
  which proves $l=0$. Now, assume that
  $$\liminf_{h\to\infty}\pi^{N_h}_{\ell j}N_h^{\beta^*_\ell-\alpha^*_j}=\delta>0,$$
  and fix $0<\varepsilon<\delta t$. For convenience, denote
  $$\sigma^{\varepsilon h}_{j\ell}=\frac{\varepsilon N_h^{\alpha^*_j-\beta^*_\ell}}{\pi^{N_h}_{\ell j}}.$$
  We have
  \begin{align*}
   E\prq{M_{i\ell j}^{N_h}(t)-\overline{M}_{i\ell j}^{N_h}(t)}= & E\prq{\sum_{n=1}^{M_{i\ell j}^{N_h}(t)} \mathbbm{1}_{\left[t-t_{i\ell j,n}^{N_h},\infty\right)}(\tau_{i\ell j,n}^{N_h})}\\
   \leq & E\prq{\sum_{n=1}^{M_{i\ell j}^{N_h}\pr{t-\sigma^{\varepsilon h}_{j\ell}}}\mathbbm{1}_{\left[\sigma^{\varepsilon h}_{j\ell},\infty\right)}(\tau_{i\ell j,n}^{N_h})}+E\prq{\sum_{n=M_{i\ell j}^{N_h}\pr{t-\sigma^{\varepsilon h}_{j\ell}}+1}^{M_{i\ell j}^{N_h}(t)} 1}\\
   \leq & E\prq{\sum_{n=1}^{M_{i\ell j}^{N_h}(t)}\mathbbm{1}_{\left[\sigma^{\varepsilon h}_{j\ell},\infty\right)}(\tau_{i\ell j,n}^{N_h})}+E\prq{M_{i\ell j}^{N_h}(t)-M_{i\ell j}^{N_h}\pr{t-\sigma^{\varepsilon h}_{j\ell}}}.
  \end{align*}
  Thus, using \eqref{eq:def_Y_ilj} and \eqref{eq:bounded_rates} we obtain
  \begin{align*}
   E\prq{M_{i\ell j}^{N_h}(t)-\overline{M}_{i\ell j}^{N_h}(t)}&\leq E\prq{\sum_{n=1}^{Y_{i\ell j}(tB_{i\ell}\pi^{N_h}_{\ell j}N_h^{\beta_{i\ell}})}\mathbbm{1}_{\left[\sigma^{\varepsilon h}_{j\ell},\infty\right)}(\tau_{i\ell j,n}^N)}+\varepsilon N^{\alpha^*_j-\beta^*_\ell}N^{\beta^*_\ell}B_{i\ell}\\
   &\leq tp_{\ell j}^\varepsilon(N_h)N_h^{\beta^*_\ell}B_{i\ell}\pi^{N_h}_{\ell j}+\varepsilon N_h^{\alpha^*_j}B_{i\ell}\leq B_{i\ell}N_h^{\alpha^*_j}(t\pi^{N_h}_{\ell j}N_h^{\beta^*_\ell-\alpha^*_j}p_{\ell j}^\varepsilon(N_h)+\varepsilon),
  \end{align*}
  where $p_{\ell j}^\varepsilon(N)$ is as defined in \eqref{eq:p_ell_j}. By \eqref{eq:lifespan_going_to_zero_relaxed} and the arbitrariness of $\varepsilon>0$, the latter implies that
  $$\sup_{t\in[0,T]}N_h^{-\alpha^*_j}E\prq{M_{i\ell j}^{N_h}(t)-\overline{M}_{i\ell j}^{N_h}(t)}\xrightarrow[h\to\infty]{}0,$$
  which implies that $l=0$ and concludes the proof.
\end{proof}
 
 \begin{proof}[Proof of Theorem \ref{thm:eq_true_for_bounded_rates}]
 Let the process $\widehat{W}_\cdot^N$ be defined as in (\ref{eq:W^N_t}) and, for any fixed $t$, let $\Delta_t^N=\norm{\widehat{X}^N_t-\widehat{W}^N_t}_\infty$.
 Then, we have
 \begin{align*}
   E\prq{\abs{\widehat{X}^N_t-\widehat{Z}^N_t}}&\leq E\prq{\abs{\widehat{W}^N_t-\widehat{Z}^N_t}+\abs{\widehat{X}^N_t-\widehat{W}^N_t}}\\
   &\hspace*{-60pt}\leq\sum_{r\in\R^1}\abs{N^{-\alpha}\xi_r}  E\prq{\abs{Y_r\pr{\int_0^t \lambda^N_r(X^N_s)ds}-Y_r\pr{\int_0^t \lambda^N_r(Z^N_s)ds}}}+\\
   &\hspace*{-60pt}\quad+\sum_{i\in\sr}\sum_{j\in\fp}\abs{N^{-\alpha}\pr{y_j-y_i}}\sum_{\ell\in\iii} E\prq{\abs{Y_{i\ell j}\pr{\int_0^t \pi_{\ell j}^N\lambda^N_{i\ell}(X^N_s)ds}-Y_{i\ell j}\pr{\int_0^t \pi_{\ell j}^N\lambda^N_{i\ell}(Z^N_s)ds}}}+\\
   &\hspace*{-60pt}\quad+E\prq{\abs{\widehat{X}^N_0-\widehat{Z}^N_0}}+\theta E\prq{\Delta_t^N}\\
   &\hspace*{-60pt}=\sum_{r\in\R^1}\abs{N^{-\alpha+\beta_r}\xi_r}  E\prq{N^{-\beta_r}\abs{Y_r\pr{\int_0^t\lambda^N_r(X^N_s)ds}-Y_r\pr{\int_0^t \lambda^N_r(Z^N_s)ds}}}+\\
   &\hspace*{-60pt}\quad+\sum_{i,j,\ell}\abs{N^{-\alpha+\beta_{i\ell}}\pr{y_j-y_i}} E\prq{N^{-\beta_{i\ell}}\abs{Y_{i\ell j}\pr{\int_0^t \pi_{\ell j}^N\lambda^N_{i\ell}(X^N_s)ds}-Y_{i\ell j}\pr{\int_0^t \pi_{\ell j}^N\lambda^N_{i\ell}(Z^N_s)ds}}}+\\
   &\hspace*{-60pt}\quad+E\prq{\abs{\widehat{X}^N_0-\widehat{Z}^N_0}}+\theta E\prq{\Delta_t^N}.
 \end{align*}
 For any reaction $r\colon y_i\rightarrow y_j\in\R^*$, let
 $$\alpha^*_{ij}=\alpha_r^*=\min_{S\colon \xi_r(S)\neq0}\alpha(S).$$
 Then,
 \begin{align*}
    E\prq{\abs{\widehat{X}^N_t-\widehat{Z}^N_t}}
   \leq&\sum_{r\in\R^1}N^{-\alpha_r^*+\beta_r}\abs{\xi_r} E\prq{\int_0^tN^{-\beta_r}\abs{\lambda^N_r(N^\alpha\widehat{X}^N_s)-\lambda^N_r(N^\alpha\widehat{Z}^N_s)}ds}+\\
   &+\sum_{i,j,\ell} \pi_{\ell j}^N N^{-\alpha_{ij}^*+\beta_{i\ell}}\abs{y_j-y_i} E\prq{\int_0^t N^{-\beta_{i\ell}}\abs{\lambda^N_{i\ell}(N^\alpha\widehat{X}^N_s)-\lambda^N_{i\ell}(N^\alpha\widehat{Z}^N_s)}ds}+\\
   &+E\prq{\abs{\widehat{X}^N_0-\widehat{Z}^N_0}}+\theta E\prq{\Delta_t^N}.
 \end{align*}
 To control the left side, we aim to substitute the functions $\lambda_r^N(\cdot)$ with their limits $\lambda_r(\cdot)$ (Assumption \ref{ass:big_with_N}(\ref{ass:ratefct})). To meet our goal, we first argue that the processes $\widehat{X}^N_\cdot$ and $\widehat{Z}^N_\cdot$ are bounded with high probability.
 
 Let $S\in\Sp\setminus\In$. By substituting the rate functions $\lambda_r^N(\cdot)$ with their upper bounds $N^{\beta_r}B_r$  in (\ref{eq_Z}), and using $t<T$, we obtain that $\widehat{Z}_t^N(S)$ is bounded from above by
 $$\widehat{Z}_0^N(S)+\sum_{r\in\R^1_S}\abs{\xi_r(S)}N^{-\alpha(S)}Y_r(N^{\beta_r}B_r T)+\sum_{i\in\sr}\sum_{j\in\fp}\abs{y_j(S)-y_i(S)}\sum_{\ell\in\iii} N^{-\alpha(S)}Y_{i\ell j}\pr{\pi^N_{\ell j}N^{\beta_{i\ell}}B_{i\ell}T},$$
 where $\R^1_S$ is defined according to \eqref{eq:R_S}. Using the assumptions (\ref{eq:limsup_limited_hypothesis}), (\ref{eq:single_scale}) and the Law of Large Numbers for Poisson processes to control the above expression for $\alpha(S)>0$, we obtain that, for any $\nu>0$, there exists $\Upsilon_\nu^\prime>0$, such that
 $$\limsup_{N\rightarrow\infty}P\pr{\sup_{t\in[0,T]}\norm{\widehat{Z}_t^N}_\infty>\Upsilon_\nu^\prime}<\nu.$$
 Let $\Upsilon_\nu$ be as in \eqref{eq:hypothesis_O_1_a} and let $\Upsilon_\nu^{\prime\prime}=\Upsilon_\nu\vee\Upsilon_\nu^\prime$. Then, if $N$ is large enough,
 \begin{equation}\label{eq:bound_1_proof_prop}
  P\pr{\sup_{t\in[0,T]}\pr{\norm{\widehat{X}_t^N}_\infty\vee\norm{\widehat{Z}_t^N}_\infty}>\Upsilon_{\nu}^{\prime\prime}}<3\nu.
 \end{equation}
 By Assumption \ref{ass:big_with_N}(\ref{ass:ratefct}) we have that 
 $$N^{-\beta_r}\lambda_r^N(N^{\alpha}x)\xrightarrow[N\rightarrow\infty]{}\lambda_r(x)\quad\forall r\in \R^0$$
 uniformly on compact sets. In particular, for any $\nu>0$,
 $$o_{\nu}(N)\df\sup_{x\in D_\infty(\Upsilon_\nu^{\prime\prime})}\abs{N^{-\beta_r}\lambda_r^N(N^{\alpha}x)-\lambda_r(x)}\xrightarrow[N\rightarrow\infty]{}0.$$
 Note that for any $\nu>0$ and $x\in\RR^{\Sp\setminus \In}_{\ge 0}$,
 \begin{equation}\label{eq:bound_2_proof_prop}
  \abs{N^{-\beta_r}\lambda_r^N(N^{\alpha}x)-\lambda_r(x)}\leq o_{\nu}(N)\mathbbm{1}_{D_\infty(\Upsilon_\nu^{\prime\prime})}(x)+2B_r\mathbbm{1}_{D^C_\infty(\Upsilon_\nu^{\prime\prime})}(x).
 \end{equation} 
 Using (\ref{eq:bound_1_proof_prop}) and (\ref{eq:bound_2_proof_prop}) we have
 \begin{align*}
    E\prq{\abs{\widehat{X}^N_t-\widehat{Z}^N_t}}
   \leq&\sum_{r\in\R^1}N^{-\alpha_r^*+\beta_r}\abs{\xi_r} \pr{  E\prq{\int_0^t\abs{\lambda_r(\widehat{X}^N_s)-\lambda_r(\widehat{Z}^N_s)}ds} +2o_{\nu}(N)t+12B_r\nu t}+\\
   &\hspace*{-18pt}+\sum_{i,j,\ell} \pi_{\ell j}^N N^{-\alpha_{ij}^*+\beta_{i\ell}}\abs{y_j-y_i}\pr{ E\prq{\int_0^t \abs{\lambda_{i\ell}(\widehat{X}^N_s)-\lambda_{i\ell}(\widehat{Z}^N_s)}ds}+2o_{\nu}(N)t+12B_{i\ell}\nu t}+\\
   &+E\prq{\abs{\widehat{X}^N_0-\widehat{Z}^N_0}}+\theta E\prq{\Delta_t^N}\\
   \leq& \Psi_1 \int_0^t E\prq{\abs{\widehat{X}^N_s-\widehat{Z}^N_s}}ds + \Psi_2 o_{\nu}(N)t + \Psi_3\nu t + E\prq{\abs{\widehat{X}^N_0-\widehat{Z}^N_0}} + \theta E\prq{\Delta_t^N}
 \end{align*}
 for some positive constants $\Psi_1$, $\Psi_2$, $\Psi_3>0$, independent of $\nu$. In the last inequality we made use of \eqref{eq:limsup_limited_hypothesis} and \eqref{eq:single_scale}, as well as the hypothesis that $\lambda_r$ is Lipschitz for any $r\in\R^0$.
 
 To prove (\ref{eq_expectation_goes_to_zero}), we only need to show that $\sup_{t\in[0,T]}E\prq{\Delta_t^N}\rightarrow0$ for $N\rightarrow\infty$. Indeed if this holds, then by the Gronwall inequality applied to the function $\sup_{t\in[0,T]}E\prq{\abs{\widehat{X}^N_t-\widehat{Z}^N_t}}$ we have
 $$ \sup_{t\in[0,T]}E\prq{\abs{\widehat{X}^N_t-\widehat{Z}^N_t}}\leq \pr{\Psi_2 o_{\nu}(N)T + \Psi_3\nu T+E\prq{\abs{\widehat{X}^N_0-\widehat{Z}^N_0}}+\theta\sup_{t\in[0,T]} E\prq{\Delta_t^N}}e^{\Psi_1 T},$$
 which for $N\to\infty$ tends to $\Psi_3\nu Te^{\Psi_1 T}$. By the arbitrariness of $\nu$ this leads to
 $$ \sup_{t\in[0,T]}E\prq{\abs{\widehat{X}^N_t-\widehat{Z}^N_t}}\xrightarrow[N\rightarrow\infty]{}0,$$
 and we are done. 
 To prove that $\sup_{t\in[0,T]}E\prq{\Delta_t^N}\rightarrow0$ for $N\rightarrow\infty$, it first follows from (\ref{eq:X_t^N_hat}) and (\ref{eq:W^N_t}) that
 \begin{equation}\label{eq:delta_t_minorization}
  \Delta_t^N=\norm{N^{-\alpha}\sum_{j\in\fp}y_j\sum_{i\in\sr}\sum_{\ell\in\iii}\prbig{M_{i\ell j}^N(t)-\overline{M}_{i\ell j}^N(t)}}_\infty\leq \sum_{i,\ell,j} \norm{y_j}_\infty N^{-\alpha^*_j}\prbig{M_{i\ell j}^N(t)-\overline{M}_{i\ell j}^N(t)}.
 \end{equation}
 Therefore, by Lemma \ref{lem:difference_M} and \eqref{eq:delta_t_minorization}, we have that $\sup_{t\in[0,T]}E\prq{\Delta_t^N}\rightarrow0$ for $N\rightarrow\infty$, which concludes the proof of the first part of the statement. Equation \eqref{eq_to_be_proved} is implied by \eqref{eq_expectation_goes_to_zero} and the Markov inequality.
 
 Finally, to prove \eqref{eq:occupation_measure_bounded_rates}, first consider a continuously differentiable function $g\colon\RR^{\Sp\setminus\In}\to\RR$ with compact domain, and let $c_g$ be the maximum of the absolute value of its derivative. We have
 \begin{align*}
  \sup_{t\in[0,T]}E\prq{\int_0^t\abs{g(\widehat{X}^N_s)-g(\widehat{Z}^N_s)}ds}&=\int_0^T E\prq{\abs{g(\widehat{X}^N_s)-g(\widehat{Z}^N_s)}}ds\\
  &\leq\int_0^T c_gE\prq{\abs{\widehat{X}^N_s-\widehat{Z}^N_s}}ds\\
  &\leq Tc_g\sup_{t\in[0,T]}E\prq{\abs{\widehat{X}^N_t-\widehat{Z}^N_t}}ds\xrightarrow[N\to\infty]{}0.
 \end{align*}
 Let $f\colon\RR^{\Sp\setminus\In}\to\RR$ be a continuous function with compact domain. By uniformly approximating $f$ by continuously differentiable functions with compact domain, we have
 $$\sup_{t\in[0,T]}E\prq{\int_0^t\abs{f(\widehat{X}^N_s)-f(\widehat{Z}^N_s)}ds}\xrightarrow[N\to\infty]{}0.$$
 By Markov inequality, it follows that for any $\varepsilon>0$
 \begin{equation}\label{eq:integral_fdd_bounded_rates}
    \sup_{t\in[0,T]}P\pr{\abs{\int_0^t\pr{f(\widehat{X}^N_s)-f(\widehat{Z}^N_s)}ds}>\varepsilon}\xrightarrow[N\to\infty]{}0.
 \end{equation}
 
 Consider the occupation measures on $[0,T]\times\RR^{\Sp\setminus\In}$ given by
 $$\Gamma_1^N([t_1,t_2]\times A)=\int_{t_1}^{t_2}\mathbbm{1}_A(\widehat{X}^N_s)ds\quad\text{and}\quad\Gamma_2^N([t_1,t_2]\times A)=\int_{t_1}^{t_2}\mathbbm{1}_A(\widehat{Z}^N_s)ds$$
 for any $0\leq t_1<t_2\leq T$ and any Borel set $A$ of $\RR^{\Sp\setminus\In}$. By \eqref{eq:bound_1_proof_prop} and \citet[Lemma 1.3]{kurtz:averaging}, we have that the sequences of random measures $(\Gamma_1^N)$ and $(\Gamma_2^N)$ are relatively compact with respect to the Prohorov metric. By the continuous mapping theorem \cite[Section 5.4]{hoffmann:probability}, this in turn implies that the sequence of continuous processes
 $$\pr{\int_0^\cdot f(\widehat{X}^{N}_s)ds,\int_0^\cdot f(\widehat{Z}^{N}_s)ds}=\pr{\int_0^\cdot\int_{\RR^{\Sp\setminus\In}} f(x)d\Gamma_1^{N}(s,x),\int_0^\cdot\int_{\RR^{\Sp\setminus\In}} f(x)d\Gamma_2^{N}(s,x)}$$
 is relatively compact with respect to the weak convergence in the product space $D[0,T]\times D[0,T]$, where $D[0,T]$ denotes the usual Skorohod space. In this case it coincides with weak convergence in the uniform topology since the processes are continuous. The following is inspired by a proof of \citet[Lemma A2.1]{kurtz:countable}. Consider a weak limit $(\widehat{X}_\cdot,\widehat{Z}_\cdot)$, which will be a continuous process. By \eqref{eq:integral_fdd_bounded_rates}, we have $\widehat{X}_t=\widehat{Z}_t$ for any $t$, therefore $d_{Sk}(\widehat{X}_\cdot,\widehat{Z}_\cdot)=0$, where $d_{Sk}$ denotes the Skorohod distance. By the continuous mapping theorem, we have that for any subsequence converging to $(\widehat{X}_\cdot,\widehat{Z}_\cdot)$,
 $$d_{Sk}\pr{\int_0^\cdot f(\widehat{X}^{N_m}_s)ds,\int_0^\cdot f(\widehat{Z}^{N_m}_s)ds}$$
 converges weakly to zero. In particular, this implies that for every $\varepsilon>0$
 $$P\pr{\sup_{t\in[0,T]}\abs{\int_0^t\pr{f(\widehat{X}^{N_m}_s)-f(\widehat{Z}^{N_m}_s)}ds}>\varepsilon}\xrightarrow[m\rightarrow\infty]{}0.$$
 Since the same holds for any convergent subsequence and by relative compactness, \eqref{eq:occupation_measure_bounded_rates} follows for any continuous $f$ with compact support. Indeed, if it were not the case we would have a subsequence such that for some constant $c>0$
 \[P\pr{\sup_{t\in[0,T]}\abs{\int_0^t\pr{f(\widehat{X}^{N_m}_s)-f(\widehat{Z}^{N_m}_s)}ds}>\varepsilon}>c.\]
 However, the subsequence would not contain any convergent subsequence.
 
 Now, let the support of $f$ be not compact, and consider $\nu>0$. There exists a continuous function $f_\nu$ with compact support such that $f_\nu(x)=f(x)$ if $\norm{x}_\infty\leq \Upsilon_{\nu}^{\prime\prime}$, where $\Upsilon_{\nu}^{\prime\prime}$ is as in \eqref{eq:bound_1_proof_prop}. Therefore, if $N$ is large enough
 \begin{multline*}
  P\pr{\sup_{t\in[0,T]}\abs{\int_0^t\pr{f(\widehat{X}^N_s)-f(\widehat{Z}^N_s)}ds}>\varepsilon}\\
  \leq P\pr{\sup_{t\in[0,T]}\abs{\int_0^t\pr{f_\nu(\widehat{X}^N_s)-f_\nu(\widehat{Z}^N_s)}ds}>\varepsilon}+3\nu\xrightarrow[N\rightarrow\infty]{}3\nu.
 \end{multline*}
 The proof is concluded by the arbitrariness of $\nu$.
 \end{proof}
 
 \begin{proof}[Proof of Theorem \ref{thm:weak_convergence_bounded_rates}]
  Fix $\varepsilon>0$. Let $\delta>0$ be such 
  \begin{equation}\label{eq:def_delta}
   \pr{\sum_{r\in\R_S^1}\norm{\xi_r}_\infty B_r+\sum_{i\in\sr}\sum_{j\in\fp}\norm{y_j+y_i}_\infty\sum_{\ell\in\iii}c_{\ell j}B_{i\ell}}\delta<\frac{\varepsilon}{3\theta},
  \end{equation}
  where $c_{\ell j}$ is as in \eqref{eq:limsup_limited_hypothesis}. Now consider a sequence of real numbers $t_0<t_1<t_2<\dots<t_q$ such that $t_{m+1}-t_{m}<\delta$ for any $0\leq m< q$, $t_0=0$ and $t_q=T$. For any $0\leq m< q$ and any species $S\in\Sp\setminus\In$ we have
  \begin{multline*}
   \sup_{t \in[t_{m},t_{m+1}]}\abs{\widehat{Z}_t^N(S)-\widehat{Z}_{t_{m}}^N(S)}\leq N^{-\alpha(S)}\sum_{r\in\R_S^1}\abs{\xi_r(S)}\abs{Y_r(N^{\beta_r}B_r t_{m+1})-Y_r(N^{\beta_r}B_r t_{m})}+\\
   +N^{-\alpha(S)}\sum_{i\in\sr}\sum_{j\in\fp}\abs{y_j(S)-y_i(S)}\sum_{\ell\in\iii}\abs{Y_{i\ell j}\pr{\pi^N_{\ell j}N^{\beta_{i\ell}}B_{i\ell}t_{m+1}}-Y_{i\ell j}\pr{\pi^N_{\ell j}N^{\beta_{i\ell}}B_{i\ell}t_{m}}}.
  \end{multline*}
  The latter is distributed as
  \begin{multline*}
   N^{-\alpha(S)}\sum_{r\in\R_S^1}\abs{\xi_r(S)}Y_r\pr{N^{\beta_r}B_r (t_{m+1}-t_{m})}+\\
   +N^{-\alpha(S)}\sum_{i\in\sr}\sum_{j\in\fp}\abs{y_j(S)-y_i(S)}\sum_{\ell\in\iii}Y_{i\ell j}\pr{\pi^N_{\ell j}N^{\beta_{i\ell}}B_{i\ell}(t_{m+1}-t_{m})},
  \end{multline*}
  which, due to $\alpha(S)>0$, \eqref{eq:limsup_limited_hypothesis}, \eqref{eq:single_scale}, the Law of Large Numbers for Poisson processes and \eqref{eq:def_delta}, is asymptotically bounded in probability by a constant strictly smaller than $\varepsilon/(3\theta)$. In particular,
  $$P\pr{\sup_{t \in[t_{m},t_{m+1}]}\abs{\widehat{Z}_t^N-\widehat{Z}_{t_{m}}^N}>\frac{\varepsilon}{3}}\xrightarrow[N\to\infty]{}0.$$
  Similarly, by \eqref{eq:X_t^N_hat}
  \begin{align*}
   \sup_{t \in[t_{m},t_{m+1}]}\abs{\widehat{X}_t^N(S)-\widehat{X}_{t_{m}}^N(S)}&\leq N^{-\alpha(S)}\sum_{r\in\R_S^1}\abs{\xi_r(S)}\abs{Y_r(N^{\beta_r}B_r t_{m+1})-Y_r(N^{\beta_r}B_r t_{m})}+\\
   &\hspace*{-125pt}\quad+N^{-\alpha(S)}\sum_{i\in\sr}\sum_{j\in\fp}\pr{y_j(S)\sum_{\ell\in\iii}\abs{\overline{M}_{i\ell j}^N(t_{m+1})-\overline{M}_{i\ell j}^N(t_{m})}\quad+y_i\sum_{\ell\in\iii} \abs{M_{i\ell j}^N(t_{m+1})-M_{i\ell j}^N(t_{m})}}\\
   &\hspace*{-125pt}\leq N^{-\alpha(S)}\sum_{r\in\R_S^1}\abs{\xi_r(S)}\abs{Y_r(N^{\beta_r}B_r t_{m+1})-Y_r(N^{\beta_r}B_r t_{m})}+\\
   &\hspace*{-125pt}\quad+N^{-\alpha(S)}\sum_{i\in\sr}\sum_{j\in\fp}(y_j(S)+y_i(S))\sum_{\ell\in\iii}\abs{Y_{i\ell j}\pr{\pi^N_{\ell j}N^{\beta_{i\ell}}B_{i\ell}t_{m+1}}-Y_{i\ell j}\pr{\pi^N_{\ell j}N^{\beta_{i\ell}}B_{i\ell}t_{m}}}+\\
   &\hspace*{-125pt}\quad+y_j(S)\sum_{\ell\in\iii} \pr{\abs{\overline{M}_{i\ell j}^N(t_{m+1})-M_{i\ell j}^N(t_{m+1})}+\abs{\overline{M}_{i\ell j}^N(t_m)-M_{i\ell j}^N(t_{m})}}.
  \end{align*}
  Again, due to $\alpha(S)>0$, \eqref{eq:limsup_limited_hypothesis}, \eqref{eq:single_scale}, the Law of Large Numbers for Poisson processes, \eqref{eq:def_delta} and Lemma \ref{lem:difference_M}, the latter is asymptotically bounded in probability by a constant strictly smaller than $\varepsilon/(3\theta)$. Specifically,
  $$P\pr{\sup_{t \in[t_{m},t_{m+1}]}\abs{\widehat{X}_t^N-\widehat{X}_{t_{m}}^N}>\frac{\varepsilon}{3}}\xrightarrow[N\to\infty]{}0.$$
  We have
  \begin{multline}
   P\pr{\sup_{t\in[0,T]}\abs{\widehat{X}^N_{t}-\widehat{Z}^N_{t}}>\varepsilon}\\
   \leq P\pr{\max_{0\leq m<q}\pr{\widehat{X}^N_{t_m}-\widehat{Z}^N_{t_m},\sup_{t \in[t_{m},t_{m+1}]}\abs{\widehat{Z}_t^N-\widehat{Z}_{t_{m}}^N},\sup_{t \in[t_{m},t_{m+1}]}\abs{\widehat{X}_t^N-\widehat{X}_{t_{m}}^N}}>\frac{\varepsilon}{3}}.
  \end{multline}
  Hence, the proof is concluded by Corollary \ref{cor:cor_to_bounded_rate_lemma}, which is direct consequence of Theorem \ref{thm:eq_true_for_bounded_rates}.
 \end{proof}
 
 \section*{Acknowledgements}
 
 We thank Elisenda Feliu for reading and commenting on early versions of this manuscript and an anonymous reviewer for providing valuable suggestions that lead to improvement of the results. 
 
 \bibliography{../bibliography}

\end{document}